\documentclass{article}

\usepackage{arxiv}

\usepackage[utf8]{inputenc} % allow utf-8 input
\usepackage[T1]{fontenc}    % use 8-bit T1 fonts
\usepackage{hyperref}       % hyperlinks
\usepackage{url}            % simple URL typesetting
\usepackage{booktabs}       % professional-quality tables
\usepackage{amsfonts}       % blackboard math symbols
\usepackage{nicefrac}       % compact symbols for 1/2, etc.
\usepackage{microtype}      % microtypography
\usepackage{xcolor}         % colors

%% Choose your variant of English; be consistent
\usepackage[american]{babel}
% \usepackage[british]{babel}

%% Some suggested packages, as needed:
\usepackage{amsmath}
\usepackage{amssymb}
\usepackage{amsthm}
% commutative diagrams
\usepackage{tikz}
\usetikzlibrary{arrows,cd,decorations.markings,shapes.geometric,shapes}
\usepackage{mathtools}
\usepackage{ebproof}

% commutative diagrams
\usetikzlibrary{arrows,cd,decorations.markings,shapes.geometric,shapes}

\usepackage{tikzit}
\usepackage[all,cmtip]{xy}

\usetikzlibrary{circuits.ee.IEC}

\usetikzlibrary{arrows.meta,positioning,shapes.geometric,fit}

% string diagram stuff
\pgfdeclarelayer{edgelayer}
\pgfdeclarelayer{nodelayer}
\pgfsetlayers{edgelayer,nodelayer,main}
\tikzstyle{none}=[]
\tikzset{baseline=(current  bounding  box.center)}
\tikzset{every picture/.append style={scale=0.5}}

\usepackage{tikzit}

\usetikzlibrary{circuits.ee.IEC}

\usetikzlibrary{calc,arrows.meta}

% New commands

\newcommand\C{\mathbf{C}}

\newcommand\Cat{\textup{Cat}}

\newcommand\cofib\rightarrowtail

\newcommand\id{\textup{id}}

\newcommand\Hom{\textup{Hom}}

\newcommand\mdel[1]{}

\usepackage{txfonts}
\usepackage{pxfonts}

%\renewcommand{\C}{\mathsf{C}}

	% almost surely

\makeatletter
\newcommand{\xdashrightarrow}[2][]{\ext@arrow 0359\rightarrowfill@@{#1}{#2}}
\newcommand*{\doublerightarrow}[2]{\mathrel{
  \settowidth{\@tempdima}{$\scriptstyle#1$}
  \settowidth{\@tempdimb}{$\scriptstyle#2$}
  \ifdim\@tempdimb>\@tempdima \@tempdima=\@tempdimb\fi
  \mathop{\vcenter{
    \offinterlineskip\ialign{\hbox to\dimexpr\@tempdima+1em{##}\cr
    \rightarrowfill\cr\noalign{\kern.5ex}
    \rightarrowfill\cr}}}\limits^{\!#1}_{\!#2}}}
\newcommand*{\triplerightarrow}[1]{\mathrel{
  \settowidth{\@tempdima}{$\scriptstyle#1$}
  \mathop{\vcenter{
    \offinterlineskip\ialign{\hbox to\dimexpr\@tempdima+1em{##}\cr
    \rightarrowfill\cr\noalign{\kern.5ex}
    \rightarrowfill\cr\noalign{\kern.5ex}
    \rightarrowfill\cr}}}\limits^{\!#1}}}
\makeatother

% categories

\makeatletter
\newcommand{\twoarrows}[3][0.2ex]{%
  % #1 = optional space correction, #2 = above, #3 = below
  \mathrel{\mathpalette\twoarrows@{{#1}{#2}{#3}}}%
}
\newcommand{\twoarrows@}[2]{\twoarrows@@#1#2}
\newcommand{\twoarrows@@}[4]{%
  % #1 = style, #2 = optional space correction, #3 = above, #4 = below
  \vcenter{\offinterlineskip\m@th
    \ialign{\hfil##\hfil\cr
      $#1#3$\cr
      \noalign{\vskip#2}
      $#1#4$\cr
    }%
  }%
}
\makeatother

%\newcommand{\Kl}[1]{\mathsf{Kl}(#1)}	% Kleisli cat

% macros
\newcommand{\beq}{\begin{equation}}
\newcommand{\eeq}{\end{equation}}

% structure morphisms

\newtheorem{theorem}{Theorem}
\newtheorem{corollary}{Corollary}
\newtheorem{definition}{Definition}
\newtheorem{lemma}{Lemma}
\newtheorem{example}{Example}

\usepackage[boxruled]{algorithm2e}
\usepackage{algorithmic}

\usepackage{listings}

\usepackage{multirow}
\usepackage{booktabs}
\usepackage{balance}
\usepackage{subfig}
\usepackage{graphicx}

\DeclareFontFamily{U}{dmjhira}{}
\DeclareFontShape{U}{dmjhira}{m}{n}{ <-> dmjhira }{}

\DeclareRobustCommand{\yo}{\text{\usefont{U}{dmjhira}{m}{n}\symbol{"48}}}

\usepackage{natbib} % has a nice set of citation styles and commands
  \bibliographystyle{plainnat}
   
\usepackage{mathtools} % amsmath with fixes and additions
\usepackage{booktabs} % commands to create good-looking tables
\usepackage{tikz} % nice language for creating drawings and diagrams

% ======= Minimal preamble macros (ensure these exist) =======
\usepackage{amsmath,amssymb,mathtools}
\newcommand{\E}{\mathcal{E}}                  % ambient topos
\newcommand{\Dist}{\mathsf{Dist}}             % distribution monad
\newcommand{\Kl}{\mathsf{Kl}}                 % Kleisli category

\newcommand{\xto}[1]{\xrightarrow{#1}}
\newcommand{\Obs}{\mathsf{Obs}}    
\newcommand{\Do}{\mathsf{Do}}

\newcommand{\Set}{\mathbf{Set}}

\newcommand{\Sh}{\mathbf{Sh}} % define \Sh as bold "Sh"

\usepackage{tikz-cd}
\usetikzlibrary{arrows.meta,positioning,fit}
\usetikzlibrary{shapes,arrows}
\usetikzlibrary{intersections}

\usetikzlibrary{shapes.geometric}  % for shapes like rectangles and cylinders
\usetikzlibrary{fit}               % for node grouping boxes (optional)

\usepackage{rotating}

% ==== math spacing helpers ====
\usepackage{mathtools}
\usepackage{amsthm, amssymb, bm}
\usepackage{tikz}
\usetikzlibrary{arrows.meta, positioning}

% ==== independence / conditioning ====
\newcommand{\indep}{\mathrel{\perp\!\!\!\perp}}
\newcommand{\nindep}{\not\!\!\indep}
\newcommand{\given}{\,|\,}              % simple, Overleaf-safe
% Optional fancier conditionals (if you prefer):
% \DeclarePairedDelimiterX{\cond}[2]{(}{)}{#1\,\delimsize\vert\,#2}
% usage: \cond{X}{Z} = (X | Z)

% ==== category/topos shorthand ====
\newcommand{\Topos}{\mathcal{E}}

\newcommand{\J}{\mathcal{J}}
              % J-sheafification
\newcommand{\OmegaE}{\Omega_{\Topos}} % subobject classifier
%\newcommand{\Do}{\mathsf{do}}         % symbolic 'do' functor/operation
%\newcommand{\CI}{\mathsf{CI}}         % CI presheaf
%\newcommand{\Obs}{\mathsf{Obs}}       % observation presheaf
       % intervention presheaf

% ==== probability shorthand ====
\newcommand{\Pbb}{\mathbb{P}}

% ==== theorem environments ====
\theoremstyle{plain}
\newtheorem{proposition}[theorem]{Proposition}
\theoremstyle{definition}
\theoremstyle{remark}
\newtheorem{remark}{Remark}

\usepackage{amsmath,amssymb,amsthm,mathtools}

\usepackage{tikz}
\usetikzlibrary{arrows.meta,positioning,calc}
\usepackage{xcolor}

% --- minimal macros (Overleaf-safe) ---
%\newcommand{\C}{\mathcal{C}}
\newcommand{\V}{\mathcal{V}}
\newcommand{\G}{\mathsf{G}}
%\newcommand{\CI}{\mathsf{CI}}      % internal CI predicate (optional)
%\newcommand{\indep}{\mathrel{\perp\!\!\!\perp}}
%\newcommand{\nindep}{\mathrel{\not\perp\!\!\!\perp}}
%\newcommand{\given}{\,\mid\,}      % avoid \middle|\; pairs

%\theoremstyle{definition}
%\newtheorem{definition}{Definition}
%\newtheorem{example}{Example}
%\theoremstyle{plain}
%\newtheorem{proposition}{Proposition}
%\newtheorem{theorem}{Theorem}
%\newtheorem{corollary}{Corollary}

% --- put in your preamble (once) ---
\usepackage{xcolor}
\usepackage{tikz}
\usetikzlibrary{arrows.meta,positioning,calc,decorations.pathreplacing,shapes.geometric}
\usepackage{pgfkeys}
\usepackage{pgfplots}
\usepackage{amsthm}
\usepackage{stmaryrd}   % for \Vdash
\usepackage{booktabs}   % pretty tables
\usepackage{tabularx}   % flexible-width tables
\usepackage{microtype}  % nicer text

% === Notation for sites/toposes/semantics ===
%\newcommand{\Cat}{\mathbf{C}}            % site category
\newcommand{\ShJ}[1]{\mathbf{Sh}_j(#1)}  % sheaves for a Lawvere–Tierney topology j
\newcommand{\aj}{a_j}                    % j-sheafification functor
\newcommand{\ForcesJ}{\Vdash_j}          % internal forcing: U \ForcesJ \varphi
             % plain forcing if needed

% === Stages/covers ===
\newcommand{\U}{U}                       % ambient stage
\newcommand{\Si}{S_i}                    % a chart in a cover
\newcommand{\Cover}{\mathcal{S}}         % a (J-)cover \{S_i \to U\}

% === Probability / intervention ===
\newcommand{\Vobs}{\mathsf{Vobs}}
\newcommand{\doop}{\mathsf{do}}
\newcommand{\doo}[1]{\mathsf{do}\!\left(#1\right)}
\newcommand{\Pint}{\mathsf{P}}           % generic probability object internally
\newcommand{\Gbar}[1]{\mathcal{G}_{\overline{#1}}}       % cut incoming edges to #1
\newcommand{\Gunder}[1]{\mathcal{G}_{\underline{#1}}}    % cut outgoing edges from #1
\newcommand{\Gbarunder}[2]{\mathcal{G}_{\overline{#1},\,\underline{#2}}}
%\newcommand{\GbarZgivenW}[2]{\mathcal{G}_{\overline{#1},\,\overline{#2(#3)}}} % optional if you use it

% === Conditional independence ===
%\newcommand{\CI}{\,\perp\!\!\!\perp\,}

% Usage:  X \CI Y \mid Z   or   X \nCI Y \mid Z

% === “$j$-stable CI at U” shorthand ===
   % $j$-stable CI at stage U
% e.g. \JStable{X}{Y}{Z}

% === Small text helper for rule labels ===
\newcommand{\rlabel}[1]{\textsf{\footnotesize[#1]}}

% --- Minimal macro patch for j-do calculus notation ---

% independence / conditional-independence
\providecommand{\CI}{\mathrel{\perp\!\!\!\perp}}

\providecommand{\given}{\,\mid\,} % safe: avoids \middle| issues

% stages / site / sheaves
\providecommand{\C}{\mathcal{C}}      % the site of stages
\providecommand{\U}{\mathcal{U}}      % an ambient stage
\providecommand{\Sh}{\mathbf{Sh}}     % sheaf topos on the site
\providecommand{\Ob}{\mathbf{Ob}}     % sheaf topos on the site

% Lawvere–Tierney topology, forcing, j-closure
\providecommand{\J}{j}                 % the LT-topology operator
       % semantic forcing symbol
 % j-closure/adjustment functor (notation)
 % alias if you prefer "closure" look

% do-operator (avoid using \do which is a TeX primitive)

% in preamble, needs amsmath

%\providecommand{\Do}[1]{\operatorname{do}\!\left(#1\right)}

%\providecommand{\Do}[1]{\operatorname{do}\!\left(#1\right)}

% charts / regimes (observational vs interventional)
\providecommand{\Vobs}{V_{\mathrm{obs}}}

\providecommand{\VdoC}{V_{\mathrm{do}(C)}}

% ---------- (Optional) light macros; safe to omit if you already have equivalents ----------
%\newcommand{\C}{\mathcal{C}}
%\newcommand{\J}{J}
%\newcommand{\U}{U}
%\newcommand{\V}{V}

%\newcommand{\CI}{\mathrm{CI}}

% handy graph-t\Proheoretic shorthands (optional)

\providecommand{\De}

% in your preamble
\newcommand{\ZofW}{Z(W)}
\newcommand{\barZofW}{\overline{\ZofW}}

% CI pretty-printers (optional)

 %e.g. \Gbar{X}
%\newcommand{\Gbar}[1]{G_{\overline{#1}}}          
 % e.g. \Gbarunder{X}{Z}
%\newcommand{\Gbarunder}[2]{G_{\overline{#1},\,\underline{#2}}}   
 % G_{\bar X, \overline{Z(W)}}
\newcommand{\GbarXZofW}{G_{\overline{X},\,\overline{Z(W)}}}

% --------- Site + sheaf-of-DAGs setup ----------
% We keep this in-text; no macro needed.

% ============= Body text begins =================

% (You can put this inside a section if you prefer.)

%% Provided macros
% \smaller: Because the class footnote size is essentially LaTeX's \small,
%           redefining \footnotesize, we provide the original \footnotesize
%           using this macro.
%           (Use only sparingly, e.g., in drawings, as it is quite small.)

%% Self-defined macros
%\newcommand{\swap}[3][-]{#3#1#2} % just an example

\title{Intuitionistic \(j\)-Do-Calculus in Topos Causal Models}

%\date{September 9, 1985}	% Here you can change the date presented in the paper title
%\date{} 					% Or removing it

\author{ Sridhar Mahadevan \\
	Adobe Research and University of Massachusetts, Amherst\\
	\texttt{smahadev@adobe.com, mahadeva@umass.edu}
}

% Uncomment to remove the date
%\date{}

% Uncomment to override  the `A preprint' in the header
%\renewcommand{\headeright}{Technical Report}
%\renewcommand{\undertitle}{Technical Report}
%\renewcommand{\shorttitle}{\textit{arXiv} Template}

%%% Add PDF metadata to help others organize their library
%%% Once the PDF is generated, you can check the metadata with
%%% $ pdfinfo template.pdf
\hypersetup{
hidelinks,
pdftitle={Intuitionistic j-Do-Calculus in Topos Causal Models},
pdfsubject={Causal inference, sheaf semantics, and intuitionistic logic},
pdfauthor={Sridhar Mahadevan},
pdfkeywords={causal inference, do-calculus, topos theory, sheaves, Kripke--Joyal semantics},
}

\begin{document}
\maketitle

\begin{abstract} 
We develop a sheaf-valued, regime-indexed semantics for Pearl's
do-calculus.  A causal model varies over a site \((\C,\J)\): each stage
carries an ordinary structural causal model, and restriction maps preserve
the variable signature, graph surgery, causal kernels, and the conditionals
used by the calculus.  A causal assertion is \(\J\)-stable at a stage when it
holds on a \(\J\)-cover in the Kripke--Joyal semantics of
\(\Sh_{\J}(\C)\).  Under explicit compatibility and existence assumptions,
Pearl's three rules may be applied chartwise, and the resulting equalities of
interventional kernels descend by the separatedness axiom for sheaves.  This
gives a local-to-global soundness theorem for regime-aware causal reasoning.

Interventions are modeled throughout as structural-mechanism or stochastic
kernel replacement; conditioning is defined only when the required
finite-discrete conditional has positive normalizing mass (or, more
generally, when a compatible disintegration is supplied).  We do not identify
intervention and conditioning with a universal pair of left and right Kan
extensions.  The categorical role of sheafification is instead reflective
localization and descent.  The terminal site recovers ordinary set-based
do-calculus, whereas the trivial Grothendieck topology on a general site
recovers presheaf semantics and need not be Boolean.  The paper is
conceptual: it assumes the regime-indexed causal models and covers are given;
data-driven construction of these objects is addressed in companion work.
\end{abstract}

% keywords can be removed
\keywords{Causal inference \and Topos Theory \and AI  \and Machine Learning}

\newpage 

\tableofcontents

\newpage 

\section{Introduction} \label{sec:intro}

In this paper, we build on the recently proposed Topos Causal Models (TCMs) \citep{sm:neurips_tcm} that formulates causal inference in the category of toposes.  TCMs are part of a recent series of papers on categorical models of causality, including those using symmetric monoidal categories \citep{fong:ms,fritz:jmlr,Cho_2019,string-diagram-surgery}, as well as simplicial sets and higher-order categories \citep{DBLP:journals/entropy/Mahadevan23}.  Any causal model based on graphs  \citep{pearl-book,hedge,spirtes:book}  can be translated into a categorical language. Operations on causal models, such as interventions, can be modeled as functors on the objects of the associated symmetric monoidal category or simplicial set. Categorical approaches to causality also extend to the {\em potential outcomes} counterfactual framework \citep{rubin-book}. 

TCM is also part of a line of previous research of ours
\citep{DBLP:journals/entropy/Mahadevan23,sridhar:higher-algebraic-k-theory} termed {\em universal causality} (UC). This notion derives from the concept defined in category theory \citep{riehl2017category}: a property is universal if it can be defined in terms of an {\em initial} or {\em final} object in a category of diagrams, or in terms of a {\em  representable functor} using the Yoneda Lemma. For example, a structural causal model (SCM) \citep{pearl-book} was originally defined as a unique (deterministic) function mapping a collection of exogenous variables into a collection of endogenous variables,  by ``collating" local functions that serve as independent causal mechanisms \citep{scm-lewis,icm}. In UC, an SCM is further decomposed into diagrams that reveal its universal properties, such as categorical product, coproduct, limits and colimits, equalizers and coequalizers etc. These latter properties can be shown formally to be initial or final objects in a category of diagrams \citep{riehl2017category}, or as representable functors through the Yoneda Lemma \citep{maclane:71}.

A \emph{site} $(\C,\J)$ is a small category $\C$ whose objects index regimes
(e.g., laboratories or experimental contexts) and whose Grothendieck topology
$\J$ specifies which families $\{u_i\!\to u\}$ cover a stage $u$.  The topos
$\Sh_{\J}(\C)$ contains sets varying compatibly over these regimes.  We say
that a formula is \emph{$\J$-stable at $u$} when it is forced at $u$ in the
Kripke--Joyal semantics; equivalently, it can be verified on a covering family
and is compatible under restriction.  The gluing of causal conclusions,
however, requires more than logic alone: the causal kernels, interventions,
graph surgeries, and conditionals must themselves restrict coherently.

Pearl's do-calculus \citep{pearl-book} is a complete axiom system for interventional
identification in acyclic causal models under classical (Boolean) logic.
We give Pearl's rules a regime-indexed semantics inside
$\Sh_{\J}(\C)$.  At each chart we use ordinary SCM intervention semantics:
an intervention replaces a structural mechanism, or equivalently the
corresponding stochastic kernel.  The categorical contribution is a descent
principle: chartwise Pearl equalities that agree on overlaps glue to an
equality at the covered stage.  Thus the topology changes where a premise may
be verified; it does not replace the causal content of Pearl's mutilated-graph
conditions.

We use a number of informal terms like  \emph{stages/contexts} and \emph{regimes/charts}. Formally, in a site $(\mathcal C,J)$ a chart over an ambient object $U$ is a morphism $f:V\!\to\!U$; a $J$-cover of $U$ is a family $\{f_i:V_i\!\to\!U\}$ whose generated sieve lies in $J(U)$.
When we say that a statement holds “on a $J$-cover of $U$,” we mean it holds
chartwise for some covering sieve of $U$.  All soundness statements below
explicitly assume that restriction preserves the causal operations appearing
in the statement.

\paragraph{Contributions.}
(i) We define $\J$-local d-separation and interventional equality over a
regime site.  (ii) We state the compatibility assumptions under which
chartwise SCM semantics forms a sheaf of causal models.  (iii) We prove a
local-to-global soundness theorem and obtain $\J$-local forms of Pearl's three
rules without changing their mutilated-graph premises.  (iv) We distinguish
two reductions: the trivial topology yields stagewise presheaf semantics,
while the terminal site yields ordinary set-based do-calculus.  We make no
completeness claim for the resulting regime-indexed calculus.

\paragraph{Scope and companion work.}
This paper develops the theory of \(j\)-stable causal inference and a \(j\)-do-calculus inside \(\Sh_J(\mathcal C)\).
Our focus is conceptual: we assume access to the theoretical objects (e.g., stages \(U\), \(J\)-covers of \(U\), and the
internal interventional distribution \(\mathsf P^{\!\mathrm{int}}\)) and study their logical consequences.
A companion paper provides the algorithmic side: how to estimate the required entities from data and how to
instantiate \(j\)-do with standard discovery procedures (e.g., score-based and constraint-based methods), building on
recent surveys \citep{zanga2023surveycausaldiscoverytheory}. There we show how to (i) form data-driven \(J\)-covers
(via regime/section constructions), (ii) compute chartwise CIs after graph surgeries, and (iii) glue them to certify
the premises of the \(j\)-do rules in practice. 

\begin{table}[]
    \centering
    \begin{tabular}{l l}
\toprule
Classical (Pearl) & In $\Topos=\Sh_{\J}(\C)$ (internal)\\
\midrule
$X \indep Y \given Z$ & $\Topos \vDash (X \indep Y \given Z)$ (local truth)\\
$\Pbb(Y \given \Do(X),Z)$ & Internal conditional $\Pbb_{\Topos}(Y \mid \Do(X),Z)$\\
Back-door admissibility & $J$-stable screening in internal logic\\
Rule 1 (Insert/Delete obs.) & J1 (Thm.~\ref{thm:J1})\\
Rule 2 (Action/Obs. exch.) & J2 (Thm.~\ref{thm:J2})\\
Rule 3 (Insert/Delete action) & J3 (Thm.~\ref{thm:J3})\\
\bottomrule
\end{tabular} \vskip 0.2in
\begin{tabular}{l l}
\toprule
Classical (Pearl) & In $\Topos=\Sh_{\J}(\C)$ (internal) \\
\midrule
$X \indep Y \given Z$ &
$\Topos \vDash (X \indep Y \given Z)$ (local truth on a $\J$-cover) \\
$\Pbb(Y \mid \Do(X),Z)$ &
Internal conditional in the topos $\Topos$ \\
Back-door admissibility &
$Z$ screens-off $X\to Y$ \emph{locally}, i.e.\ $\J$-stable separation \\
Rule 1 (insert/delete obs.) & J1 above \\
Rule 2 (action/obs. exchange) & J2 above \\
Rule 3 (insert/delete action) & J3 above \\
Identifiable effect &
Derivable via J1–J3 in $\Topos$ (hence stable on covers) \\
\bottomrule
\end{tabular}
\caption{A sheaf-valued semantics for Pearl's rules.  The $\J$-local
premise is checked on a cover; causal compatibility and sheaf descent then
yield the interventional equality at the covered stage.}
\label{tab:j-do-calc}
\end{table}

\begin{center}

\end{center}

\section{From Classical Do-Calculus to $j$-Do-Calculus}
\label{scm-review} 

This section previews the passage from ordinary do-calculus to causal models
indexed by a site.  The Grothendieck topology $J$ and its corresponding
Lawvere--Tierney operator $j$ are equivalent presentations of the same
sheaf condition on a presheaf topos; the latter is not a further
generalization of the former.  We work primarily with the external site
notation and write \(U\Vdash_J\varphi\) for forcing in
\(\Sh_J(\C)\).  Intervention retains its ordinary causal meaning as mechanism
or kernel replacement.  The new ingredient is that the premises of Pearl's
rules may be verified on a cover and their compatible conclusions then
descend.  Table~\ref{tab:glossary} summarizes the notation.

\subsection{Classical Do-Calculus} 

 We briefly review the notion of a structural causal model (SCM) \citep{pearl-book}, and the classical notion of do-calculus. Succinctly, any SCM $M$ defines a unique function from exogenous variables to endogenous variables, and do-calculus models interventions as ``sub-functions": 

\begin{definition}\citep{pearl-book}
\label{scm}
    A {\bf structural causal model} (SCM) is defined as the triple $\langle U, V, F \rangle$ where $V = \{V_1, \ldots, V_n \}$ is  a set of {\em endogenous} variables, $U$ is a set of {\em exogenous} variables, $F$ is a set $\{f_1, \ldots, f_n \}$ of ``local functions" $f_i: U \cup (V \setminus V_i) \rightarrow V_i$ whose composition induces  a unique function $F$ from  $U$ to $V$. 
\end{definition}
\begin{definition}\citep{pearl-book}
\label{intervention-scm}
   Let $M = \langle U, V, F \rangle $ be a causal model defined as an SCM, and $X$ be a subset of variables in $V$, and $x$ be a particular realization of $X$.  A {\bf submodel} $M_x = \langle U, V, F_x \rangle $ of $M$ is the causal model $ M_x =  \langle U, V, F_x \rangle$, where $F_x = \{f_i : V_i \notin X \} \cup \{X = x \}$. 
\end{definition}
\begin{definition}\cite{pearl-book}
    \label{do-action}
    Let $M$ be an SCM, $X$ be a set of variables in $V$, and $x$ be a particular realization of $X$. The {\bf effect} of an action $\mbox{do}(X=x)$ on $M$ is given by the submodel $M_x$. 
\end{definition}
\begin{definition}\citep{pearl-book}
    \label{potential-outcome}
    Let $Y$ be a variable in $V$, and let $X$ be a subset of $V$. The {\bf potential outcome} of $Y$ in response to an action $\mbox{do}(X=x)$, denoted $Y_x(u)$, is the solution of $Y$ for the set of equations $F_x$. 
\end{definition}
\citet{scm-lewis} propose an axiomatic theory of counterfactuals based on the above definitions, where the key definition of a counterfactual is given as:
\begin{definition}
    Let $Y$ be a variable in $V$ and let $X$ be a subset of $V$. The counterfactual sentence ``The value that $Y$ would have obtained had $X$ been set to $x$" is defined as the potential outcome $Y_x(u)$.
\end{definition}

Pearl's \emph{do-calculus} (Pearl, 2009) provides three algebraic rules
for manipulating interventional expressions of the form $P(Y\mid do(Z),X,W)$
based on conditional independence statements in a causal graph $G$. The notation used
is as follows: $G_{\bar X}$ means delete all arrows into $X$ (surgical intervention on $X$);  $G_{\underline Z}$: delete all arrows out of $Z$; $Z(W)$ denotes the subset of $Z$ that are not ancestors of any node in $W$ in $G_{\bar X}$; and finally, $G_{\bar X,\barZofW}$ denotes the intervention that deletes arrows into those $Z$-nodes that are not ancestors of $W$. 
\begin{enumerate}
  \item \textbf{Rule 1 (Insertion/Deletion of Observations).}
  If $(Y \perp Z \mid X,W)_{G_{\bar X}}$, then
  \[
    P(Y \mid \mathrm{do}(X), Z, W) \;=\; P(Y \mid \mathrm{do}(X), W).
  \]

  \item \textbf{Rule 2 (Action/Observation Exchange).}
  If $(Y \perp Z \mid X,W)_{G_{\bar X,\underline Z}}$, then
  \[
    P \ \!\bigl(Y \mid \mathrm{do}(X), \mathrm{do}(Z), W\bigr)
    \;=\;
    P \ \!\bigl(Y \mid \mathrm{do}(X), Z, W\bigr).
  \]

  \item \textbf{Rule 3 (Insertion/Deletion of Actions).}
  If $(Y \perp Z \mid X,W)_{G_{\bar{X},\barZofW}}$, then
  \[
    P \ \!\bigl(Y \mid \mathrm{do}(X), \mathrm{do}(Z), W\bigr)
    \;=\;
    P \ \!\bigl(Y \mid \mathrm{do}(X), W\bigr).
  \]
\end{enumerate}

These rules form a sound and complete system for deriving
identities between observational and interventional distributions
using only the graphical structure of~$G$. 

\subsection{$j$-Do-Calculus: A Birds-Eye View}

In moving from classical do-calculus to $j$-do-calculus, we transition from causal models over graphs to general categories, specifically toposes. The simplest way to understand this transition is to note that a category ${\cal C}$ whose objects are functions $f: A \rightarrow B$ over sets, and whose arrows are commutative diagrams between functions $f$ and $g$, defined as ${\cal C}(f, g)$ defines a topos \citep{goldblatt:topos}. This result, which was shown in detail for the case of SCMs in \citep{sm:neurips_tcm}, shows that causal inference in SCMs and graphs is intrinsically topos-theoretic. One can expand this simple result to cover more cases. For example, the category of graphs ${\cal G}$ can be defined to consist of two objects $v$  and $e$, and two non-identity arrows from $v$ to $e$. Each graph then is defined as a presheaf ${\bf Sets}^{{\cal G}^{op}}$, a functor that maps the objects $v$ and $e$ to the set of edges $E$ and vertices $V$ of the actual graph, and that maps the two non-identity arrows between $v$ and $e$ to the initial and terminal vertex of each edge. More generally, any (small) category ${\cal C}$ can be converted into a topos through the Yoneda embedding ${\cal C} \rightarrow {\bf Sets}^{{{\cal C}^{op}}}$, defined as $c \mapsto {\cal C}(-, c)$, and called the presheaf. The category of presheafs forms a topos \citep{maclane:sheaves}. We will also develop a new set of rules of  $j$-do-calculus, described  in Figure~\ref{fig:j-do-calc}, which will be explained at length in the remainder of the paper. Table~\ref{tab:glossary} provides a convenient glossary of symbols that can be handy in reading the remainder of the paper. 

\begin{table}[p]
\centering
\small
\begin{tabularx}{\linewidth}{@{} l l X @{}}
\toprule
\textbf{Symbol} & \textbf{Type} & \textbf{Meaning / Typical usage} \\
\midrule
$\Cat$ & category & Site of “regimes/contexts” (objects are stages; arrows are refinements). \\
$j$ & L--T topology & Lawvere–Tierney topology on $\mathbf{Sets}^{\Cat^{\mathrm{op}}}$; enforces which sieves are “covering.” \\
$\ShJ{\Cat}$ & topos & Sheaves on $\Cat$ for $j$ (the $j$-reflective subtopos). \\
$\aj$ & functor & $j$-sheafification (left exact reflector $\mathbf{Sets}^{\Cat^{\mathrm{op}}}\!\to\!\ShJ{\Cat}$). \\
$\U$ & object & A stage (or (object) $U\in\C$ ) in $\Cat$. \\
$\Cover=\{\Si\!\hookrightarrow\!\U\}$ & family & A $j$-cover of $\U$ (local charts that jointly “see” $\U$). \\
$\ForcesJ$ & relation & Internal forcing in $\ShJ{\Cat}$; $\U \ForcesJ \varphi$ reads “$\varphi$ holds $j$-stably at $\U$.” \\
$X \ \CI \ Y \mid Z$ & formula & Conditional independence assertion (CI). \\
$\doo{x}$ & term & Pearl’s do-operator (surgical intervention) internalized in $\ShJ{\Cat}$. \\
$\Gbar{X}$ & graph & Mutilated graph with incoming edges to $X$ cut (intervening on $X$). \\
$\Gunder{Z}$ & graph & Graph with outgoing edges from $Z$ cut (treating $Z$ as “measurement”). \\
$\Pint(\cdot)$ & object & Internal probability in $\ShJ{\Cat}$; e.g., $\Pint(y\mid \doo{x}, z, w)$. \\
\bottomrule
\end{tabularx}
\caption{\textbf{Glossary of symbols and notation.} Informal reading: $j$ specifies which families of local charts count as covers; $\U \ForcesJ \varphi$ means \emph{every} chart in a $j$-cover of $\U$ validates $\varphi$, hence $\varphi$ is forced globally at $\U$.}
\label{tab:glossary}
\end{table}

\begin{figure}[p]
    \centering
\begin{center}
\fbox{%
% --- j-do rules at a glance (body text) ---
\begin{minipage}{0.97\linewidth}
\textbf{\large \(j\)-do rules at a glance}

\medskip
All equalities are identities \emph{internal} to $\ShJ{\Cat}$ and read at stage $\U$ (i.e.\ under $\U \Vdash_j \cdots$).
Each premise means: there exists a \(j\)-cover $\mathcal S=\{S_i\!\to \U\}_i$ such that the stated CI holds on \emph{every} chart $S_i$ after the indicated graph surgery.

\medskip
\textbf{\rlabel{$j$-Rule 1: insert/delete observations}}
\[
\Bigl( Y \ \perp\  Z \ \bigm|\ X,W \ \text{in}\ \Gbar{X}\ \text{on a $j$-cover of $\U$} \Bigr)
\;\Longrightarrow\;
\Pint(y \mid \mathrm{do}(x), z, w) \;=\; \Pint(y \mid \mathrm{do}(x), w).
\]
\emph{Reading:} After cutting arrows into $X$, if every chart blocks $Z$ from $Y$ given $X,W$, then observing $Z$ is irrelevant under $\mathrm{do}(x)$.

\medskip
\textbf{\rlabel{$j$-Rule 2: action/observation exchange}}
\[
\Bigl( Y \ \perp\  Z \ \bigm|\ X,W \ \text{in}\ \Gbarunder{X}{Z}\ \text{on a $j$-cover of $\U$} \Bigr)
\;\Longrightarrow\;
\Pint(y \mid \mathrm{do}(x), \mathrm{do}(z), w) \;=\; \Pint(y \mid \mathrm{do}(x), z, w).
\]
\emph{Reading:} After cutting arrows into $X$ and \emph{out of} $Z$, intervening on $Z$ equals observing $Z$ under $\mathrm{do}(x)$, chartwise.

\medskip
\textbf{\rlabel{$j$-Rule 3: insert/delete actions}}
\[
\Bigl( Y \ \perp\  Z \ \bigm|\ X,W \ \text{in}\ \GbarXZofW\ \text{on a $j$-cover of $\U$} \Bigr)
\;\Longrightarrow\;
\Pint(y \mid \mathrm{do}(x), \mathrm{do}(z), w) \;=\; \Pint(y \mid \mathrm{do}(x), w).
\]
\emph{Reading:} After cutting arrows into $X$ and into the parents of $Z$ not in $W$ (i.e.\ $\overline{Z(W)}$), if every chart blocks $Z$ from $Y$ given $X,W$, then $\mathrm{do}(z)$ is irrelevant under $\mathrm{do}(x)$.

\medskip
\textbf{Conservativity.}
For the trivial topology the local premises reduce to stagewise
mutilated-graph \(d\)-separation.  Ordinary set-based Pearl semantics is
recovered on the terminal site \(\C=\mathbf 1\), for which
\(\Sh(\mathbf 1)\simeq\mathbf{Set}\).  A presheaf topos on a general site need
not be Boolean.

\textbf{Soundness (sketch).}
The CI premises hold locally on a \(j\)-cover; by locality and sheaf gluing, the equalities hold internally in $\ShJ(\Cat)$, hence at stage $\U$.
\end{minipage}}
\end{center}
    \caption{The Rules of $j$-do-calculus.}
    \label{fig:j-do-calc}
\end{figure}

\section{Causal Models Over a Topos of Sheaves}
\label{gdc-stoch}
The categorical framework underlying Topos Causal Models (TCMs) introduced
in \citep{sm:neurips_tcm} places causal data in a presheaf topos
\citep{maclane:sheaves}.  We impose a Grothendieck topology to select the
families on which compatible local data must glue.  On a presheaf topos this
external topology corresponds exactly to a Lawvere--Tierney topology on the
subobject classifier; we use the two equivalent presentations where each is
most convenient.
\subsection{Grothendieck Topology on Sites}
\begin{definition}
A {\bf sieve} for any object $x$ in any (small) category ${\cal C}$ is a subobject of its Yoneda embedding $\yo(x) = {\cal C}(-,x)$. If $S$ is a sieve on $x$, and $h: y \rightarrow x$ is any arrow in category ${\cal C}$, then 
    \[ h^*(S) = \{g \ | \ \mbox{cod}(g) = y, hg \in S \}\]
\end{definition}
\begin{definition}\citep{maclane1992sheaves}
A {\bf Grothendieck topology} on a category ${\cal C}$ is a function $J$ which assigns to each object $x$  of ${\cal C}$ a collection $J(x)$ of sieves on $x$ such that
\begin{enumerate}
    \item the maximum sieve $t_x = \{ f | \mbox{cod}(f) = x \}$ is in $J(x) $. 
    \item If $S \in J(x)$  then $h^*(S) \in J(y)$ for any arrow $h: y \rightarrow x$. 
    \item If $S \in J(x)$ and $R$ is any sieve on $x$, such that $h^*(R) \in J(y)$ for all $h: y \rightarrow x$, then $R \in J(C)$. 
\end{enumerate}
A $J$-cover is a covering family whose \emph{generated sieve} lies in $J(U)$.
\end{definition}
\begin{lemma}[Families vs.\ sieves]
A family $\{f_i\colon V_i\to U\}$ is $J$-covering iff its generated sieve
$\langle f_i\rangle$ lies in $J(U)$. Moreover, if $\{f_i\}$ refines $\{g_j\}$
(meaning each $f_i$ factors through some $g_j$), then
$\langle f_i\rangle \subseteq \langle g_j\rangle$.
\end{lemma}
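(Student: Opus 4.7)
The plan is to unpack the definition of the \emph{generated sieve} and then chase factorizations. Recall that for a family $\{f_i\colon V_i\to U\}_{i\in I}$, the generated sieve is
\[
\langle f_i\rangle \;=\; \{\, h\colon W\to U \mid h = f_i\circ k \text{ for some } i\in I \text{ and some } k\colon W\to V_i\,\},
\]
i.e., the set of arrows into $U$ that factor through some $f_i$. This is closed under precomposition, hence is a sieve, and is evidently the smallest sieve on $U$ containing every $f_i$.

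First I would handle the iff. By the preceding remark (``A $J$-cover is a covering family whose generated sieve lies in $J(U)$''), the statement ``$\{f_i\}$ is $J$-covering'' is defined to mean $\langle f_i\rangle \in J(U)$, so the equivalence is definitional. The only thing worth remarking on is well-definedness: $\langle f_i\rangle$ is a sieve (closed under precomposition), which follows because if $h = f_i\circ k$ and $\ell\colon W'\to W$ is any arrow, then $h\circ\ell = f_i\circ(k\circ\ell) \in \langle f_i\rangle$.

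For the refinement claim, I would argue by direct chasing. Suppose $\{f_i\colon V_i\to U\}_{i\in I}$ refines $\{g_j\colon W_j\to U\}_{j\in K}$, so for each $i$ there is some index $\sigma(i)\in K$ and an arrow $r_i\colon V_i\to W_{\sigma(i)}$ with $f_i = g_{\sigma(i)}\circ r_i$. Take any $h\in\langle f_i\rangle$; then $h = f_i\circ k$ for some $i$ and $k\colon W\to V_i$. Substituting,
\[
h \;=\; g_{\sigma(i)}\circ r_i\circ k,
\]
which exhibits $h$ as factoring through $g_{\sigma(i)}$, hence $h\in\langle g_j\rangle$. Therefore $\langle f_i\rangle\subseteq\langle g_j\rangle$.

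There is no real obstacle here: both assertions reduce to unwinding the definition of the sieve generated by a family, and the refinement statement is a one-line composition of two factorizations. The only subtlety worth flagging is the need to invoke the associativity of composition implicitly (so that $(g_{\sigma(i)}\circ r_i)\circ k = g_{\sigma(i)}\circ(r_i\circ k)$), which is immediate in any category.
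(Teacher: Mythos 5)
Your proposal is correct: the first claim is indeed definitional given the paper's convention that a $J$-cover is a family whose generated sieve lies in $J(U)$, and the refinement claim follows by composing the two factorizations exactly as you do. The paper states this lemma without any proof (treating it as an immediate unpacking of the definition of the generated sieve), so your argument supplies precisely the routine verification the paper leaves implicit, with no divergence in approach.
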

We can now define categories with a given Grothendieck topology as {\em sites}. 
\begin{definition}
    A {\bf site} is defined as a pair $({\cal C}, J)$ consisting of a small category ${\cal C}$ and a Grothendieck topology $J$ on ${\cal C}$. 
\end{definition}
\begin{definition}
\label{subobj-defn}
    The {\bf subobject classifier} $\Omega$ is defined on any topos ${\bf Sets}^{C^{op}}$ as subobjects of the representable functors: 
    \[ \Omega(x) = \{S | S \ \ \mbox{is a subobject of} \ \ {\cal C}(-, x) \} \]
    and the morphism ${\bf true}: 1 \rightarrow \Omega$ is ${\bf true}(x) = x$ for any representable $x$. 
\end{definition}

\subsection{Lawvere-Tierney Topologies on a Topos}
\label{lawvere}

Grothendieck topologies on a small category and Lawvere--Tierney topologies
on its presheaf topos are two equivalent presentations of sheaf semantics
\citep{maclane:sheaves}.  We use the external Grothendieck presentation for
covers and the internal Lawvere--Tierney presentation when discussing the
associated modality.

\begin{definition}[Lawvere--Tierney causal topology]
Let $\E$ be an elementary topos with subobject classifier $\Omega$.
A Lawvere--Tierney topology is an arrow $j:\Omega\to\Omega$ satisfying
\[
j(\top)=\top,\qquad
j(p\wedge q)=j(p)\wedge j(q),\qquad
j(j(p))=j(p),
\]
where $\top={\bf true}$.  A subobject with characteristic map $\chi_P$ is
$j$-closed when $j\circ\chi_P=\chi_P$.  Compatibility of a causal model or
probability monad with the corresponding sheaf subtopos is additional
structure; it is not part of the definition of a Lawvere--Tierney topology.
\end{definition}

The precise relation is the following correspondence.

\begin{theorem}\citep{maclane:sheaves}
     If ${\cal C}$ is a small category, the Grothendieck topologies J on C correspond exactly to Lawvere- Tierney topologies on the presheaf topos ${\bf Sets}^{{\cal C}^{op}}$.
\end{theorem}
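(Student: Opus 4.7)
The plan is to construct explicit inverse bijections between Grothendieck topologies $J$ on $\C$ and Lawvere--Tierney topologies $j$ on $\PSh(\C) = \mathbf{Sets}^{\C^{\op}}$, using the fact (Definition~\ref{subobj-defn}) that $\Omega(c)$ is the set of sieves on $c$ and that, by Yoneda, a natural transformation $j:\Omega \to \Omega$ is determined by a family of functions $j_c : \Omega(c) \to \Omega(c)$ compatible with restriction along arrows $h:d\to c$.

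First I would define the forward map $J \mapsto j^J$ by the \emph{closure} formula
\[
  j^J_c(S) \;=\; \{\, h:d\to c \ \mid\ h^*(S) \in J(d)\,\},
\]
for each sieve $S$ on $c$. The stability axiom for $J$ (clause 2 in the definition of a Grothendieck topology) is exactly what makes $j^J$ a natural transformation, because for $k:e\to d$ one has $k^*(h^*S) = (hk)^*S$. Then I would check the three Lawvere--Tierney axioms: $j^J(\top) = \top$ follows from $t_c\in J(c)$ and the fact that pullbacks of the maximal sieve are maximal; $j^J \circ j^J = j^J$ is precisely the transitivity axiom (clause 3), since the inner closure says ``$h^*S$ becomes covering after another cover,'' which by transitivity means $h^*S \in J(d)$; and preservation of binary meets $j^J(S\wedge T) = j^J(S)\wedge j^J(T)$ follows from the fact that pullbacks distribute over intersections of sieves and that if $h^*S$ and $h^*T$ are both in $J(d)$ then so is $h^*S\cap h^*T$, which in turn uses stability plus transitivity.

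In the reverse direction I would define $J^j(c) = \{\,S \in \Omega(c) \mid j_c(S) = t_c\,\}$, i.e.\ the sieves sent by $j$ to ``true.'' Then $t_c \in J^j(c)$ because $j(\top) = \top$; stability follows from naturality of $j$ together with the fact that $h^*(t_c) = t_d$; and transitivity follows from idempotence $j\circ j = j$ combined with a short chase showing that if $j(R)$ agrees with $j(S)$ on a cover and $j(S) = t_c$, then $j(R) = t_c$. I would then verify that the two assignments are mutually inverse: starting from $J$, a sieve $S$ satisfies $j^J_c(S) = t_c$ iff every $h:d\to c$ pulls $S$ back into $J(d)$, which by the maximality of $t_c$ and stability/transitivity is equivalent to $S\in J(c)$; starting from $j$, the closure $j^{J^j}$ recovers $j$ by unwinding the definitions and using naturality to identify $j_c(S)(h) = \top$ with $(h^*S)\in J^j(d)$.

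The main obstacle is keeping the bookkeeping straight between \emph{sieves}, their \emph{pullbacks}, and the \emph{internal} meet/closure operations on $\Omega$, since each Lawvere--Tierney axiom corresponds to a Grothendieck axiom only after one expresses the internal operations concretely via sieves. In particular, proving that $j^J$ preserves binary meets requires the non-obvious observation that the intersection of two $J$-covering sieves is again $J$-covering, which is \emph{not} listed as an axiom but is derivable from stability and transitivity; and proving idempotence $\Leftrightarrow$ transitivity requires care because the ``saturation'' clause of a Grothendieck topology quantifies over \emph{all} arrows $h:d\to c$ with $h^*R \in J(d)$, and one must match this precisely with the pointwise definition of $j\circ j$. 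Once these correspondences are in place, the bijectivity is a routine unwinding, so I would allocate most of the writing budget to the meet-preservation and idempotence/transitivity steps.
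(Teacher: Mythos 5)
The paper does not prove this theorem; it only cites it to Mac Lane--Moerdijk, and your outline is precisely the standard argument from that reference (closure operator $j^J_c(S)=\{h\mid h^*S\in J(\operatorname{dom}h)\}$ one way, $J^j(c)=\{S\mid j_c(S)=t_c\}$ the other, with the three Lawvere--Tierney axioms matching maximality, transitivity, and the derived closure of covers under intersection). The outline is correct, including the genuinely non-obvious points you flag (meet-preservation via stability of intersections of covers, and idempotence via transitivity plus naturality); the only nitpick is that stability is what makes $j^J_c(S)$ a \emph{sieve}, while naturality of $j^J$ follows from the identity $k^*(h^*S)=(hk)^*S$ alone.
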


Figure~\ref{tierneytop} gives a diagrammatic illustration of the relationship between the two approaches. 

% Preamble:
% \usepackage{tikz}
% \usepackage{tikz-cd}
% \usetikzlibrary{arrows.meta,positioning}
% (Optional macros if you like)
% \newcommand{\E}{\mathcal{E}}

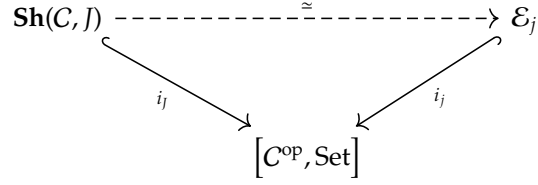
\begin{figure}[t]
  \centering
  \begin{tikzcd}[column sep=huge, row sep=large]
    \mathbf{Sh}(\mathcal{C},J)
      \arrow[dr, hook, "i_J"']
      \arrow[rr, dashed, "\simeq"]
      &&
    \mathcal{E}_j
      \arrow[dl, hook, "i_j"] \\[4pt]
      &
      {\big[\mathcal{C}^{\mathrm{op}},\mathrm{Set}\big]}
  \end{tikzcd}
  \caption{External Grothendieck topology \(J\) and internal Lawvere–Tierney topology \(j\) both induce subtopoi embedded in the presheaf topos \([\mathcal{C}^{\mathrm{op}},\mathrm{Set}]\).}
  \label{tierneytop}
\end{figure}

\subsection{Kripke-Joyal Semantics for Sheaves}

Every topos has an internal intuitionistic logic that derives from the fact that the subobject classifier $\Omega$ yields a poset of subobjects on which the semantics of a formal Mitchell-B\'enabou language describing objects and arrows in the category can be defined. This formal language is associated with a Kripke-Joyal semantics, which we will specialize  to a topos equipped with a Grothendieck topology, that is a site. This specialized structure captures how causal inference is woven in the fabric of the internal logic of a causal topos. Define ${\tt Sh}({\cal C, J})$ be a topos of sheaves with a specified Grothendieck topology ${\cal J}$, defined by the following diagram, where $\yo$ is the Yoneda embedding, and ${\cal P}$ is a presheaf: 
\[ {\cal C} \xrightarrow[]{\yo} {\cal P(C)} \xrightarrow[]{a} \Sh({\cal C,J}) \cong {\cal C}\]
where we know that the Yoneda embedding $\yo$ creates a full and faithful copy of the original category ${\cal C}$. Let us define the semantics for a sheaf element $\alpha \in X(C)$, where $ X(C) = \Sh({\cal C}, J)({\cal C}(-, C), X))$.  We will describe the Kripke-Joyal semantics in more detail later in the paper, but for now, a concise summary for the topos category of sheaves is as follows: 

\begin{enumerate}
    \item $C \Vdash \phi(\alpha) \wedge \psi(\alpha)$ if it holds that $C \Vdash \phi(\alpha)$ and $C \Vdash \psi(\alpha)$. 
    \item $C \Vdash \phi(\alpha) \vee \psi(\alpha)$ if there is a covering $\{ f_i: C_i \rightarrow C \}$ such that for each $i$, either $C_i \Vdash \phi(\alpha)$ or $C_i \Vdash \psi(\alpha)$. 
    \item $C \Vdash \phi(\alpha) \rightarrow \psi(\alpha)$ if for all $f: D \rightarrow C$, and $D \Vdash \phi(\alpha \circ f)$, it holds that $D \Vdash \psi(\alpha \circ f)$. 
    \item $C \Vdash \neg \phi(\alpha)$ holds if for all arrows $f: D \rightarrow C$ in ${\cal C}$, if $D \Vdash \phi(\alpha \circ f)$ holds, then the empty family is a cover of $D$. 
    \item $C \Vdash \exists y \ \phi(x, y)$ holds if there is a covering $\{ f_i: C_i \rightarrow C \}$ and elements $\beta_i \in Y(C_i)$ such that $C_i \Vdash \phi(\alpha \circ f_i, \beta_i)$ holds for each $i$. 
    \item Finally, for universal quantification, $C \Vdash \forall y \ \phi(x, y)$ holds if for all arrows $f: D \rightarrow C$ in the category ${\cal C}$, and all $\beta \in Y(D)$, it holds that $D \Vdash \phi(\alpha \circ f, \beta)$. 
\end{enumerate}

\subsection{$j$-Do-Calculus on Sites}

To transition from classical do-calculus to $j$-do-calculus, we need to provide a ``bridge" that maps from classical notions, like d-separation, to intuitionistic notions in $j$-do-calculus. We begin this transition by introducing some terms that will be used in the remainder of the paper. 

\paragraph{Stages and generalized elements.}
Let $(\C,J)$ be a site and $\Sh_J(\C)$ its sheaf topos. For any object $A\in\Ob(\C)$,
a \emph{generalized element of $A$ at stage $V$} is a morphism
$f:V\to A$ (equivalently, an element of the presheaf $yA(V)=\Hom_{\C}(V,A)$).
The special case $\mathbf{1}\to A$ (where $\mathbf{1}$ is terminal) is a
\emph{global element}. In what follows we fix an \emph{ambient context} (or
\emph{ambient object}) $U\in\Ob(\C)$ and call any arrow $f:V\to U$ a \emph{local
stage over $U$}.

\paragraph{Charts (``regimes'') and $J$-covers.}
A \emph{chart} (our earlier ``regime'') is precisely a local stage
$f:V\to U$. A family of charts $\{f_i:V_i\to U\}_{i\in I}$ generates the sieve
\[
\langle f_i\rangle \;=\; \{\, h:W\to U \mid \exists i,\ \exists g:W\to V_i\ \text{with}\ h=f_i\circ g \,\}.
\]
We call $\{f_i\}$ a \emph{$J$-cover of $U$} iff $\langle f_i\rangle\in J(U)$
(i.e.\ the generated sieve is $J$-covering).

\paragraph{Reading formulas ``at stage $U$''.}
Let $\varphi$ be a formula in the internal language. Write
$U \Vdash_J \varphi$ to mean that $\varphi$ is (internally) true at the
ambient object $U$ in $\Sh_J(\C)$. In Kripke–Joyal semantics this is equivalent to
the existence of a $J$-covering sieve $S\subseteq\Hom_{\C}(-,U)$ such that each
local stage $f:V\to U$ in $S$ \emph{forces} $\varphi$ after pullback:
\[
U \Vdash_J \varphi
\quad\Longleftrightarrow\quad
\exists\,S\in J(U)\ \text{with}\ \forall f:V\to U\ \text{in }S,\quad
V \Vdash_J \varphi|_f.
\]
Informally: \emph{$\varphi$ holds chartwise on a $J$-cover of $U$}.

\paragraph{Grothendieck topology and $J$-covers.}
A \emph{sieve} $S$ on $U$ is \emph{$J$-covering} iff $S\in J(U)$.
We will say that a family of charts $\{f_i\colon V_i\to U\}$ is a \emph{$J$-cover of $U$}
iff the sieve it generates is $J$-covering:
\[
\{f_i\}\ \text{is a $J$-cover of $U$}\quad\Longleftrightarrow\quad \langle f_i\rangle\in J(U).
\]
Thus our earlier “$J$-cover” phrase always refers to a \emph{covering family whose generated sieve is $J$-covering}.

\paragraph{Lawvere–Tierney topology $j$ and $J$.}
The Grothendieck topology $J$ on $\C$ corresponds to a Lawvere–Tierney topology
$j\colon\Omega\to\Omega$ on the presheaf topos $\widehat{\C}$; the sheaf topos
$\Sh_J(\C)$ is the $j$-sheaf subtopos of $\widehat{\C}$. We freely pass between
$J$ (external/topological) and $j$ (internal/logical) viewpoints; “$j$-closure”
of a subobject corresponds to saturation under $J$-covering sieves.

\noindent\textbf{Slogan.}
A conditional independence \(\varphi \equiv (X \ \CI \ Y \mid Z)\) is \emph{\(j\)-stable at a stage \(\U\)} iff the sieve of all refinements \(u: V \to \U\) that validate \(\varphi\) is a \(J\)-cover of \(\U\).

\paragraph{Site of causal contexts.}
Fix a finite variable set \(\mathcal V\) and a DAG \(G\) on \(\mathcal V\).
A \emph{stage} is a pair \(\U = (G,\sigma)\), where \(\sigma\) is a status profile that records which nodes are (i) conditioned/observed, (ii) intervened upon (incoming arrows cut), etc.
A \emph{morphism} \(u : (G',\sigma') \to (G,\sigma)\) is a refinement that is identity on node names and \emph{monotone in status} (a refinement may condition or intervene on more variables, but never less).
Stages and refinements form a category \(\Cat\).

\paragraph{Open paths and satisfaction.}
For disjoint \(X,Y,Z \subseteq \mathcal V\) and a stage \(\U=(G,\sigma)\), let \(\textsf{OpenPaths}_\U(X,Y \mid Z)\) be the set of \(G\)-paths from \(X\) to \(Y\) that are \(d\)-\emph{open} under the usual collider/non-collider rules, evaluated after applying the surgeries in \(\sigma\) (e.g., \(\mathrm{do}(\cdot)\)).
Write
\[
\U \models (X \ \CI \ Y \mid Z)
\quad\Longleftrightarrow\quad
\textsf{OpenPaths}_\U(X,Y \mid Z)=\varnothing.
\]

\paragraph{The sieve selected by a CI formula.}
Given \(\varphi\equiv(X \ \CI \ Y \mid Z)\) and \(\U\), define
\[
\mathsf S_\varphi(\U)
\;:=\;
\{\, u: V \to \U \text{ in } \Cat \;\mid\; V \models \varphi \,\}.
\]

\noindent\textbf{Lemma (sieve).}
\(\mathsf S_\varphi(\U)\) is a sieve on \(\U\) (i.e., closed under precomposition).

\emph{Proof sketch.}
If \(u:V\to \U\) validates \(\varphi\) and \(w:W\to V\) is any arrow, then \(W\) refines \(V\) monotonically in status, which can only block additional paths; hence \(W\models\varphi\) and \(u\!\circ\! w\in \mathsf S_\varphi(\U)\).
\(\square\)

\paragraph{Grothendieck topologies from admissible charts.}
Fix for each \(\U\) a family \(\{\rho_k: V_k \to \U\}_{k\in K}\) of \emph{admissible local views} (charts) used to test CI at \(\U\) (e.g., purely observational; or a mix including certain \(\mathrm{do}(\cdot)\)-surgeries).
Let \(J\) be the Grothendieck topology \emph{generated} by these bases: a sieve \(S\) covers \(\U\) iff it contains a jointly epimorphic family refining \(\{\rho_k\}\).
Two canonical choices:
\begin{itemize}\itemsep3pt
  \item \(J_{\mathrm{id}}\) (classical): basis \(=\{\mathrm{id}_\U\}\).
  \item \(J_{\mathrm{mix}}\): basis includes observational charts and specific interventional charts.
\end{itemize}

\paragraph{Forcing semantics ( \(j\)-stability ).}
Write
\[
\U \Vdash_J (X \ \CI \ Y \mid Z)
\quad:\Longleftrightarrow\quad
\mathsf S_\varphi(\U) \text{ is a \(J\)-cover of } \U.
\]

\noindent\textbf{Proposition (conservativity).}
With \(J_{\mathrm{id}}\),
\[
\U \Vdash_{J_{\mathrm{id}}} (X \CI Y \mid Z)
\quad\Longleftrightarrow\quad
\U \models (X \CI Y \mid Z).
\]
\emph{Reason.}
A sieve covers \(\U\) in \(J_{\mathrm{id}}\) iff it contains \(\mathrm{id}_\U\). Thus \(\mathsf S_\varphi(\U)\) covers iff \(\mathrm{id}_\U \in \mathsf S_\varphi(\U)\), i.e., \(\U\models\varphi\).
\(\square\)

\noindent\textbf{Proposition (soundness of \(j\)-stability).}
If \(\{\rho_k:V_k\to \U\}\) generates \(J\) at \(\U\) and \(V_k \models (X \ \CI \ Y \mid Z)\) for all \(k\), then \(\U \Vdash_J (X \ \CI \ Y \mid Z)\).

\emph{Reason.}
Each generator \(\rho_k\) lies in \(\mathsf S_\varphi(\U)\); hence the sieve they generate covers, and by upward closure of covering sieves, so does \(\mathsf S_\varphi(\U)\).
\(\square\)

\paragraph{Worked mapping: earthquake example.}
Let \(\U=(G,\sigma)\) with \(B\to A \leftarrow E\) and \(A\to C\).
Take \(J_{\mathrm{mix}}\) generated by two charts: an \emph{observational} chart \(\rho_{\mathrm{obs}}\) (no conditioning on colliders unless stated) and an \emph{interventional} chart \(\rho_{\mathrm{do}A}\) that cuts the incoming edges into \(A\).
Then:
\begin{align*}
&\text{(i) } \U \Vdash_{J_{\mathrm{mix}}} (B \ \CI \ 
 E)
&&\text{(collider closed in obs; parents cut under } \mathrm{do}(A)).\\
&\text{(ii) } \U \Vdash_{J_{\mathrm{mix}}} (B \ \CI \ C \mid A)
&&\text{(chain blocked by }A\text{ in both charts).}\\
&\text{(iii) } \U \not\Vdash_{J_{\mathrm{mix}}} (B \ \CI \ E \mid A)
&&\text{(conditioning on the collider opens the path in the obs chart).}
\end{align*}

\paragraph{Takeaway.}
A CI formula \(\varphi\) determines a sieve \(\mathsf S_\varphi\); a Grothendieck topology \(J\) encodes which local views count as \emph{covers}. Classical CI is truth at \(\U\); \(j\)-stability is truth on a \(J\)-cover of \(\U\)---i.e., gluable from admissible local regimes.
% --- end: d-sep -> Grothendieck-topology bridge ------------------------------

\paragraph{CI as an internal predicate.}
Fix a graph object $G$ (DAG with surgery) represented in $\C$.
For variables $X,Y,Z$ (as objects/indices in $G$), let
$\CI_G(X;Y\mid Z)$ denote the internal formula “$X\perp Y\mid Z$ in $G$”.
Our usage
\[
\text{``$Y\ \CI\ Z\mid X,W\ \text{in}\ \bar{G}^{(\cdot)}$ on a $J$-cover of $U$''}
\]
means precisely:
there exists a $J$-covering sieve $S\subseteq\Hom(-,U)$ such that for every
$f\colon V\to U$ in $S$, the (pulled-back, surgically modified) graph
satisfies $V \Vdash_J \CI_G(X;Y\mid Z)$.
By the clause above, this suffices to conclude $U\Vdash_J\CI_G(X;Y\mid Z)$.

% --- Preamble additions (if not already loaded) ---
% \usepackage{tikz}
% \usepackage{tikz-cd}
% \usetikzlibrary{arrows.meta,positioning}
% \usepackage{amsmath,amssymb}

\section{Illustrating $j$-stability with Simple Causal DAG models}

Let us begin to build intuition about $j$-stability using some simple examples first. The details will be explained later in the paper, but we want to convey the ideas at a high level first. The goal is to begin to concretize the above abstractions, and the ones to follow. The reader is alerted to the fact that not all terms used here have been properly defined yet, but before getting into precise definitions, the examples should help set the stage for the more precise terminology to follow. 

\subsection{Earthquake DAG} 

% ===================== TABLE: Earthquake–Burglary example =====================
\begin{table}[t]
\centering
\small
\caption{$J$-stable CI facts on the classic Earthquake–Burglary DAG ($B\!\to\!A\!\leftarrow\!E$, $A\!\to\!C$). We use two charts: $S_{\text{obs}}$ (observational) and $S_{\text{do}A}$ with $\mathrm{do}(A)$ (incoming edges into $A$ cut).}
\begin{tabular}{l l p{0.52\linewidth} l}
\toprule
\textbf{Claim} & \textbf{Charts used (cover)} & \textbf{Blocking rationale (per chart)} & \textbf{Verdict} \\
\midrule
$B\!\perp\!E$
& $\{S_{\text{obs}},\,S_{\text{do}A}\}$
& In $S_{\text{obs}}$, collider $A$ blocks $B\!\leadsto\!E$; in $S_{\text{do}A}$, incoming edges to $A$ are cut, so $B$ and $E$ remain separated.
& $J$-stable \\
\addlinespace[2pt]
$B\!\perp\!C\mid A$
& $\{S_{\text{obs}},\,S_{\text{do}A}\}$
& In $S_{\text{obs}}$, the chain $B\!\to\!A\!\to\!C$ is blocked by conditioning on the mediator $A$; in $S_{\text{do}A}$, $C$ depends only on $A$ (parents of $A$ cut), so $B$ adds no info given $A$.
& $J$-stable \\
\addlinespace[2pt]
$B\!\perp\!E\mid A$
& Any cover containing $S_{\text{obs}}$
& Conditioning on the collider $A$ opens $B\!\to\!A\!\leftarrow\!E$ in $S_{\text{obs}}$; thus the CI fails on that chart.
& \textbf{Not} $J$-stable \\
\bottomrule
\end{tabular}
\label{tab:quake-jstable}
\end{table}

We begin with well-known Earthquake example from \citep{pearl:bnets-book}. Recall that in this case, the causal DAG had the following variables: $B$ = burglary, $E$ = earthquake, $A$ = alarm, $C$ = neighbor calls.  The DAG is then described by the following structure: 
\[
B \to A \leftarrow E,\qquad A \to C.
\]

\paragraph{Classical d-separation facts.}
\begin{enumerate}
  \item $B \perp\!\!\!\perp E$ \ (collider at $A$ is \emph{unconditioned}, hence blocks).
  \item $B \not\!\perp\!\!\!\perp C$ but $B \perp\!\!\!\perp C \mid A$ \ (the chain $B\!\to\!A\!\to\!C$ is blocked by conditioning on the non-collider $A$).
  \item $B \not\!\perp\!\!\!\perp E \mid A$ \ (conditioning on the collider $A$ opens the backdoor).
\end{enumerate}

\paragraph{Stage for the earthquake DAG.}
Fix the DAG \(G\) on variables \(V=\{B,E,A,C\}\) with arrows
\(B\to A\leftarrow E\) and \(A\to C\).
A \emph{stage} \(U\) for this example is a context that packages together:
(i) the fixed graph \(G\), and
(ii) a finite menu of \emph{local regimes} (also called \emph{charts})
that we regard as legitimate descriptions of \(U\).
Each chart \(S\) is specified by a pair \((I_S,\mathsf{Cond}_S)\) where
\(I_S\subseteq V\) is a set of intervention targets (incoming arrows into \(I_S\)
are cut—\emph{surgical} semantics), and \(\mathsf{Cond}_S\subseteq \mathcal{P}(V)\)
records which conditioning sets are admissible when we evaluate
conditional independences on that chart.
A family of charts \(\{S_i\to U\}_i\) is a \(J\)\emph{-cover} of \(U\) if, by design,
these charts jointly describe all local ways in which \(U\) may be investigated
(e.g., purely observational vs.\ a specific intervention).

Given a CI formula \(\varphi\) (e.g.\ \(X \perp\!\!\!\perp Y \mid Z\)), we say that
\(\varphi\) is \emph{\(j\)-stable at \(U\)} and write \(U \Vdash j(\varphi)\) iff there exists a
\(J\)-cover \(\{S_i\to U\}_i\) such that \(\varphi\) holds on every chart \(S_i\),
where \(\varphi\) is evaluated by d-separation on the intervened graph \(G^{I_{S_i}}\)
(with the requested conditioning sets required to lie in \(\mathsf{Cond}_{S_i}\)).

\paragraph{\(j\)-stable reading for the earthquake DAG (concrete cover).}
We take the following two charts as a \(J\)-cover of \(U\):
\begin{itemize}
  \item \(S_{\mathrm{obs}}\): observational chart with \(I_{\mathrm{obs}}=\varnothing\), and
  \(\mathsf{Cond}_{\mathrm{obs}}\) containing the sets we explicitly condition on
  in the claims below (notably \(\{A\}\) and \(\varnothing\)).
  \item \(S_{\mathrm{int}}\): interventional chart with \(I_{\mathrm{int}}=\{A\}\)
  (cut incoming arrows into \(A\)); \(\mathsf{Cond}_{\mathrm{int}}\) likewise
  contains the conditioning sets we use below.
\end{itemize}
Then:
\begin{description}
  \item[(J–1)] \(B \perp\!\!\!\perp^{\,j}_{U} E\).
  On \(S_{\mathrm{obs}}\), the collider \(A\) is unconditioned \(\Rightarrow B \perp\!\!\!\perp E\).
  On \(S_{\mathrm{int}}\), parents of \(A\) are cut \(\Rightarrow B \perp\!\!\!\perp E\).
  Hence \(U \Vdash j\big(B \perp\!\!\!\perp E\big)\).
  \item[(J–2)] \(B \perp\!\!\!\perp^{\,j}_{U} C \mid A\).
  On \(S_{\mathrm{obs}}\), the chain \(B\!\to\!A\!\to\!C\) is blocked by conditioning on the
  non-collider \(A\) \(\Rightarrow B \perp\!\!\!\perp C \mid A\).
  On \(S_{\mathrm{int}}\), \(C\) depends only on \(A\) \(\Rightarrow B \perp\!\!\!\perp C \mid A\).
  Hence \(U \Vdash j\big(B \perp\!\!\!\perp C \mid A\big)\).
\end{description}

\paragraph{$j$-stable reading (same DAG, now as a site).}
Let $U$ denote a stage (context). We consider two legitimate regimes that may obtain at $U$:
\[
\mathcal{S}=\{\,S_{\mathrm{obs}}\hookrightarrow U,\; S_{\mathrm{int}}\hookrightarrow U\,\},
\]
a $J$-cover of $U$ whose \emph{charts} are:
\begin{itemize}
  \item $S_{\mathrm{obs}}$ (observational): ordinary d-separation semantics (no intervention on $A$).
  \item $S_{\mathrm{int}}$ (interventional): $\mathrm{do}(A)$—incoming edges into $A$ are cut.
\end{itemize}
By definition, a conditional independence (CI) is \emph{$j$-stable at $U$} if it holds on each chart in a $J$-cover of $U$; we write $U \Vdash j(\cdot)$.

\noindent The following $j$-stable CIs hold:
\begin{description}
  \item[(J–1)] \textbf{$B \perp\!\!\!\perp^{\,j}_{U} E$.}
  On $S_{\mathrm{obs}}$, the collider $A$ is unconditioned, so $B \perp\!\!\!\perp E$.  
  On $S_{\mathrm{int}}$, the parents of $A$ are cut, so again $B \perp\!\!\!\perp E$.  
  Hence $U \Vdash j\!\big(B \perp\!\!\!\perp E\big)$.
  \item[(J–2)] \textbf{$B \perp\!\!\!\perp^{\,j}_{U} C \mid A$.}
  On $S_{\mathrm{obs}}$, the chain $B\!\to\!A\!\to\!C$ is blocked by conditioning on the non-collider $A$, so $B \perp\!\!\!\perp C \mid A$.  
  On $S_{\mathrm{int}}$, $C$ depends only on $A$ (parents of $A$ are cut), so $B \perp\!\!\!\perp C \mid A$ again.  
  Therefore $U \Vdash j\!\big(B \perp\!\!\!\perp C \mid A\big)$.
\end{description}

\paragraph{$j$-stable reading (same DAG, now as a site).}
Fix an ambient stage $U$ in which we do not a priori know how analysts treat $A$.  
Exhibit a $J$-cover
\[
\mathcal{S}=\{\,S_{\mathrm{obs}}\hookrightarrow U,\; S_{\mathrm{int}}\hookrightarrow U\,\}
\]
with the following \emph{charts}:
\begin{itemize}
  \item $S_{\mathrm{obs}}$: an observational chart where conditioning on $A$ is admissible (we do not condition on descendants of colliders unless stated).
  \item $S_{\mathrm{int}}$: an interventional chart with $\mathrm{do}(A)$ (incoming edges into $A$ cut), so $B\to A\leftarrow E$ is surgically removed.
\end{itemize}

\noindent We claim the following $j$-stable CIs hold at $U$:
\begin{description}
  \item[(J–1)] \textbf{$B \perp\!\!\!\perp^{\,j}_{U} E$.}  
  On $S_{\mathrm{obs}}$, the collider $A$ is \emph{not} conditioned, so $B \perp\!\!\!\perp E$.  
  On $S_{\mathrm{int}}$, the incoming edges into $A$ are cut, so $B$ and $E$ do not meet at $A$, again $B \perp\!\!\!\perp E$.  
  Thus each chart in the cover validates the CI, hence $U \Vdash j\!\big(B \perp\!\!\!\perp E\big)$.
  \item[(J–2)] \textbf{$B \perp\!\!\!\perp^{\,j}_{U} C \mid A$.}  
  On $S_{\mathrm{obs}}$, standard d-separation on the chain $B\!-\!A\!-\!C$ gives $B \perp\!\!\!\perp C \mid A$.  
  On $S_{\mathrm{int}}$, $C$ depends only on $A$ (the parents of $A$ are cut), so $B \perp\!\!\!\perp C \mid A$ again.  
  Therefore $U \Vdash j\!\big(B \perp\!\!\!\perp C \mid A\big)$.
\end{description}

\paragraph{Non-example (why the classical warning persists).}
$B \not\!\perp\!\!\!\perp E \mid A$ on the observational chart: conditioning on the collider $A$ opens the path.  
Hence there is no cover $\mathcal{S}$ that includes $S_{\mathrm{obs}}$ and forces $B \perp\!\!\!\perp E \mid A$ on \emph{every} chart.  
This agrees with the classical rule: conditioning on colliders breaks independence.

\paragraph{Intuition.}
$j$-stability certifies a CI at stage $U$ by exhibiting a family of legitimate \emph{local} views (charts in a $J$-cover of $U$) such that every $X\leadsto Y$ path is blocked on each chart by the usual collider/non-collider rules.  
Because the CI is true in all charts that jointly \emph{cover} the epistemic situation at $U$, the forcing relation $U \Vdash j(\cdot)$ holds globally.

\subsection{Pollution DAG}

\begin{figure}[t]
    \centering
    \includegraphics[width=0.85\linewidth]{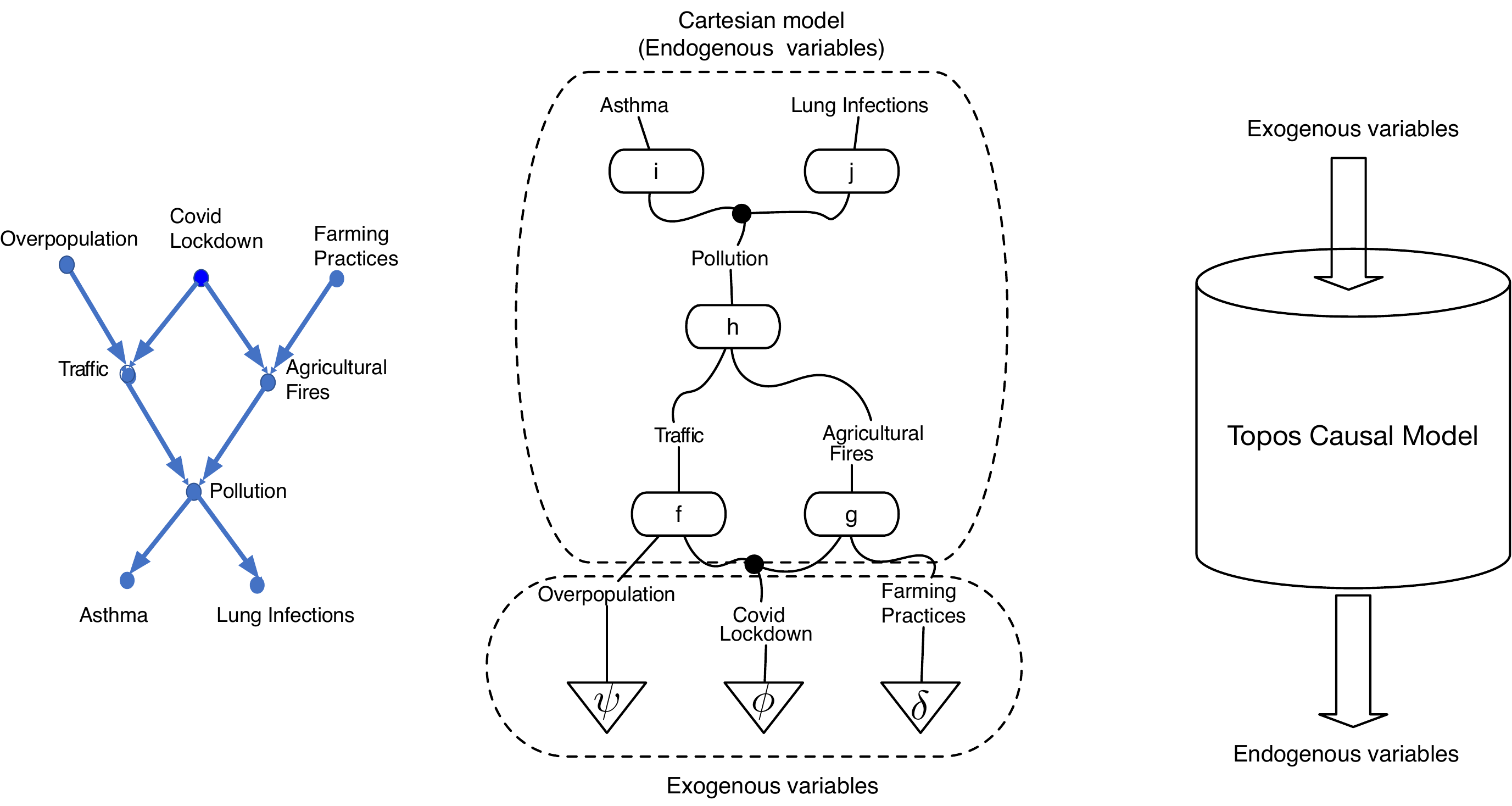}
    \caption{A simple causal model of pollution in New Delhi, India \citep{sm:neurips_tcm,DBLP:journals/entropy/Mahadevan23}.}
    \label{fig:pollution-dag}
\end{figure}

Next, we describe the concept of $j$-stability using the Pollution DAG from \citep{sm:neurips_tcm}, shown in Figure~\ref{fig:pollution-dag}.  We view a \emph{stage} \(U\) as an ambient situation in which analysts may be working either purely observationally or under well-specified interventions; a \emph{$J$-cover} \(\mathcal S\) of \(U\) is a family of charts \(S\hookrightarrow U\) (observational or interventional) whose union covers all admissible analytic contexts in \(U\). A CI statement \(\varphi\) is \emph{$J$-stable at \(U\)} if \(\varphi\) holds, by ordinary \(d\)-separation, on every chart in some cover of \(U\).

\medskip
\noindent\textbf{Cover A (Mobility–policy cover).}
Let \(\mathcal S_{\mathrm{mob}}=\{S_{\mathrm{obs}}\hookrightarrow U,\,S_{\mathrm{lock}}\hookrightarrow U\}\) with:
\begin{itemize}
\item \(S_{\mathrm{obs}}\): purely observational use of the graph.
\item \(S_{\mathrm{lock}}\): the interventional chart \(\mathrm{do}(\mathrm{Lockdown})\), cutting incoming edges into \(\mathrm{Traffic}\) from \(\mathrm{Lockdown}\) (and fixing its value).
\end{itemize}

\emph{Claim A1 (mediator blocking is $J$-stable).} 
\[
\mathrm{Traffic}\;\perp\!\!\!\perp\; \mathrm{Asthma}\; \mid \; \mathrm{Pollution}
\quad\text{is $J$-stable at }U.
\]
\emph{Reason.} On \(S_{\mathrm{obs}}\), all paths from \(\mathrm{Traffic}\) to \(\mathrm{Asthma}\) pass through the mediator \(\mathrm{Pollution}\); conditioning on the mediator blocks them by ordinary \(d\)-separation. On \(S_{\mathrm{lock}}\), \(\mathrm{Traffic}\) is set but the unique path to \(\mathrm{Asthma}\) still runs through \(\mathrm{Pollution}\), hence the same conditional independence holds. Thus every chart in \(\mathcal S_{\mathrm{mob}}\) blocks the path, so the CI is $J$-stable.

\emph{Claim A2 (same logic for respiratory outcomes).}
\[
\mathrm{Traffic}\;\perp\!\!\!\perp\; \mathrm{LungInfections}\; \mid \; \mathrm{Pollution}
\quad\text{is $J$-stable.}
\]
\emph{Reason.} Identical mediator argument as A1, with \(\mathrm{LungInfections}\) in place of \(\mathrm{Asthma}\).

\medskip
\noindent\textbf{Cover B (Fire–weather cover).}
Let \(\mathcal S_{\mathrm{fire}}=\{S_{\mathrm{dry}}\hookrightarrow U,\,S_{\mathrm{wet}}\hookrightarrow U\}\), where both charts are observational but represent distinct regimes for the Agricultural-Fires mechanism (high/low propensity). The graph’s adjacency is unchanged; only the mechanism strength varies.

\emph{Claim B1 (parents of a collider are $J$-independent unless we condition on it).}
\[
\mathrm{Traffic}\;\perp\!\!\!\perp\; \mathrm{AgriculturalFires}
\quad\text{is $J$-stable.}
\]
\emph{Reason.} In both \(S_{\mathrm{dry}}\) and \(S_{\mathrm{wet}}\), \(\mathrm{Traffic}\) and \(\mathrm{AgriculturalFires}\) meet only at the common child \(\mathrm{Pollution}\) (a collider). Without conditioning on \(\mathrm{Pollution}\) or its descendants, the collider blocks the path in each chart, so the independence holds on the entire cover.

\emph{Non-example (conditioning on the collider is not $J$-stable).}
\[
\mathrm{Traffic}\;\perp\!\!\!\perp\; \mathrm{AgriculturalFires}\; \mid \; \mathrm{Pollution}
\quad\text{is \emph{not} $J$-stable.}
\]
\emph{Reason.} Conditioning on the collider \(\mathrm{Pollution}\) opens the path in every chart; there is no cover that makes the CI true chartwise.  

Table~\ref{tab:tcm-jstable} summarizes these findings about $j$-stability. 

% ===================== TABLE: Traffic–Pollution (TCM) example =====================
\begin{table}[t]
\centering
\small
\caption{$J$-stable CI facts on the Traffic–Pollution DAG (from the TCM figure). A stage $U$ is covered by charts (observational/interventional) and a CI is $J$-stable if it holds by ordinary $d$-separation in every chart of the cover.}
\resizebox{\linewidth}{!}{
\begin{tabular}{l l p{0.52\linewidth} l}
\toprule
\textbf{Claim} & \textbf{Charts used (cover)} & \textbf{Blocking rationale (per chart)} & \textbf{Verdict} \\
\midrule
$\mathrm{Traffic}\!\perp\!\mathrm{Asthma}\mid \mathrm{Pollution}$ 
& Cover~A: $S_{\text{obs}}$, $S_{\text{lock}}=\mathrm{do}(\mathrm{Lockdown})$
& In both charts, all $\mathrm{Traffic}\!\leadsto\!\mathrm{Asthma}$ paths go via mediator $\mathrm{Pollution}$; conditioning on the mediator blocks them (standard chain rule).
& $J$-stable \\
\addlinespace[2pt]
$\mathrm{Traffic}\!\perp\!\mathrm{LungInfections}\mid \mathrm{Pollution}$
& Cover~A: $S_{\text{obs}}$, $S_{\text{lock}}$
& Same mediator argument: any path from $\mathrm{Traffic}$ to $\mathrm{LungInfections}$ must pass through $\mathrm{Pollution}$; conditioning blocks in both charts.
& $J$-stable \\
\addlinespace[2pt]
$\mathrm{Traffic}\!\perp\!\mathrm{AgriculturalFires}$
& Cover~B: $S_{\text{dry}}$, $S_{\text{wet}}$ (both observational)
& Parents of the collider $\mathrm{Pollution}$: without conditioning on the collider or its descendants, the $\mathrm{Traffic}\!\to\!\mathrm{Pollution}\!\leftarrow\!\mathrm{AgriculturalFires}$ path is blocked in each chart.
& $J$-stable \\
\addlinespace[2pt]
$\mathrm{Traffic}\!\perp\!\mathrm{AgriculturalFires}\mid \mathrm{Pollution}$
& Any cover containing an observational chart 
& Conditioning on the collider $\mathrm{Pollution}$ opens $\mathrm{Traffic}\!\to\!\mathrm{Pollution}\!\leftarrow\!\mathrm{AgriculturalFires}$, so the CI fails in that chart.
& \textbf{Not} $J$-stable \\
\bottomrule
\end{tabular}}
\label{tab:tcm-jstable}
\end{table}

\subsection{Instantiating sieves and a Grothendieck topology on two DAGs}

We reuse the site \(\Cat\) of \emph{stages} \(\U=(G,\sigma)\) and \emph{refinements} \(u:V\!\to\!\U\) (status–monotone: refinements may condition/intervene on \emph{more} variables, never less). For a CI formula \(\varphi\equiv(X \CI Y \mid Z)\), recall the sieve
\[
\mathsf S_\varphi(\U)\;=\;\{\,u:V\!\to\!\U \mid V\models\varphi\,\}.
\]
A Grothendieck topology \(J\) is specified by choosing, for each \(\U\), a family of admissible \emph{charts} \(\{\rho_k:V_k\!\to\!\U\}\) (observational/interventional views). A sieve \(S\) covers \(\U\) iff it contains a jointly epimorphic refinement of that family. We use two concrete choices:
\[
J_{\mathrm{id}}:\text{ basis }=\{\mathrm{id}_\U\},\qquad
J_{\mathrm{mix}}:\text{ basis }=\{\rho_{\mathrm{obs}},\,\rho_{\mathrm{do}}\},
\]
where \(\rho_{\mathrm{obs}}\) is a purely observational chart and \(\rho_{\mathrm{do}}\) is a specific do-surgery chart indicated below for each DAG.

\vspace{0.4em}
\paragraph{(A) Earthquake/Alarm DAG.}
Variables: \(B=\text{burglary},\; E=\text{earthquake},\; A=\text{alarm},\; C=\text{neighbor calls}\).
Edges: \(B\!\to\!A\leftarrow\!E\) and \(A\!\to\!C\).
Let \(\U=(G,\sigma_{\emptyset})\) be the ambient stage with no conditioning or interventions fixed.

\emph{Charts generating \(J_{\mathrm{mix}}\).}
\begin{itemize}
\item \(\rho_{\mathrm{obs}}:V_{\mathrm{obs}}\!\to\!\U\): observational; no collider is conditioned unless stated.
\item \(\rho_{\mathrm{do}A}:V_{\mathrm{do}A}\!\to\!\U\): interventional; perform \(\mathrm{do}(A)\), i.e., cut \(B\!\to\!A\) and \(E\!\to\!A\).
\end{itemize}

\emph{Truth on charts and \(j\)-stability.} Let \(\varphi_1=(B\CI E)\), \(\varphi_2=(B\CI C\mid A)\), \(\varphi_3=(B\CI E\mid A)\).
\[
\begin{array}{l|ccc}
\text{Chart} & \varphi_1 & \varphi_2 & \varphi_3\\\hline
V_{\mathrm{obs}} & \text{true (collider closed)} & \text{true (chain blocked by }A\text{)} & \text{false (collider opened)}\\
V_{\mathrm{do}A} & \text{true (parents of }A\text{ cut)} & \text{true (}C\text{ depends only on }A\text{)} & \text{true (}B,E\text{ disconnected)}
\end{array}
\]
Hence \(\mathsf S_{\varphi_1}(\U)\) and \(\mathsf S_{\varphi_2}(\U)\) contain both generators, so they are \(J_{\mathrm{mix}}\)-covers and
\[
\U\Vdash_{J_{\mathrm{mix}}}\varphi_1,\qquad \U\Vdash_{J_{\mathrm{mix}}}\varphi_2.
\]
By contrast, \(\mathsf S_{\varphi_3}(\U)\) omits \(\rho_{\mathrm{obs}}\), so it does \emph{not} cover \(\U\); thus \(\U\not\Vdash_{J_{\mathrm{mix}}}\varphi_3\).
With \(J_{\mathrm{id}}\) we recover the classical verdicts, since covering reduces to truth at \(\U\) itself.

\vspace{0.4em}
\paragraph{(B) Pollution/Smoker/Cancer/X-ray DAG.}
Variables: \(P=\text{pollution},\; S=\text{smoker},\; C=\text{cancer},\; X=\text{x-ray}\).
Edges: \(P\!\to\!C,\; S\!\to\!C,\; C\!\to\!X\) (the standard Pearl example).
Ambient stage \(\U=(G,\sigma_{\emptyset})\).

\emph{Charts generating \(J_{\mathrm{mix}}.\)}
\begin{itemize}
\item \(\rho_{\mathrm{obs}}:V_{\mathrm{obs}}\!\to\!\U\): observational.
\item \(\rho_{\mathrm{do}C}:V_{\mathrm{do}C}\!\to\!\U\): \(\mathrm{do}(C)\) (cut \(P\!\to\!C\) and \(S\!\to\!C\)).
\end{itemize}

We examine \(\psi_1=(P\CI S)\), \(\psi_2=(P\CI X\mid C)\), \(\psi_3=(P\CI S\mid C)\).
\[
\begin{array}{l|ccc}
\text{Chart} & \psi_1 & \psi_2 & \psi_3\\\hline
V_{\mathrm{obs}} & \text{true (no path }P\!\leadsto\!S)& \text{true (}X\text{ is child of }C\text{)} & \text{false (conditioning on collider }C\text{ opens }P\!\dashrightarrow\!S)\\
V_{\mathrm{do}C} & \text{true (}C\text{ parents cut)} & \text{true (}X\!\perp\!P\mid C\text{ vacuously)} & \text{true (}P,S\text{ disconnected)}
\end{array}
\]
Thus \(\mathsf S_{\psi_1}(\U)\) and \(\mathsf S_{\psi_2}(\U)\) contain both generators and cover \(\U\), giving
\[
\U\Vdash_{J_{\mathrm{mix}}}\psi_1,\qquad \U\Vdash_{J_{\mathrm{mix}}}\psi_2.
\]
But \(\mathsf S_{\psi_3}(\U)\) misses \(\rho_{\mathrm{obs}}\), so \(\U\not\Vdash_{J_{\mathrm{mix}}}\psi_3\).
Again, \(J_{\mathrm{id}}\) collapses to classical \(d\)-separation at \(\U\).

\paragraph{Remarks.}
(i) In both DAGs, \(\mathsf S_\phi(\U)\) is a \emph{sieve} by monotonicity of refinements: precomposing with a further refinement can only block more paths.  
(ii) \(J_{\mathrm{mix}}\) encodes the methodological stance “a CI may be certified if it holds on a fixed menu of admissible local views (e.g., observational and one specified do-chart).’’  
(iii) Choosing richer bases (e.g., including \(\mathrm{do}\)-charts at additional nodes) strengthens \(j\)-stability: more sieves cover, so more CIs become \(j\)-stable, while \(J_{\mathrm{id}}\) recovers classical independence exactly (\(\U\Vdash_{J_{\mathrm{id}}}\phi \iff \U\models\phi\)).

\medskip
\noindent\textbf{Summary.} These examples show how $j$-stability is a \emph{coverwise} lift of ordinary \(d\)-separation: pick a semantically appropriate family of observational/interventional charts (the cover) for the ambient stage \(U\), and then require the classical blocking rules to hold in each chart. When they do, the CI is forced by \(j\) at \(U\) (hence $J$-stable).%

\begin{lemma}[Collider–opening CI and covers]
Let $\Cat$ be the site of stages and refinements described in the paper.
Fix disjoint variable-sets $X,Y,Z$ in a DAG $G$, and the CI formula
$\varphi=(X \CI Y \mid Z)$.
Assume that in $G$ every $X\leadsto Y$ path contains a collider $C$ with $C\in Z$ (so $\varphi$ is \emph{collider–opening} in the classical sense).

Define two Grothendieck topologies by bases of charts at an ambient stage $\U=(G,\sigma_\emptyset)$:
\[
J_{\mathrm{obs}}:=\text{basis } \{\rho_{\mathrm{obs}}\},\qquad
J_{\mathrm{do}(C)}:=\text{basis } \{\rho_{\mathrm{do}(C)}\},
\]
where $\rho_{\mathrm{obs}}$ is purely observational and
$\rho_{\mathrm{do}(C)}$ is the interventional chart cutting all incoming arrows into $C$.

Then:
\begin{enumerate}
\item \emph{(Never covers observationally)} $\U \not\Vdash_{J_{\mathrm{obs}}} \varphi$.
Equivalently, the sieve $\mathsf S_{\varphi}(\U)=\{\,u:V\!\to\!\U \mid V\models \varphi\,\}$ does not cover $\U$ under $J_{\mathrm{obs}}$.

\item \emph{(Always covers under the collider–cut do–chart)}
$\U \Vdash_{J_{\mathrm{do}(C)}} \varphi$.
Equivalently, $\mathsf S_{\varphi}(\U)$ covers $\U$ under $J_{\mathrm{do}(C)}$.
\end{enumerate}
\end{lemma}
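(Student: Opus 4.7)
The plan is to reduce the forcing relation in each case to a chartwise check on a single generator, and then discharge that check with elementary $d$-separation arguments on the appropriate mutilated graph. Since each topology is generated by a \emph{singleton} basis, a sieve $S$ is a cover of $\U$ iff it contains the unique generator (the principal sieve then follows by closure under precomposition, which was established by the sieve lemma). Consequently, both parts collapse to checking (1) $V_{\mathrm{obs}} \not\models \varphi$ and (2) $V_{\mathrm{do}(C)} \models \varphi$, from which the conclusions $\rho_{\mathrm{obs}} \notin \mathsf S_\varphi(\U)$ and $\rho_{\mathrm{do}(C)} \in \mathsf S_\varphi(\U)$ follow immediately.

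For part (1), I would argue by producing an open path as a witness. Under the observational chart, the conditioning set is $Z$. By hypothesis, some $X \leadsto Y$ path $\pi$ passes through a collider at $C \in Z$; the collider-opening clause of Pearl's $d$-separation rules then certifies that the collider is \emph{unblocked} at $C$ (since $C$ itself is conditioned on). Reading the hypothesis in its intended form, i.e.\ that the configuration is set up so that $\pi$ is otherwise free of blocking non-colliders in $Z$, the path $\pi$ is $d$-open given $Z$, which refutes $X \CI Y \mid Z$ at $V_{\mathrm{obs}}$. Hence the generator is excluded from $\mathsf S_\varphi(\U)$, and the sieve cannot contain the basis, so it fails to be a $J_{\mathrm{obs}}$-cover.

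For part (2), I would observe that the surgery $\mathrm{do}(C)$ deletes every incoming arrow at $C$ in the mutilated graph $G^{\mathrm{do}(C)}$. By the structural definition of ``collider at $C$ on $\pi$,'' the path $\pi$ must enter $C$ along two incoming edges at $C$; once those edges are cut, no path from $X$ to $Y$ can traverse $C$ at all. Since by hypothesis \emph{every} $X \leadsto Y$ path goes through $C$ as a collider, the mutilated graph contains no $X \leadsto Y$ path whatsoever. Thus $X$ and $Y$ are vacuously $d$-separated given any conditioning set in $G^{\mathrm{do}(C)}$, so $V_{\mathrm{do}(C)} \models \varphi$, and $\rho_{\mathrm{do}(C)} \in \mathsf S_\varphi(\U)$ covers $\U$ in $J_{\mathrm{do}(C)}$. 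The main obstacle is part (1): the hypothesis as stated guarantees the \emph{presence} of a collider in $Z$ on every path, but one must read this as also ensuring some path is otherwise $d$-open under $Z$ (which is the standard informal meaning of ``collider-opening''); making that reading explicit, or strengthening the hypothesis to ``there exists an $X\!\leadsto\!Y$ path whose colliders all lie in $\overline{Z}^{\mathrm{An}(Z)}$ and whose non-colliders all lie outside $Z$,'' is the only non-routine step in the proof.
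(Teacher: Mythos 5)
Your proof follows essentially the same route as the paper's: both reduce covering under a singleton-basis topology to membership of the unique generator in $\mathsf S_\varphi(\U)$, then settle part (1) by the collider-opening clause on the observational chart and part (2) by observing that cutting all incoming arrows into $C$ severs every $X\leadsto Y$ path, since each such path must traverse $C$ as a collider. Your caveat on part (1) is well taken and applies equally to the paper's own proof, which asserts that conditioning on $C\in Z$ ``opens every path'' and concludes $V_{\mathrm{obs}}\not\models\varphi$ without verifying that some path is otherwise free of blocking non-colliders in $Z$; the strengthened hypothesis you propose (some $X\leadsto Y$ path whose colliders all lie in $Z$ and whose non-colliders all lie outside $Z$) is what the lemma actually needs.
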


\begin{proof}
(1) In the observational chart $\rho_{\mathrm{obs}}:\Vobs\to\U$, conditioning on the collider $C\in Z$ \emph{opens} every $X\leadsto Y$ path (classical $d$-separation). Hence $\Vobs \not\models \varphi$, so $\rho_{\mathrm{obs}}\notin \mathsf S_{\varphi}(\U)$. Since $\{\rho_{\mathrm{obs}}\}$ is a covering family for $J_{\mathrm{obs}}$, no sieve missing it can cover; thus $\U \not\Vdash_{J_{\mathrm{obs}}}\varphi$.

(2) In the do–chart $\rho_{\mathrm{do}(C)}:\VdoC\to\U$, the surgery removes all incoming arrows into $C$, so every $X\leadsto Y$ path that previously hinged on $C$ is severed. Consequently $X \CI Y \mid Z$ holds in $\VdoC$ (indeed $X$ and $Y$ are $d$-separated regardless of whether we condition on $C$), i.e.\ $\VdoC \models \varphi$. Therefore $\rho_{\mathrm{do}(C)}\in \mathsf S_{\varphi}(\U)$; because $\{\rho_{\mathrm{do}(C)}\}$ is a covering family for $J_{\mathrm{do}(C)}$, the sieve $\mathsf S_{\varphi}(\U)$ covers $\U$, and $\U \Vdash_{J_{\mathrm{do}(C)}} \varphi$.
\end{proof}

\begin{corollary}[Earthquake/Alarm and Pollution examples]
For the Earthquake DAG ($B\to A\leftarrow E$, $A\to C$) with $\varphi_3=(B\CI E\mid A)$ and $C=A$,
we have $\U\not\Vdash_{J_{\mathrm{obs}}}\varphi_3$ but $\U\Vdash_{J_{\mathrm{do}(A)}}\varphi_3$.
For the Pollution DAG ($P\to C\leftarrow S$, $C\to X$) with $\psi_3=(P\CI S\mid C)$ and $C$ the collider,
$\U\not\Vdash_{J_{\mathrm{obs}}}\psi_3$ yet $\U\Vdash_{J_{\mathrm{do}(C)}}\psi_3$.
\end{corollary}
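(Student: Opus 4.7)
The plan is to observe that both examples fit the hypothesis of the preceding lemma verbatim, so the proof reduces to (i) identifying the unique backdoor path in each DAG, (ii) verifying it is collider--opening with respect to the proposed conditioning set, and (iii) invoking the lemma for each of the two topologies.

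First I would handle the Earthquake/Alarm DAG $(B\to A\leftarrow E,\ A\to C)$ with $\varphi_3 = (B \CI E \mid A)$. I would enumerate paths from $B$ to $E$ in $G$: the only such path is $B\to A\leftarrow E$, on which $A$ is a collider; since $A\in Z=\{A\}$, the hypothesis ``every $B\leadsto E$ path contains a collider lying in $Z$'' is satisfied with $C=A$. The lemma then delivers the two verdicts $\U\not\Vdash_{J_{\mathrm{obs}}}\varphi_3$ and $\U\Vdash_{J_{\mathrm{do}(A)}}\varphi_3$ directly, without any further graph calculation.

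Next I would handle the Pollution DAG $(P\to C\leftarrow S,\ C\to X)$ with $\psi_3 = (P \CI S \mid C)$. The only path between $P$ and $S$ is $P\to C\leftarrow S$, on which $C$ is a collider (the outgoing edge $C\to X$ plays no role since $X\notin Z$). With the conditioning set $Z=\{C\}$, the collider lies in $Z$, so the lemma's hypothesis is again satisfied with collider $C$. Applying the lemma yields $\U\not\Vdash_{J_{\mathrm{obs}}}\psi_3$ and $\U\Vdash_{J_{\mathrm{do}(C)}}\psi_3$.

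There is no real obstacle here, since the corollary is a direct instantiation of the lemma; the only thing to watch is the purely bookkeeping check that in each DAG \emph{every} $X\leadsto Y$ path (not merely one) passes through the designated collider in $Z$, which is immediate because each DAG has a single such path. Thus the proof is essentially a one-line application of the lemma in each case, and I would present it as two short paragraphs, one per example, each ending with the conclusion quoted from the lemma.
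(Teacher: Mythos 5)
Your proposal is correct and matches the paper's treatment: the corollary is presented as a direct instantiation of the preceding collider--opening lemma, and your verification that each DAG has a unique $X\leadsto Y$ path whose collider lies in the conditioning set is exactly the bookkeeping needed to invoke it.
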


\begin{remark}[Designing $J$ as methodological stance]
More generally, if a chosen basis $\mathcal{B}=\{\rho_k:V_k\!\to\!\U\}$ has the property that each chart $\rho_k$ blocks all $X\leadsto Y$ paths given $Z$ (by observation rules or appropriate do–surgeries), then the sieve $\mathsf S_{(X\CI Y\mid Z)}(\U)$ contains $\mathcal{B}$ and is therefore covering. Thus $J$ encodes \emph{which local views count as legitimate evidence} for a CI:
adding do–charts that “neutralize” colliders makes collider–opening CIs $j$-stable, whereas a purely observational $J_{\mathrm{obs}}$ never certifies them.
\end{remark}

\section{Probabilistic Inference in TCM}
\label{sec:monads} 

In \citep{sm:neurips_tcm}, causal models were defined either as objects of a topos category called ${\cal C_{TCM}}$ (which will be often denoted simply by $\E$ in this paper), or as functor objects in a presheaf topos $\Set^{{\cal C}^{op}_{TCM}}$. Here, we will get into a more specialized modeling framework, which is less general, but gives us the opportunity to develop a more refined language for translating classical do-calculus into an intuitionistic framework.  We begin by introducing a categorical framework for probabilistic inference. Some of the main ideas that will be explained in detail below are summarized at a high level in Figure~\ref{fig:do-mini-example} and Table~\ref{tab:classical-vs-tcm-compact}, and Figure~\ref{fig:dag-vs-tcm-companion-compact} gives a high-level summary.  A standard way to model probability distributions categorically is through monads \citep{giri}. For simplicity, we restrict ourselves to the case of distribution monads over finite sets. Let $\mathbf{FinSet}$ (or more generally $\mathcal{E}$) denote the base category of finite sets (or objects of the ambient topos).  The \emph{distribution monad} \citep{jacobs:book} 
\[
\Dist : \mathbf{FinSet} \longrightarrow \mathbf{FinSet}
\]
maps an object $X$ to the set of finitely supported probability measures on $X$:
\[
\Dist(X) := \Big\{\, p:X\to[0,1] \ \big|\  \sum_{x\in X}p(x)=1 \,\Big\}.
\]
For a morphism $f:X\to Y$, $\Dist(f):\Dist(X)\to\Dist(Y)$ is the pushforward of measures,
\[
\Dist(f)(p)(y) := \sum_{x:\,f(x)=y} p(x).
\]
The unit $\eta_X:X\to\Dist(X)$ sends $x\mapsto \delta_x$ (Dirac measure), and the multiplication 
$\mu_X:\Dist(\Dist(X))\to\Dist(X)$ is integration (flattening of distributions of distributions).

The \emph{Kleisli category} $\mathbf{FinStoch}$ of $\Dist$ has:
\begin{itemize}
\item the same objects as $\mathbf{FinSet}$,
\item morphisms $X\to Y$ given by stochastic kernels $X\to\Dist(Y)$,
\item composition given by convolution:
  \[
  (g\ast f)(x) := \int_Y g(y)\,d f(x)(y).
  \]
\end{itemize}
We write $\Dist(Y)$ also when $\mathcal{E}$ is a topos and $\Dist$ is lifted to an \emph{internal distribution monad},  
so that morphisms $X\to\Dist(Y)$ represent internal stochastic maps. 
All constructions in this paper are interpreted internally in the Kleisli (Markov) category of $\Dist$.

\subsection{The Kleisli Category of the Distribution Monad}
\label{subsec:markov-category}

\paragraph{Definition of the distribution monad.}
Let $\E$ be a topos (or a suitable cartesian category)
equipped with a specified \emph{finite-distribution monad}
\[
\Dist_\E : \E \to \E,
\]
interpreting probabilistic choice or convex combination.
Each object $X\in\E$ is mapped to an object $\Dist_\E(X)$ of finitely supported
distributions over~$X$, and each morphism $f:X\to Y$ is mapped to the
\emph{pushforward} $\Dist_\E(f):\Dist_\E(X)\to \Dist_\E(Y)$.
The monad structure consists of:
\begin{align*}
  \eta_X &: X \to \Dist_\E(X), & \eta_X(x) &= 1\,|x\rangle \quad \text{(Dirac embedding)},\\
  \mu_X &: \Dist_\E(\Dist_\E(X)) \to \Dist_\E(X),
  & \mu_X\!\left(\sum_j q_j\Big|\sum_i p_{ji}\,|x_{ji}\rangle\Big\rangle\right)
    &= \sum_{j,i} q_j p_{ji}\,|x_{ji}\rangle,
\end{align*}
which flatten a distribution of distributions.
We treat the existence of this internal monad, its strength, and its
compatibility with the selected sheaf subtopos as standing hypotheses.
They do not follow from the topos axioms alone.  In the main soundness
theorem one may instead work objectwise with finite probability kernels,
which is the concrete setting used below.

\paragraph{The Kleisli category.}
The \emph{Kleisli category} of $\Dist_\E$, denoted
\[
\Kl(\Dist_\E),
\]
is the category having:
\begin{itemize}
  \item the same objects as $\E$;
  \item morphisms $X\!\to\! Y$ given by arrows $X\!\to\!\Dist_\E(Y)$ in~$\E$,
        written $f:X\!\Rightarrow\!Y$;
  \item composition defined by \emph{Kleisli convolution}:
        for $f:X\!\Rightarrow\!Y$ and $g:Y\!\Rightarrow\!Z$, their composite is
        \[
        g\odot f \;:=\;
        \mu_Z \circ \Dist_\E(g) \circ f
        \;:\;
        X \to \Dist_\E(Z).
        \]
\end{itemize}
The identity on $X$ is the Dirac morphism $\eta_X:X\to\Dist_\E(X)$.
Associativity and unitality of $\odot$ follow from the monad laws.

\paragraph{Intuition.}
A morphism $f:X\Rightarrow Y$ represents a \emph{stochastic kernel}:
it assigns to each generalized element $x:N\!\to\!X$ an internal probability
distribution $f(x):N\!\to\!\Dist_\E(Y)$.
Composition $g\odot f$ corresponds to \emph{marginalizing over the intermediate
variable $Y$}:
\[
(g\odot f)(x)
  \;=\;
  \int_Y g(y)\,df(x)(y)
  \;=\;
  \sum_{y\in Y} g(y)\,f(x)(y)
  \quad\text{(internally in $\E$)}.
\]
Hence $\Kl(\Dist_\E)$ behaves as the internal version of the category of
finite-state Markov kernels, and is often called the
\emph{Markov category} of~$\E$.

\paragraph{Commutative and strong structure.}
The monad $\Dist_\E$ is \emph{commutative} and \emph{strong}:
there exists a natural transformation
\[
st_{X,Y}: X\times \Dist_\E(Y) \to \Dist_\E(X\times Y),
\qquad
st_{X,Y}(x,p)=\sum_{y} p(y)\,|(x,y)\rangle,
\]
which allows one to handle dependent random variables
and to interpret causal composition diagrammatically.
This strength is precisely what defines interventions:
given a context $\Gamma$, policy $\mu:\Gamma\Rightarrow Z$, and structural kernel
$k:\Gamma\times Z\Rightarrow Y$, the interventional composite
\[
\Do_Z(k;\mu) \;=\;
\mu_Y \circ \Dist_\E(k)\circ st_{\Gamma,Z}\circ \langle \id,\mu\rangle
\]
is the Kleisli composition $\Gamma\Rightarrow Y$
that corresponds to integrating $k$ against~$\mu$.

\paragraph{Categorical properties.}
The Kleisli category $\Kl(\Dist_\E)$ satisfies the axioms of a
\emph{Markov category} (Fritz, 2020):
\begin{itemize}
  \item it has a symmetric monoidal structure inherited from $\E$;
  \item the comonoid structure $(X\leftarrow X\times X \leftarrow X)$
        represents duplication and deletion of deterministic information;
  \item every morphism $X\Rightarrow Y$ is a stochastic map,
        and deterministic maps arise from the embedding
        $\eta_Y\circ f : X\to\Dist_\E(Y)$.
\end{itemize}
Thus $\Kl(\Dist_\E)$ internalizes the category of
\emph{probabilistic processes, kernels, and causal mechanisms}.
It is the natural semantic environment for Topos Causal Models.

\begin{figure}[p]
  \centering
  % ---------- (a) Classical causal graph ----------
  \begin{minipage}{0.54\textwidth}
    \centering
    \begin{tikzpicture}[>=Latex, node distance=25mm]
      \tikzstyle{v}=[circle,draw,minimum size=10mm,inner sep=0pt]
      \node[v] (Z) {$Z$};
      \node[v, right=of Z] (X) {$X$};
      \node[v, right=of X] (Y) {$Y$};

      \draw[->] (Z) -- (X);
      \draw[->] (X) -- (Y);
      \draw[->, bend right=35] (Z) to (Y);

      \node[below=9mm of X] {\scriptsize Classical causal graph};
    \end{tikzpicture}

    \vspace{0.3em}
    \small (a) Classical DAG: \(Z \to X \to Y\), with optional confounding \(Z\to Y\).
    Intervening with \(do(X)\) means cutting incoming arrows to \(X\)
    and fixing its distribution via a policy~$\mu_X$.
  \end{minipage} \newline \vskip 0.2in 
  \hfill
  % ---------- (b) TCM categorical view ----------
  \begin{minipage}{0.9\textwidth}
    \centering
    \begin{tikzcd}[row sep=large, column sep=large]
      Z
        \arrow[r, "k_X"]
        \arrow[dr, swap, "{\id_Z}"] &
      \Dist(X)
        \arrow[r, "{st}"] &
      \Dist(Z\times X)
        \arrow[r, "{\Dist(k_Y)}"] &
      \Dist(\Dist(Y))
        \arrow[r, "\mu_Y"] &
      \Dist(Y) \\
      & Z
        \arrow[ur, bend right=5, swap, "{\langle \id_Z,\mu_X\rangle}"]
        \arrow[ur, dashed, bend right=30, swap, "{\Do_X(k_Y;\mu_X)}"]
        &
    \end{tikzcd}

    \vspace{0.3em}
    \small (b) TCM diagram: the observational channel
    uses the learned kernel $k_X:Z\to\Dist(X)$;
    the intervention replaces it with $\mu_X$
    and integrates through the Kleisli composition
    $\Do_X(k_Y;\mu_X)=\mu_Y\!\circ\!\Dist(k_Y)\!\circ\!st\!\circ\!\langle \id_Z,\mu_X\rangle$.
  \end{minipage}

  \caption{From classical to TCM view of causal do-calculus interventions \(do(X)\).
  (a) In the DAG, incoming edges to \(X\) are cut and replaced by a fixed policy.
  (b) In the TCM, this is expressed by replacing the kernel
  $k_X:Z\to\Dist(X)$ with a chosen $\mu_X$ and composing via the
  monadic integration law. Both yield the equality
  $P(Y\mid do(X),Z)=\int_X P(Y\mid X,Z)\,d\mu_X(X)$,
  which in the internal logic reads
  $Z\vdash P(Y\mid do(X),Z)=P(Y\mid Z)$
  whenever $Y\!\perp\! X\mid Z$ in the cut model.}
  \label{fig:do-mini-example}
\end{figure}

\begin{table}[p]
\centering
\begin{scriptsize}
\renewcommand{\arraystretch}{1.15}
\setlength{\tabcolsep}{3.8pt}
\begin{tabular}{|p{3.4cm}|p{4.8cm}|p{7.5cm}|}
\hline
\textbf{Causal Operation (Classical)} &
\textbf{TCM / Category‐Theoretic Analogue} &
\textbf{Description / Interpretation} \\
\hline
Node (variable) &
Object in $\E$ &
Each random variable $X$ is an object of the topos $\E$
representing possible states. \\
\hline
Directed edge $X\!\to\!Y$ &
Stochastic morphism $k_Y\!:X\!\to\!\Dist_\E(Y)$ &
A causal mechanism mapping each $x$ to a distribution on $Y$. \\
\hline
Joint distribution factorization &
Kleisli composition in $\Kl(\Dist_\E)$ &
Composition of stochastic morphisms yields the global joint law. \\
\hline
Conditioning on $X{=}x$ &
Comprehension subobject $\iota_x\!:\Gamma{\mid}X{=}x\hookrightarrow\Gamma$ &
Restrict to a subobject and renormalize (Bayesian update). \\
\hline
Marginalization &
Integration $\mu_Y\!:\Dist(\Dist(Y))\!\to\!\Dist(Y)$ &
Collapse nested distributions (expectation operator). \\
\hline
Observation (likelihood weighting) &
Restriction + normalization &
Apply a predicate $\chi$ as a subobject, then renormalize. \\
\hline
Intervention $do(X\!\sim\!\mu_X)$ &
Kernel replacement $\Do_X(k;\mu_X)
=\mu_Y\!\circ\!\Dist(k)\!\circ\!st\!\circ\!\langle\id,\mu_X\rangle$ &
Replace incoming kernel $k_X$ by $\mu_X$ and propagate. \\
\hline
Cutting an edge $Z\!\to\!X$ &
Replace $k_X\!:\!Z\!\to\!\Dist(X)$ by constant kernel &
Removes parent dependence (“mutilation”). \\
\hline
Conditional independence $Y\!\perp\!X\mid Z$ &
Factorization $k_Y\!=\!k_{0,Y}\!\circ\!\pi_Z$ in $\E$ &
Independence $\Leftrightarrow$ equality of arrows. \\
\hline
Rule 1: Insert/Delete observations &
Equality $\Do_X(k;\mu_X)\!=\!k_0$ under independence &
Integrating constant kernel yields same $k_0$. \\
\hline
Rule 2: Action/observation exchange &
Equality in $M_{\overline{X(W)}}$ via comprehension and replacement &
Observation/intervention equivalence under factorization. \\
\hline
Rule 3: Insert/Delete actions &
Stability of $k$ under substitution of $\mu_X$ &
Intervention on irrelevant variable leaves $k$ unchanged. \\
\hline
d‐separation test &
Pullback condition on subobjects &
Independence encoded as a commuting pullback square in $\E$. \\
\hline
\end{tabular}
\caption{Classical vs.\ TCM semantics.
Each do‐calculus rule has a categorical analogue:
internal morphism equality, kernel replacement,
or subobject inclusion.}
\label{tab:classical-vs-tcm-compact}
\end{scriptsize}
\end{table}

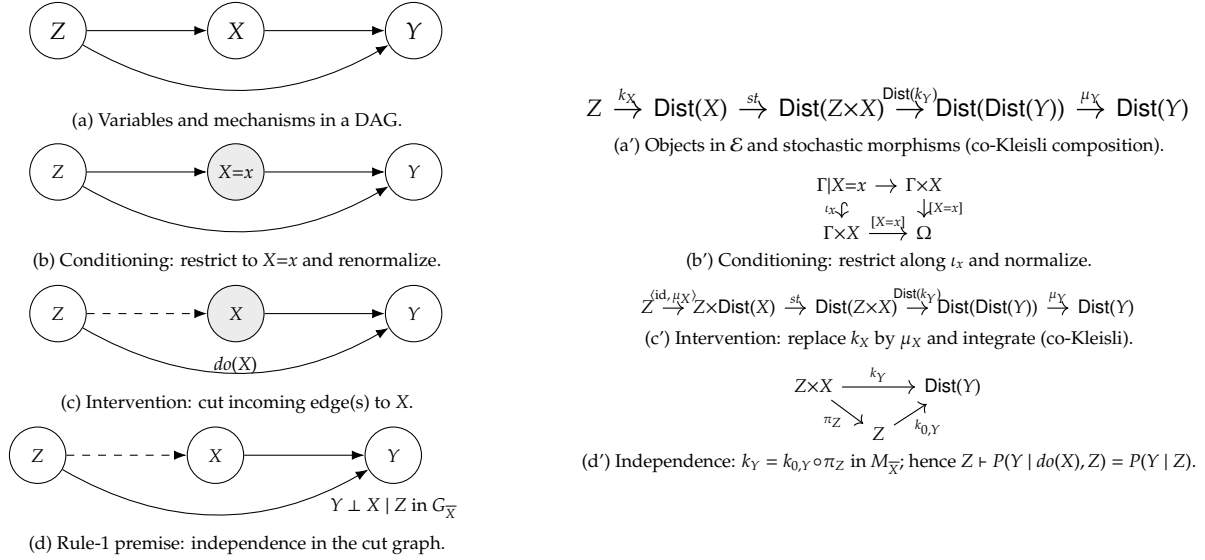
\begin{figure}[t]
  \centering
  \begin{small}

  % ---------- Left column: Classical DAG ----------
  \begin{minipage}{0.37\linewidth}
    \centering

    % Row (a) Variables & edges
    \begin{tikzpicture}[>=Latex, node distance=16mm]
      \tikzstyle{v}=[circle,draw,minimum size=7.5mm,inner sep=0pt]
      \node[v] (Z) {$Z$};
      \node[v, right=of Z] (X) {$X$};
      \node[v, right=of X] (Y) {$Y$};
      \draw[->] (Z) -- (X);
      \draw[->] (X) -- (Y);
      \draw[->, bend right=30] (Z) to (Y);
    \end{tikzpicture}

    \vspace{0.2em}
    \scriptsize (a) Variables and mechanisms in a DAG.

    \vspace{0.6em}

    % Row (b) Conditioning X=x
    \begin{tikzpicture}[>=Latex, node distance=16mm]
      \tikzstyle{v}=[circle,draw,minimum size=7.5mm,inner sep=0pt]
      \node[v, fill=gray!15] (X) {$X{=}x$};
      \node[v, left=of X] (Z) {$Z$};
      \node[v, right=of X] (Y) {$Y$};
      \draw[->] (Z) -- (X);
      \draw[->] (X) -- (Y);
      \draw[->, bend right=30] (Z) to (Y);
    \end{tikzpicture}

    \vspace{0.2em}
    \scriptsize (b) Conditioning: restrict to $X{=}x$ and renormalize.

    \vspace{0.6em}

    % Row (c) do(X)
    \begin{tikzpicture}[>=Latex, node distance=16mm]
      \tikzstyle{v}=[circle,draw,minimum size=7.5mm,inner sep=0pt]
      \node[v] (Z) {$Z$};
      \node[v, right=of Z, fill=gray!15] (X) {$X$};
      \node[v, right=of X] (Y) {$Y$};
      \draw[->, dashed] (Z) -- (X); % cut
      \draw[->] (X) -- (Y);
      \draw[->, bend right=30] (Z) to (Y);
      \node[below=1.0mm of X] {\scriptsize $do(X)$};
    \end{tikzpicture}

    \vspace{0.2em}
    \scriptsize (c) Intervention: cut incoming edge(s) to $X$.

    \vspace{0.6em}

    % Row (d) Independence premise
    \begin{tikzpicture}[>=Latex, node distance=16mm]
      \tikzstyle{v}=[circle,draw,minimum size=7.5mm,inner sep=0pt]
      \node[v] (Z) {$Z$};
      \node[v, right=of Z] (X) {$X$};
      \node[v, right=of X] (Y) {$Y$};
      \draw[->, dashed] (Z) -- (X);
      \draw[->] (X) -- (Y);
      \draw[->, bend right=30] (Z) to (Y);
      \node[below=1.0mm of Y] {\scriptsize $Y \perp X \mid Z$ in $G_{\overline X}$};
    \end{tikzpicture}

    \vspace{0.2em}
    \scriptsize (d) Rule-1 premise: independence in the cut graph.
  \end{minipage}
  \hfill
  % ---------- Right column: TCM categorical view ----------
  \begin{minipage}{0.58\linewidth}
    \centering

    % Row (a') Objects & kernels
    \begin{tikzcd}[row sep=small, column sep=small]
      Z \arrow[r, "k_X"] &
      \Dist(X) \arrow[r, "st"] &
      \Dist(Z{\times}X) \arrow[r, "\Dist(k_Y)"] &
      \Dist(\Dist(Y)) \arrow[r, "\mu_Y"] &
      \Dist(Y)
    \end{tikzcd}

    \vspace{0.2em}
    \scriptsize (a') Objects in $\E$ and stochastic morphisms (Kleisli composition).

    \vspace{0.6em}

    % Row (b') Conditioning via comprehension + normalization
    \begin{tikzcd}[row sep=small, column sep=small]
      \Gamma{\mid}X{=}x \arrow[d, hook, "\iota_x"']
        \arrow[r] & \Gamma{\times}X \arrow[d, "{[X{=}x]}"] \\
      \Gamma{\times}X \arrow[r, "{[X{=}x]}"] & \Omega
    \end{tikzcd}

    \vspace{0.2em}
    \scriptsize (b') Conditioning: restrict along $\iota_x$ and normalize.

    \vspace{0.6em}

    % Row (c') do(X): kernel replacement + integration
    \begin{tikzcd}[row sep=small, column sep=small]
      Z \arrow[r, "{\langle \mathrm{id},\, \mu_X\rangle}"] &
      Z{\times}\Dist(X) \arrow[r, "st"] &
      \Dist(Z{\times}X) \arrow[r, "\Dist(k_Y)"] &
      \Dist(\Dist(Y)) \arrow[r, "\mu_Y"] &
      \Dist(Y)
    \end{tikzcd}

    \vspace{0.2em}
    \scriptsize (c') Intervention: replace $k_X$ by $\mu_X$ and integrate (Kleisli).

    \vspace{0.6em}

    % Row (d') Independence as factorization (Rule 1)
    \begin{tikzcd}[row sep=small, column sep=small]
      Z{\times}X \arrow[dr, swap, "\pi_Z"] \arrow[rr, "k_Y"] & & \Dist(Y) \\
      & Z \arrow[ur, swap, "k_{0,Y}"] &
    \end{tikzcd}

    \vspace{0.2em}
    \scriptsize (d') Independence: $k_Y = k_{0,Y}\!\circ\!\pi_Z$ in $M_{\overline X}$; hence
    $Z \vdash P(Y\mid do(X),Z)=P(Y\mid Z)$.
  \end{minipage}

  \caption{Side-by-side correspondence between the classical DAG view (left) and the TCM categorical view (right).
  Each DAG operation (conditioning, edge deletion, independence) maps to a categorical construction:
  comprehension subobject $+$ normalization, kernel replacement $+$ integration, and factorization in $\E$.}
  \label{fig:dag-vs-tcm-companion-compact}

  \end{small}
\end{figure}

\paragraph{Summary diagram.}
\[
\begin{tikzcd}[row sep=large, column sep=huge]
  X \arrow[r, "f"] & \Dist_\E(Y)
    \arrow[r, "\Dist_\E(g)"] &
  \Dist_\E(\Dist_\E(Z))
    \arrow[r, "\mu_Z"] &
  \Dist_\E(Z)
\end{tikzcd}
\qquad
\text{represents}\quad
(g\odot f):X\Rightarrow Z.
\]

\paragraph{Connection to TCMs.}
In a TCM, every causal mechanism
$P(Y\mid \mathrm{Pa}(Y)):\mathrm{Pa}(Y)\!\Rightarrow\!Y$
is a morphism in $\Kl(\Dist_\E)$.
Interventions and observations are morphism replacements or pullbacks
within this category, and all do-calculus equalities are internal equalities
between Kleisli arrows of~$\Dist_\E$.

\subsection{Example: Causal Models as Functors}
\label{subsec:causal-models-as-functors}

\paragraph{1. Causal graph as a category.}
Let $\mathcal{C}_G$ be the free category generated by the graph
\[
X \;\longrightarrow\; Y \;\longrightarrow\; Z.
\]
Objects: $\mathrm{Ob}(\mathcal{C}_G)=\{X,Y,Z\}$.
Morphisms:
\[
\mathrm{Hom}(X,Y)=\{f\},\quad
\mathrm{Hom}(Y,Z)=\{g\},\quad
\mathrm{Hom}(X,Z)=\{g\circ f\},\quad
\mathrm{Hom}(X,X)=\{\id_X\},\;\dots
\]

\paragraph{2. Target category of stochastic maps.}
Let $\Kl(\Dist)$ denote the Kleisli category of the
finite-support distribution monad $\Dist$ on \textbf{Set}.
Objects are finite sets, morphisms $A\to B$ are stochastic matrices
$A\to \Dist(B)$, and composition is
\[
(h\circ f)(a)(c) = \sum_{b\in B} f(a)(b)\,h(b)(c).
\]

\paragraph{3. Causal model as a functor.}
A causal model is a functor
\[
F:\mathcal{C}_G \longrightarrow \Kl(\Dist),
\]
given on objects and morphisms by
\[
F(X)=\{x_1,x_2\},\quad
F(Y)=\{y_1,y_2\},\quad
F(Z)=\{z_1,z_2\},
\]
and stochastic matrices
\[
F(f):X\to\Dist(Y)
=\begin{bmatrix}
0.8 & 0.2\\
0.3 & 0.7
\end{bmatrix},\qquad
F(g):Y\to\Dist(Z)
=\begin{bmatrix}
0.9 & 0.1\\
0.4 & 0.6
\end{bmatrix}.
\]
Composition in $\mathcal{C}_G$ gives
\[
F(g\circ f) = F(g)\circ F(f),
\]
which in $\Kl(\Dist)$ is the matrix product:
\[
F(g\circ f)
=\begin{bmatrix}
0.8 & 0.2\\
0.3 & 0.7
\end{bmatrix}
\!\!\begin{bmatrix}
0.9 & 0.1\\
0.4 & 0.6
\end{bmatrix}
=
\begin{bmatrix}
0.8\!\times\!0.9+0.2\!\times\!0.4 & 0.8\!\times\!0.1+0.2\!\times\!0.6\\
0.3\!\times\!0.9+0.7\!\times\!0.4 & 0.3\!\times\!0.1+0.7\!\times\!0.6
\end{bmatrix}
=
\begin{bmatrix}
0.80 & 0.20\\
0.57 & 0.43
\end{bmatrix}.
\]
Thus $F(g\!\circ\!f):X\to\Dist(Z)$ represents
the induced causal influence from $X$ to $Z$.

\paragraph{4. Diagrammatic view.}
\begin{center}
\begin{tikzcd}[column sep=huge, row sep=large]
  X \arrow[r, "f"] \arrow[rr, bend left=30, "g\!\circ\!f"] &
  Y \arrow[r, "g"] & Z
\end{tikzcd}
\[\quad\mapsto\quad \]
\begin{tikzcd}[column sep=huge, row sep=large]
  F(X) \arrow[r, "F(f):X\to\Dist(Y)"] \arrow[rr, bend left=30, "F(g\circ f)"] &
  F(Y) \arrow[r, "F(g):Y\to\Dist(Z)"] & F(Z)
\end{tikzcd}
\end{center}

\paragraph{5. Functorial laws.}
The functor $F$ preserves identities and composition:
\[
F(\id_X)=\id_{F(X)},\qquad
F(g\circ f)=F(g)\circ F(f).
\]
Hence, causal composition (via functional or stochastic dependence)
is represented as categorical composition in the Markov category.

\paragraph{6. Generalization inside a topos.}
If $\E$ is a topos, we can internalize this by taking
\[
F : \mathcal{C}_G \longrightarrow \Kl(\Dist_\E),
\]
where each variable is now an object of $\E$,
each arrow a stochastic morphism in $\E$,
and the same functorial laws hold internally.
This internalizes causal semantics in any intuitionistic setting,
allowing higher-order, context-dependent, or sheaf-based models.

\subsection{Interventions as Natural Transformations}
\label{subsec:interventions-as-nattrans}

Given a causal graph category $\mathcal{C}_G$,
a causal model is a functor
\[
F : \mathcal{C}_G \longrightarrow \Kl(\Dist_\E).
\]
An intervention $do(X\!\sim\!\mu_X)$
produces a modified functor
\[
F_{do(X)} : \mathcal{C}_G \longrightarrow \Kl(\Dist_\E)
\]
that agrees with $F$ on all nodes except $X$ and its outgoing arrows,
which are replaced by constant kernels using $\mu_X$.

The relationship between $F$ and $F_{do(X)}$
is expressed by a \emph{natural transformation}
\[
\eta^{(X)} : F \Longrightarrow F_{do(X)}.
\]
Each component $\eta^{(X)}_V : F(V)\to F_{do(X)}(V)$
represents the “surgical” replacement effect of the intervention
propagated through the functorial semantics.

\paragraph{Commuting diagram.}
\begin{center}
\begin{tikzcd}[column sep=huge, row sep=large]
  X \arrow[r, "f"] \arrow[d, "do(X)" left]
    & Y \arrow[r, "g"] \arrow[d, dashed, "\eta^{(X)}_Y" right]
    & Z \arrow[d, dashed, "\eta^{(X)}_Z" right] \\[2pt]
  F_{do(X)}(X)
    \arrow[r, "{F_{do(X)}(f)}"]
    & F_{do(X)}(Y)
    \arrow[r, "{F_{do(X)}(g)}"]
    & F_{do(X)}(Z)
\end{tikzcd}
\end{center}

\noindent
Commutativity expresses the functorial consistency condition:
\[
\eta^{(X)}_Z \circ F(g\!\circ\!f)
  = F_{do(X)}(g\!\circ\!f) \circ \eta^{(X)}_X,
\]
which ensures that causal dependencies
propagate coherently under intervention.

\paragraph{Example.}
Continuing the chain $X\!\to\!Y\!\to\!Z$,
let $F(f):X\to\Dist(Y)$ and $F(g):Y\to\Dist(Z)$ as before.
Then $F_{do(X)}$ is identical to $F$
except that $F_{do(X)}(f)$ is replaced by the constant kernel
\[
F_{do(X)}(f)(\ast) = \mu_X \in \Dist(X).
\]
The natural transformation component at $Y$
acts as
\[
\eta^{(X)}_Y
  = \Do_X(F(f);\mu_X)
  = \mu_Y \circ \Dist(F(f))
    \circ st \circ \langle \id, \mu_X\rangle.
\]
At the object level, $\eta^{(X)}_Y$
sends each original stochastic map to its
intervened counterpart,
and naturality guarantees that every downstream
composition (e.g.\ $g\circ f$) is updated coherently.

\paragraph{Interpretation.}
This functor–natural transformation perspective
unifies the ``mutilation'' operation of Pearl’s graph semantics
with the algebraic structure of the distribution monad.
Causal models become \emph{functors},
and interventions become \emph{natural transformations}
between them.
The naturality square expresses precisely the invariance of
downstream mechanisms under the intervention.

\begin{center}
\begin{tikzpicture}[node distance=25mm,>=Latex]
  \tikzstyle{functor}=[rectangle,draw,rounded corners,fill=blue!10,minimum width=15mm]
  \node[functor] (F) {$F$};
  \node[functor, right=of F] (Fdo) {$F_{do(X)}$};
  \draw[->, thick] (F) -- node[above]{\(\eta^{(X)}:F\!\Rightarrow\!F_{do(X)}\)} (Fdo);
\end{tikzpicture}
\end{center}

\paragraph{Standing assumptions.}
We work in a Grothendieck topos $\Topos=\Sh_{\J}(\C)$ whose site $(\C,\J)$
indexes regimes/experiments. Random variables are objects of a probability
sheaf $\Obs$.  An intervention is a compatible family of mechanism-replacement
maps \(M_U\to M_U^{\doop{x}}\) whose restriction squares commute.
Conditional
independence lives in the subobject classifier $\OmegaE$ via a
$\CI$-sheaf; $\J$-stability means local truth in Kripke--Joyal semantics.

\paragraph{Summary.}
\begin{itemize}
  \item $\mathcal{C}_G$ encodes causal structure (syntax).
  \item $\Kl(\Dist_\E)$ provides stochastic semantics.
  \item $F$ interprets the causal mechanisms.
  \item $do(X)$ induces a new functor $F_{do(X)}$.
  \item $\eta^{(X)}:F\Rightarrow F_{do(X)}$ expresses
        the intervention as a coherent natural transformation.
\end{itemize}

\section{Internal logic of toposes} 

We review the concept of internal logic that is intrinisc to every topos, including TCM's, and this logical language will play a central role in the extension of classical do-calculus to $j$-do-calculus. 

\subsection{Mitchell-B\'enabou Language}
\label{mbl}
We define the Mitchell-B\'enabou language (MBL), a typed local set theory (see Section~\ref{lst}) associated with a causal topos. Given the topos category ${\cal C}_\Omega$, we define the types of MBL as causal model objects $M$ of ${\cal C}_\Omega$. For each type $M$ (e.g., an SCM),  we assume the existence of variables $x_M, y_M, \ldots$, where each such variable has as its interpretation the identity arrow ${\bf 1}: M \rightarrow M$. We can construct product objects, such as $A \times B \times C$, where terms like $\sigma$ that define arrows are given the interpretation $\sigma: A \times B \times C \rightarrow D$. We can inductively define the terms and their interpretations in a topos category as follows (see \citep{maclane1992sheaves} for additional details): 
\begin{itemize}
    \item Each variable $x_M$ of type $M$ is a term of type $M$, and its interpretation is the identity $x_M = {\bf 1}: M \rightarrow M$ (e.g., $M$ may be an SCM or a causal model on a  Markov category). 

\item Terms $\sigma$ and $\tau$ of types $C$ and $D$ that are interpreted as $\sigma: A \rightarrow C$ and $\tau: B \rightarrow D$ can be combined to yield a term $\langle \sigma, \tau \rangle$ of type $C \times D$, whose joint interpretation is given as 

\[ \langle \sigma p, \tau q \rangle: X \rightarrow C \times D\]

where $X$ has the required projections $p: X \rightarrow A$  and $q: X \rightarrow B$. 

\item Terms $\sigma: A \rightarrow B$ and $\tau: C \rightarrow B$ of the same type $B$ yield a term $\sigma = \tau$ of type $\Omega$, interpreted as 

\[ (\sigma = \tau): W \xrightarrow[]{\langle \sigma p, \tau q \rangle} B \times B \xrightarrow[]{\delta_B} \Omega \]
where $\delta_B$ is the characteristic map of the diagonal functor $\Delta B \rightarrow B \times B$. These diagonal maps correspond to the ``copy" procedure in Markov categories \citep{Fritz_2020}. 

\item Arrows $f: A \rightarrow B$ and a term $\sigma: C \rightarrow A$ of type $A$ can be combined to yield a term $f \circ \sigma$ of type $B$, whose interpretation is naturally a composite arrow: 

\[ f \circ \sigma: C \xrightarrow[]{\sigma} A \xrightarrow[]{f} B\]

\item For exponential objects, terms $\theta: A \rightarrow B^C$ and $\sigma: D \rightarrow C$ of types $B^C$ and $C$, respectively, combine to give an ``evaluation" map of type $B$, defined as 

\[ \theta (\sigma): W \rightarrow B^C \times C \xrightarrow[]{e} B \]

where $e$ is the evaluation map, and $W$ defines a map $\langle \theta p, \sigma q \rangle$, where once again $p: W \rightarrow A$ and $q: W \rightarrow D$ are projection maps. 

\item Terms $\sigma: A \rightarrow B$ and $\tau: D \rightarrow \Omega^B$ combine to yield a term $\sigma \in \tau$ of type $\Omega$, with the following interpretation: 

\[ \sigma \in \tau: W \xrightarrow[]{\langle \sigma p, \tau q \rangle} B \times \Omega^B \xrightarrow[]{e} \Omega \]

\item Finally, we can define local functions as $\lambda$ objects, such as 

\[ \lambda x_C \sigma: A \rightarrow B^C \]

where $x_C$ is a variable of type $C$ and $\sigma: C \times A \rightarrow B$. 
\end{itemize}

We combine terms $\alpha, \beta$ etc. of type $\Omega$ using logical connectives $\wedge, \vee, \Rightarrow, \neg$, as well as quantifiers, to get composite terms, where each of the logical connectives is now defined over the subobject classifier $\Omega$. 

%Using the definition of $\Gamma$ given in Figure~\ref{gdcsubobj}, we can now define the logical connectives in terms of operations on the induced lattice structure of subobjects of SCM models. 

\begin{itemize}
    \item $\wedge: \Omega \times \Omega \rightarrow \Omega$ is interpreted as the {\em meet} operation in the partially ordered set of subobjects (given by the Heyting algebra). 

    \item $\vee: \Omega \times \Omega \rightarrow \Omega$ is interpreted as the {\em join} operation in the partially ordered set of subobjects (given by the Heyting algebra).  This operation gives the definition of a disjunction of two properties. 

    \item $\Rightarrow: \Omega \times \Omega \rightarrow \Omega$ is interpreted as an adjoint functor, as defined previously for a Heyting algebra. Thus, the property of implication over SCMs is modeled as an adjoint functor. 
    
\end{itemize}

We can combine these logical connectives with the term interpretation as arrows, relegating some details to \citep{maclane1992sheaves}. We now turn to the Kripke-Joyal semantics of this language. 

\subsection{Kripke-Joyal Semantics for a Causal Topos}
\label{kj}

 We now define the Kripke-Joyal semantics for the Mitchell-B\'enabou language of a causal topos. Any free variable $x$ must have some causal model $X$ of ${\cal C}_\Omega$ as its type.  For any causal model $M$ in ${\cal C}_\Omega$, define a {\em generalized element} as a morphism $\alpha: N \rightarrow M$. To understand this definition, note that we can define an element of a causal model by the  morphism $x: {\bf 1} \rightarrow M$. Thus, a generalized element $\alpha: N \rightarrow M$ represents the ``stage of definition" of $M$ by $N$. We specify the semantics of how an SCM $N$ supports any formula $\phi(\alpha)$, denoted by $N \Vdash \phi(\alpha)$, as follows: 
\[ N \Vdash \phi(\alpha) \ \ \ \mbox{if and only if } \ \ \ \mbox{Im} \ \alpha \leq \{ x | \phi(x) \} \]
Stated in the form of a commutative diagram, this ``forcing" relationship holds if and only if $\alpha$ factors through $\{x | \phi(x) \}$, where $x$ is a variable of type $M$ (recall that objects $M$ of a topos form its types), as shown in the following commutative diagram. \footnote{The concept of ``forcing" is generalized from set theory \citep{maclane1992sheaves}.}
 \begin{center}
% https://q.uiver.app/#q=WzAsNSxbMCwyLCJOIl0sWzIsMCwiXFx7eCB8IFxccGhpKHgpIFxcfSJdLFsyLDIsIk0iXSxbNCwwLCJ7XFxiZiAxfSJdLFs0LDIsIntcXGJmIFxcT21lZ2F9ICJdLFswLDEsIiIsMCx7InN0eWxlIjp7ImJvZHkiOnsibmFtZSI6ImRhc2hlZCJ9fX1dLFswLDIsIlxcYWxwaGEiLDJdLFsxLDIsIiIsMCx7InN0eWxlIjp7InRhaWwiOnsibmFtZSI6Im1vbm8ifX19XSxbMSwzXSxbMyw0LCJ7XFxiZiBUcnVlfSJdLFsyLDQsIlxccGhpKHgpIl1d
\begin{tikzcd}
	&& {\{x | \phi(x) \}} && {{\bf 1}} \\
	\\
	N && M && {{\bf \Omega} }
	\arrow[from=1-3, to=1-5]
	\arrow[tail, from=1-3, to=3-3]
	\arrow["{{\bf True}}", from=1-5, to=3-5]
	\arrow[dashed, from=3-1, to=1-3]
	\arrow["\alpha"', from=3-1, to=3-3]
	\arrow["{\phi(x)}", from=3-3, to=3-5]
\end{tikzcd}
 \end{center}
This diagram provides an interesting way to define causal interventions in a causal topos, because it defines submodels of $M$.   Building on this definition, if $\alpha, \beta: N \rightarrow M$ are parallel arrows, we can give semantics to the formula $\alpha = \beta$ by the following statement: 

 \[ N \xrightarrow[]{\langle \alpha, \beta \rangle} M \times M \xrightarrow[]{\delta_M} \Omega\]

following the definitions in the previous section for the composite $\langle \alpha, \beta \rangle$ and $\delta_X$ in the Mitchell-B\'enabou language. We can extend the previous commutative diagram to show that $U \Vdash \alpha = \beta$ holds if and only if $\langle \alpha, \beta \rangle$ factors through the diagonal map $\Delta$: 

\begin{center}
% https://q.uiver.app/#q=WzAsNSxbMCwyLCJOIl0sWzIsMCwiTSJdLFsyLDIsIk0gXFx0aW1lcyBNIl0sWzQsMCwie1xcYmYgMX0iXSxbNCwyLCJ7XFxiZiBcXE9tZWdhfSAiXSxbMCwxLCIiLDAseyJzdHlsZSI6eyJib2R5Ijp7Im5hbWUiOiJkYXNoZWQifX19XSxbMCwyLCJcXGxhbmdsZSBcXGFscGhhLCBcXGJldGEgXFxyYW5nbGUiLDJdLFsxLDIsIlxcRGVsdGEiLDAseyJzdHlsZSI6eyJ0YWlsIjp7Im5hbWUiOiJtb25vIn19fV0sWzEsM10sWzMsNCwie1xcYmYgVHJ1ZX0iXSxbMiw0LCJcXGRlbHRhX00iXV0=
\begin{tikzcd}
	&& M && {{\bf 1}} \\
	\\
	N && {M \times M} && {{\bf \Omega} }
	\arrow[from=1-3, to=1-5]
	\arrow["\Delta", tail, from=1-3, to=3-3]
	\arrow["{{\bf True}}", from=1-5, to=3-5]
	\arrow[dashed, from=3-1, to=1-3]
	\arrow["{\langle \alpha, \beta \rangle}"', from=3-1, to=3-3]
	\arrow["{\delta_M}", from=3-3, to=3-5]
\end{tikzcd}
\end{center}

%Many additional properties can be derived (see \citep{maclane1992sheaves}), including the following useful ones. 

\begin{itemize}
    \item {\bf Monotonicity:} If $U \Vdash \phi(x)$, then we can pullback the interpretation through any arrow $f: U' \rightarrow U$ in a topos ${\cal C}$ to obtain $U' \Vdash \phi(\alpha \circ f)$. 
    \begin{small}
        % https://q.uiver.app/#q=WzAsNixbMiwyLCJVIl0sWzQsMCwiXFx7eCB8IFxccGhpKHgpIFxcfSJdLFs0LDIsIlgiXSxbNiwwLCJ7XFxiZiAxfSJdLFs2LDIsIntcXGJmIFxcT21lZ2F9ICJdLFswLDIsIlUnIl0sWzAsMSwiIiwwLHsic3R5bGUiOnsiYm9keSI6eyJuYW1lIjoiZGFzaGVkIn19fV0sWzAsMiwiXFxhbHBoYSIsMl0sWzEsMiwiIiwwLHsic3R5bGUiOnsidGFpbCI6eyJuYW1lIjoibW9ubyJ9fX1dLFsxLDNdLFszLDQsIntcXGJmIFRydWV9Il0sWzIsNCwiXFxwaGkoeCkiXSxbNSwwLCJmIl0sWzUsMSwiIiwwLHsic3R5bGUiOnsiYm9keSI6eyJuYW1lIjoiZGFzaGVkIn19fV1d
\begin{tikzcd}
	&&&& {\{x | \phi(x) \}} && {{\bf 1}} \\
	\\
	{U'} && U && X && {{\bf \Omega} }
	\arrow[from=1-5, to=1-7]
	\arrow[tail, from=1-5, to=3-5]
	\arrow["{{\bf True}}", from=1-7, to=3-7]
	\arrow[dashed, from=3-1, to=1-5]
	\arrow["f", from=3-1, to=3-3]
	\arrow[dashed, from=3-3, to=1-5]
	\arrow["\alpha"', from=3-3, to=3-5]
	\arrow["{\phi(x)}", from=3-5, to=3-7]
\end{tikzcd}
\end{small}

    \item {\bf Local character:} Analogously, if $f: U' \rightarrow U$ is an epic arrow, then from $U' \Vdash \phi(\alpha \circ f)$, we can conclude $U \Vdash \phi(x)$. 
\end{itemize}

%We can now state another key result: any Generalized Do-Calculus ${\cal C}_\Omega$ has associated the following  Kripke-Joyal semantics.  These give precise semantics for the standard logical connectives, as well as universal and existential quantification in terms of the arrows of a topos category ${\cal C_{GDC}}$. 

\begin{theorem}[Kripke--Joyal clauses on a site]
Let $\alpha\in M(U)$ be a generalized element at a stage $U$ of
$(\C,J)$.  For formulas $\phi(x)$ and $\psi(x)$:
\begin{enumerate}
  \item \(U\Vdash\phi(\alpha)\wedge\psi(\alpha)\) iff both conjuncts are
  forced at \(U\).
  \item \(U\Vdash\phi(\alpha)\vee\psi(\alpha)\) iff there is a
  \(J\)-cover \(\{u_i:U_i\to U\}\) such that, for every \(i\), either
  \(U_i\Vdash\phi(\alpha|_{U_i})\) or
  \(U_i\Vdash\psi(\alpha|_{U_i})\).
  \item \(U\Vdash\phi(\alpha)\Rightarrow\psi(\alpha)\) iff for every
  \(u:V\to U\), forcing \(\phi(\alpha|_V)\) at \(V\) implies forcing
  \(\psi(\alpha|_V)\) at \(V\).
  \item \(U\Vdash\neg\phi(\alpha)\) iff for every \(u:V\to U\),
  \(V\Vdash\phi(\alpha|_V)\) implies \(V\Vdash\bot\).
  \item \(U\Vdash\exists y\,\phi(\alpha,y)\) iff there is a \(J\)-cover
  \(\{u_i:U_i\to U\}\) and local witnesses \(\beta_i\in Y(U_i)\) such
  that \(U_i\Vdash\phi(\alpha|_{U_i},\beta_i)\) for every \(i\).
  \item \(U\Vdash\forall y\,\phi(\alpha,y)\) iff for every \(u:V\to U\)
  and every \(\beta\in Y(V)\), one has
  \(V\Vdash\phi(\alpha|_V,\beta)\).
\end{enumerate}
\end{theorem}

\begin{proof}
These are the standard Kripke--Joyal clauses for the internal language of
\(\Sh_J(\C)\); see \citet{maclane1992sheaves}.
\end{proof}

%Let us try to briefly explain the goals of this type of ``abstract nonsense" categorical analysis of causal models. Following the definitions of causal interventions stated broadly in \citep{pearl-book}, the aim is to characterize in as much generality as possible what the logical calculus of interventions are for any causal model that is described like an SCM, i.e. a collection of local functions that compose to yield a global function from the exogenous variables to the endogenous variables. We did not actually define the equivalent semantics for SCMs where there is a latent distribution over the exogenous variables. That extension is not difficult, but we would have to introduce the necessary categorical machinery to handle probabilities, either as Giri monads \citep{giri} or through the use of Markov categories \citep{fritz:jmlr}. We discuss some of these constructions in the Supplementary Materials. 

\subsection{Local Set Theory}
\label{lst}

The Mitchell-B\'enabou language is an example of a ``local set theory" \cite{bell}.  A {\em local set theory} \citep{bell} is defined as a language ${\cal L}$ specified by the following classes of symbols: 

\begin{enumerate}
    \item Symbols ${\bf 1} $ and $\Omega$ representing the {\em unity} type and {\em truth-value} type symbols. 

    \item A collection of symbols ${\bf A}, {\bf B}, {\bf C}, \ldots $ called {\em ground type symbols}. 

    \item A collection of symbols ${\bf f}, {\bf g}, {\bf h}, \ldots$ called {\em function} symbols. 
\end{enumerate}

 We will use the topos-theoretical constructions to construct composite types. We can use an inductive procedure to recursively construct {\bf type symbols} of ${\cal L}$ as follows: 

\begin{enumerate}
    \item  Symbols ${\bf 1} $ and $\Omega$ are type symbols (the terminal object and the subobject classifier in a causal topos). 

    \item Any ground type symbol is a type symbol. For a causal topos, each SCM is a ground type symbol. 

    \item If ${\bf A}_1, \ldots, {\bf A}_n$ are type symbols, so is their product ${\bf A}_1 \times \ldots {\bf A}_n$, where for $n=0$, the type of $\prod_{i=1}^n {\bf A}_i$ is ${\bf 1}$. The product ${\bf A}_1 \times \ldots {\bf A}_n$ has the {\em product type} symbol.  These constructs allow defining an algebra of causal models. 

     \item If ${\bf A}$ is a type symbol, so is ${\bf PA}$. The type ${\bf PA}$ is called the {\em power} type. \footnote{Note that in a topos, these will be interpreted as {\em power objects}, generalizing the concept of power sets.}  We thus can give meaning to concept of a ``powerset" of a causal model, where we interpret the subobject classifier as defining the abstract semantics of a powerset for each SCM. 
\end{enumerate}

Thus, a product of SCMs will define product types. Given an SCM $M$, we can define its power type as well, which is an abstract notion of the ``power set" of a causal model (if you interpret this in the context of subobject classifiers, it means that we are defining a family of submodels). 
For each type symbol ${\bf A}$, the language ${\cal L}$ contains a set of {\em variables} $x_{\bf A}, y_{\bf A}, z_{\bf A}, \ldots$. In addition, ${\cal L}$ contains the distinguished ${\bf *}$ symbol. Each function symbol in ${\cal L}$  is assigned a {\em signature} of the form ${\bf A} \rightarrow {\bf B}$. \footnote{In a topos, these will correspond to arrows of the category.} We can define the {\em terms} of the local set theory language ${\cal L}$ recursively as follows: 

\begin{itemize}
    \item ${\bf *}$ is a term of type ${\bf 1}$. 

    \item for each type symbol ${\bf A}$, variables $x_{\bf A}, y_{\bf A}, \ldots$ are terms of type ${\bf A}$. 

    \item if ${\bf f}$ is a function symbol with signature ${\bf A} \rightarrow {\bf B}$, and $\tau$ is a term of type ${\bf A}$, then ${\bf f}(\tau)$ is a term of type ${\bf B}$. 

    \item If $\tau_1, \ldots, \tau_n$ are terms of types ${\bf A}_1, \ldots, {\bf A}_n$, then $\langle \tau_1, \ldots \tau_n \rangle$ is a term of type ${\bf A}_1 \times \ldots {\bf A}_n$, where if $n=0$, then $\langle \tau_1, \ldots \tau_n \rangle$ is of type ${\bf *}$. 

    \item If $\tau$ is a term of type ${\bf A}_1 \times {\bf A}_n$, then for $1 \leq i \leq n$, $(\tau)_i$ is a term of type ${\bf A}_i$. 

    \item if $\alpha$ is a term of type $\Omega$, and $x_{\bf A}$ is a variable of type ${\bf A}$, then $\{x_{\bf A} : \alpha \}$ is a term of type ${\bf PA}$.  

    \item if $\sigma, \tau$ are terms of the same type, $\sigma = \tau$ is a term of type $\Omega$. 

    \item if $\sigma, \tau$ are terms of the types ${\bf A}, {\bf PA}$, respectively, then $\sigma \in \tau$ is a term of type ${\bf \Omega}$. 
\end{itemize}

A term of type ${\bf \Omega}$ is called a {\em formula}. The language ${\cal L}$ does not yet have defined any logical operations, because in a typed language, logical operations can be defined in terms of the types, as illustrated below. 

\begin{itemize}
    \item $\alpha \Leftrightarrow \beta$ is interpreted as $\alpha = \beta$. 

    \item {\bf true} is interpreted as ${\bf *} = {\bf *}$. 

    \item $\alpha \wedge \beta$ is interpreted as $\langle \alpha, \beta \rangle = \langle {\bf true}, {\bf false} \rangle$. 

    \item $\alpha \Rightarrow \beta$ is interpreted as $(\alpha \wedge \beta) \Leftrightarrow \alpha$

    \item $\forall x \ \alpha$ is interpreted as $\{x : \alpha\} = \{x : {\bf true} \}$

    \item ${\bf false}$ is interpreted as $\forall \omega \ \omega$. 

    \item $\neg \alpha$ is interpreted as $\alpha \Rightarrow {\bf false}$.

    \item $\alpha \vee \beta$ is interpreted as $\forall \omega \ [(\alpha \Rightarrow \omega \wedge \beta \Rightarrow \omega) \Rightarrow \omega]$

    \item $\exists x \ \alpha$ is interpreted as $\forall \omega [ \forall x (\alpha \Rightarrow \omega) \Rightarrow \omega ]$

\end{itemize}

Finally, we have to specify the inference rules, which are given in the form of {\em sequents}. We will just sketch out a few, and the rest can be seen in \citep{bell}. A sequent is a formula $\Gamma: \alpha$ where $\alpha$ is a formula, and $\Gamma$ is a possibly empty finite set of formulae. The basic axioms include $\alpha: \alpha$ (tautology), $:x_1 = {\bf *}$ (unity), a rule for forming projections of products, a rule for equality, and another for comprehension. Finally, the inference rules are given in the form: 

\begin{itemize}
    \item {\em Thinning:}
    \[
  \begin{prooftree}
    \hypo{\Gamma : \alpha}
    \infer1{\beta, \Gamma: \alpha}
  \end{prooftree}
\]
\item {\em Cut}: 

\[
  \begin{prooftree}
    \hypo{\Gamma : \alpha, \  \ \alpha, \Gamma: \beta}
    \infer1{\Gamma: \beta}
  \end{prooftree}
\]

\item {\em Equivalence}: 

\[
  \begin{prooftree}
    \hypo{\alpha, \Gamma : \beta \ \ \beta, \Gamma: \alpha}
    \infer1{\Gamma: \alpha \Leftrightarrow \beta}
  \end{prooftree} 
\]

\end{itemize}

A full list of inference rules with examples of proofs is given in \citep{bell}. Now that we have the elements of a local set theory defined as shown above, we need to connect its definitions with a causal topos. That is the topic of the next section.

\section{Kripke--Joyal Forcing and Internal Semantics in a TCM}
\label{subsec:kripke-joyal}

Before explaining how to prove the validity of do-calculus statements in a TCM, we need to provide more detail on the Kripke-Joyal intuitionistic semantics that constitutes the ``semantic engine" in a topos. 

\paragraph{Motivation.}
In a classical set-based semantics, a statement such as
``$\forall x\in X,\, \varphi(x)$''
is true if $\varphi(x)$ holds for each element $x\in X$.
In a topos, however, \emph{elements} are generalized---they are morphisms
$\alpha:N\to X$ from a test object~$N$.
The Kripke--Joyal forcing relation provides the inductive definition of what it means
for a formula $\varphi$ to hold at a stage $N$ and under a generalized element
$\alpha:N\to\Gamma$.
This endows the internal logic of a topos with semantics analogous
to intuitionistic Kripke models.

\paragraph{Definition (Forcing relation).}
For an elementary topos~$\E$, the \emph{forcing relation}
\[
N \Vdash_\E \varphi[\alpha]
\]
reads ``$\varphi$ holds at stage $N$ under assignment $\alpha:N\to\Gamma$.''
The semantics is defined inductively:
\begin{align*}
N &\Vdash (x=y)[\alpha]
    &&\text{iff } \alpha^*x=\alpha^*y \text{ in }\E,\\[2pt]
N &\Vdash (\varphi\wedge\psi)[\alpha]
    &&\text{iff } N\Vdash\varphi[\alpha]\text{ and }N\Vdash\psi[\alpha],\\[2pt]
N &\Vdash (\varphi\Rightarrow\psi)[\alpha]
    &&\text{iff for all }u:N'\to N,\;
        N'\Vdash\varphi[\alpha\!\circ\!u]\Rightarrow
        N'\Vdash\psi[\alpha\!\circ\!u],\\[2pt]
N &\Vdash (\exists x:A)\,\varphi(x)[\alpha]
    &&\text{iff there exists an epi }e:M\to N
        \text{ and }a:M\to A\text{ s.t. }M\Vdash\varphi[a,\alpha\!\circ\!e],\\[2pt]
N &\Vdash (\forall x:A)\,\varphi(x)[\alpha]
    &&\text{iff for all }u:N'\to N,\text{ and all }a:N'\to A,\\[-2pt]
    &&&\hspace{3cm}N'\Vdash\varphi[a,\alpha\!\circ\!u].
\end{align*}
Truth is thus \emph{monotone}:
if $N\Vdash\varphi[\alpha]$ and $u:N'\to N$, then $N'\Vdash\varphi[\alpha\!\circ\!u]$.
This matches intuitionistic semantics where information grows along morphisms.

\paragraph{Interpretation in TCMs.}
In a Topos Causal Model, contexts $\Gamma$ denote joint variable spaces,
and a stage $\alpha:N\to\Gamma$ represents a local assignment or partial observation.
A judgment
\[
\Gamma \vdash \varphi
\]
is true in~$\E$ if $N\Vdash_\E\varphi[\alpha]$ for all $\alpha:N\to\Gamma$.
For instance, the sequent
\[
\Gamma \vdash P(Y\mid do(Z),X)=P(Y\mid X)
\]
is internally true iff for every $\alpha:N\to\Gamma$,
the two stochastic morphisms $N\to\Dist(Y)$
given by $P(Y\mid do(Z),X)\circ\alpha$ and $P(Y\mid X)\circ\alpha$
coincide in $\Kl(\Dist_\E)$.
Hence equality of arrows is verified \emph{stagewise}.

\paragraph{Forcing for interventions.}
For $k:\Gamma\times Z\to\Dist_\E Y$ and policy $\mu:\Gamma\to\Dist_\E Z$,
the formula
\[
\Gamma\vdash \Do_Z(k;\mu)=k_0
\]
is true in~$\E$ iff
for every stage $\alpha:N\to\Gamma$,
\[
\int_Z k(\alpha,z)\,d\mu(\alpha)(z)=k_0(\alpha)
\quad\text{in }\Dist(Y).
\]
Thus the ``proof'' of a do-calculus identity
reduces to verifying equality of integrals pointwise
at every generalized element~$\alpha$.
This is precisely how the proof of Rule~1 was phrased.

\paragraph{Semantic intuition.}
Each object $N$ of~$\E$ is a ``stage of information,'' and morphisms
$u:N'\to N$ represent refinements of context.
Kripke--Joyal forcing guarantees that
truth of causal assertions (e.g.\ conditional independence, intervention equalities)
is preserved by refinement:
if a relation holds at a stage, it holds at all more informative stages.
Hence, causal reasoning in a TCM corresponds to
\emph{constructive reasoning about local data} that remains stable
under restriction.

\paragraph{Summary.}
Kripke--Joyal forcing provides the bridge between
syntactic sequents of internal logic and semantic equality of morphisms.
In particular:
\begin{itemize}
  \item generalized elements $\alpha:N\to X$ replace concrete elements of $X$;
  \item truth values are subobjects (elements of the internal Heyting algebra);
  \item forcing ensures stagewise stability of causal equations.
\end{itemize}
Therefore, to prove a statement like
$(Y\!\perp\!Z\mid\Gamma)\Rightarrow
  (P(Y\mid do(Z),\Gamma)=P(Y\mid\Gamma))$
in a TCM,
one checks the equality at each stage $N$---exactly the Kripke--Joyal semantics
of internal equality.

% ===== add in preamble if not already present =====
% \usepackage{tikz}
% \usepackage{tikz-cd}
% \usetikzlibrary{arrows.meta,positioning}

\begin{figure}[t]
  \centering
  % ---------------- (a) Monotonicity / Pullback of forcing ----------------
  \begin{minipage}{0.47\textwidth}
    \centering
    \begin{tikzcd}[row sep=large, column sep=huge]
      N' \arrow[d, "u"'] \arrow[r, "\alpha\circ u"] & \Gamma \\
      N \arrow[ur, "\alpha"'] &
    \end{tikzcd}

    \vspace{0.4em}
    \small
    (a) \textbf{Kripke--Joyal monotonicity.} If \(N \Vdash \varphi[\alpha]\) and \(u:N'\to N\),
    then \(N' \Vdash \varphi[\alpha\circ u]\).
    Intuitively, truth is preserved under refinement \(u\).
  \end{minipage}
  \hfill
  % ---------------- (b) Equality preserved under refinement ----------------
  \begin{minipage}{0.47\textwidth}
    \centering
    \begin{tikzcd}[row sep=large, column sep=huge]
      N' \arrow[d, "u"'] \arrow[r, bend left=15, "f\circ\alpha\circ u"] \arrow[r, bend right=15, swap, "g\circ\alpha\circ u"] & \Dist(Y) \\
      N \arrow[r, bend left=15, "f\circ\alpha"] \arrow[r, bend right=15, swap, "g\circ\alpha"] & \Dist(Y) \arrow[u, equal, phantom, pos=0.0, ""]
    \end{tikzcd}

    \vspace{0.4em}
    \small
    (b) \textbf{Stagewise equality.} To show \(\Gamma \vdash f=g\) (with \(f,g:\Gamma\to \Dist(Y)\)),
    check \(f\circ\alpha=g\circ\alpha\) for every \(\alpha:N\to\Gamma\).
    Then for any refinement \(u:N'\to N\), also \(f\circ\alpha\circ u=g\circ\alpha\circ u\).
  \end{minipage}

  \caption{Kripke--Joyal forcing diagrams. (a) Truth of a formula \(\varphi\) at a stage
  \(N\) is preserved along refinements \(u:N'\to N\). (b) Internal equalities (e.g.\ the
  do-calculus identity \(\Do_Z(k;\mu)=k_0\)) are verified as equality of morphisms at every
  stage and remain equal under refinement.}
  \label{fig:kj-forcing-diagram}
\end{figure}

\section{From d-separation to $j$-stability: do-calculus on sites} 

% ===========================
% j-d-separation ($j$-stability)
% ===========================

\paragraph{Setup.}
Let $(\C,J)$ be a site of regimes/contexts (objects $U\!\in\!\C$, arrows are restrictions),
and let $\V$ be a fixed finite set of variables.
Consider a presheaf of DAGs
\[
  \G:\C^{\mathrm{op}}\to \mathbf{DAG}_{\V},\qquad
  U \longmapsto G_U=(\V,E_U),
\]
with restriction maps $\G(f):G_U\to G_V$ for each $f:V\to U$.  We assume
that graph surgery commutes with restriction:
\[
  (G_U{}_{\bar X})|_V=G_V{}_{\bar X},\qquad
  (G_U{}_{\underline Z})|_V=G_V{}_{\underline Z},
\]
and similarly for the surgery used in Rule~3.

At each stage $U$ let $M_U$ be an acyclic SCM with graph $G_U$ and
finite-state stochastic kernels.  For every $f:V\to U$, restriction is
assumed to preserve the variable signature and the following operations:
\begin{enumerate}
  \item mechanism replacement defining $\doop{x}$;
  \item the mutilated models used by Pearl's three rules;
  \item every conditional kernel that occurs below; and
  \item equality of kernels on overlaps.
\end{enumerate}
Conditionals are used only on events of positive normalizing mass.  More
general state spaces may be substituted provided compatible disintegrations
are supplied.  These assumptions are what make the chartwise kernels into a
separated sheaf; Kripke--Joyal logic alone does not manufacture
conditionals or interventions.
For $U\!\in\!\C$ and $X,Y,Z\subseteq \V$, write
$\mathrm{dsep}_{G_U}(X;Y\given Z)$ for the usual Pearl d-separation in $G_U$
(\emph{every} undirected path from $X$ to $Y$ is blocked by $Z$ using the
standard non-collider/collider rules).

Define the fiberwise predicate
\[
  \varphi_{X\,\indep\, Y \,\given\, Z}(U)\;:\Longleftrightarrow\;
  \mathrm{dsep}_{G_U}(X;Y\given Z).
\]

We use the external forcing notation
\[
  U \Vdash_J \psi
  \;\;\Longleftrightarrow\;\;
  \exists\text{ a $J$-covering sieve }S\text{ on }U
  \text{ such that } \forall(f:V\to U)\in S,\; V \Vdash \psi.
\]
Equivalently, in the ambient presheaf topos this is the forcing clause for
the Lawvere--Tierney closure $j(\psi)$.  We do not apply a second modality
after passing to $\Sh_J(\C)$.

\begin{definition}[{\boldmath $j$-d-separation / $j$-stable CI}]
For $U\!\in\!\C$, we say $X$ is $j$-d-separated from $Y$ by $Z$ at $U$
(written $X \perp^{\,j}_U Y \given Z$) iff
\[
  U \Vdash_J \varphi_{X\,\indep\, Y \,\given\, Z}.
\]
Equivalently: there exists a $J$-cover $S$ of $U$ such that
$\mathrm{dsep}_{G_V}(X;Y\given Z)$ holds for every $f:V\to U$ in $S$.
\end{definition}

\begin{definition}[{\boldmath $j$-closed path (path-wise view)}]
A family of paths from $X$ to $Y$ is \emph{$j$-closed by $Z$ at $U$}
when there is one common $J$-cover on which every restricted path is blocked
by $Z$ using the usual collider/non-collider clauses.  Thus
$X \perp^{\,j}_U Y\given Z$ precisely when the family of all such paths is
$j$-closed.  Requiring one common cover avoids changing the quantifier order.
\end{definition}

\begin{proposition}[Conservativity]
If $J$ is the trivial topology (only the maximal sieve covers),
then for all $U\!\in\!\C$ and $X,Y,Z\subseteq\V$,
\[
  X \perp^{\,j}_U Y \given Z
  \;\Longleftrightarrow\;
  \mathrm{dsep}_{G_U}(X;Y\given Z).
\]
\end{proposition}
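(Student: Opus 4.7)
The plan is to reduce the $j$-forcing clause to a single assertion at $U$ by exploiting that, under the trivial topology, the only $J$-covering sieve on $U$ is the maximal sieve $M_U = \mathrm{Hom}_{\C}(-,U)$, which always contains the identity $\mathrm{id}_U$. Both directions then fall out of specializing the forcing clause to $\mathrm{id}_U$, combined with the earlier sieve lemma which guarantees downward closure of the truth-sieve under precomposition.

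For the forward direction $(\Rightarrow)$, I would assume $X \perp^{\,j}_U Y \given Z$, i.e., $U \Vdash j(\varphi_{X\,\indep\,Y\,\given\,Z})$, and unfold the Kripke--Joyal clause: there exists a $J$-covering sieve $S$ on $U$ such that $V \Vdash \varphi_{X\,\indep\,Y\,\given\,Z}$ for every $f:V\to U$ in $S$. Since trivial $J$ has $M_U$ as its unique covering sieve, $S = M_U \ni \mathrm{id}_U$, so specializing the forcing clause to $f = \mathrm{id}_U$ yields $U \Vdash \varphi_{X\,\indep\,Y\,\given\,Z}$, which is exactly $\mathrm{dsep}_{G_U}(X;Y\given Z)$.

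For the reverse direction $(\Leftarrow)$, starting from $\mathrm{dsep}_{G_U}(X;Y\given Z)$, I would exhibit the sieve
$\mathsf{S}_\varphi(U) := \{\,f:V\to U \mid \mathrm{dsep}_{G_V}(X;Y\given Z)\,\}$
introduced earlier. By hypothesis $\mathrm{id}_U \in \mathsf{S}_\varphi(U)$, and by the sieve lemma $\mathsf{S}_\varphi(U)$ is closed under precomposition. Hence for any $f:V\to U$ we have $f = \mathrm{id}_U \circ f \in \mathsf{S}_\varphi(U)$, so $\mathsf{S}_\varphi(U) = M_U$ is the (unique) $J$-covering sieve under trivial $J$, and every member validates $\varphi$ by construction. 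This gives $U \Vdash j(\varphi_{X\,\indep\,Y\,\given\,Z})$, i.e., $X \perp^{\,j}_U Y \given Z$.

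The main obstacle I expect is the appeal to the sieve lemma in the reverse direction: it requires that restriction along any arrow $f:V\to U$ preserves d-separation, i.e., that $\mathrm{dsep}_{G_U}$ implies $\mathrm{dsep}_{G_V}$ after pullback through $\G(f)$. For the paper's site of stages with status-monotone refinements this is built into the setup (refinements can only close paths), so the downward closure of $\mathsf{S}_\varphi(U)$ is automatic. For a fully general presheaf $\G:\C^{\mathrm{op}}\to\mathbf{DAG}_\V$ this is an honest hypothesis; absent it, one would either restrict attention to the slice $\C_{/U}$ or add a standing assumption that the restriction maps $\G(f)$ preserve d-separation. Once this ingredient is in place, the rest of the proof is mechanical bookkeeping over the Kripke--Joyal clause.
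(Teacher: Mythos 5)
Your argument is essentially the paper's: the paper leaves this proposition unproved in the section where it appears, but the earlier analogous statement ($\U \Vdash_{J_{\mathrm{id}}} \varphi \iff \U \models \varphi$) is justified by exactly your forward-direction move, namely that under the trivial topology a sieve covers iff it contains $\mathrm{id}_U$, so forcing reduces to truth at $U$ itself. Where you go beyond the paper is in the reverse direction, and your caution there is warranted rather than pedantic. Under the Kripke--Joyal clause as the paper displays it, $U \Vdash j\varphi$ requires $\varphi$ to hold at \emph{every} $f:V\to U$ in some covering sieve, and the only covering sieve under the trivial topology is the maximal one; so the reverse implication genuinely needs $\mathrm{dsep}_{G_U}$ to propagate to $\mathrm{dsep}_{G_V}$ along every restriction $\G(f)$. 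For an arbitrary presheaf $\G:\C^{\mathrm{op}}\to\mathbf{DAG}_{\V}$ this can fail (a restriction map may introduce edges that open a path), so the proposition as stated is false without either the status-monotonicity assumption from the paper's site of stages or an explicit standing hypothesis that restrictions preserve d-separation. The only alternative escape is to read ``$J$-cover'' as a covering \emph{family} rather than a sieve, in which case the singleton family $\{\mathrm{id}_U\}$ witnesses the reverse direction trivially; the paper wavers between these two readings, and your proof correctly identifies that the choice matters. In short: your proof is correct under the hypothesis you name, it matches the paper's (terse) intended argument, and it surfaces an implicit assumption the paper should state.
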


\begin{remark}[Trivial topology is not Booleanity]
The proposition is stagewise.  In general
\(\Sh(\C,J_{\mathrm{triv}})\simeq[\C^{\mathrm{op}},\Set]\), whose internal
logic need not be Boolean.  Ordinary set-based causal semantics is recovered
from the terminal site \(\C=\mathbf 1\), for which
\(\Sh(\mathbf 1)\simeq\Set\).
\end{remark}

\begin{proposition}[Heredity (stability under restriction)]
If $X \perp^{\,j}_U Y \given Z$ and $g:W\to U$ in $\C$,
then $X \perp^{\,j}_W Y \given Z$.
\end{proposition}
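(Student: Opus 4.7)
The plan is to derive heredity directly from the Kripke--Joyal clause for $j$ together with the base-change (stability) axiom of the Grothendieck topology $J$. There is no probabilistic or graph-theoretic content in the step itself: the CI predicate $\varphi_{X\,\indep\,Y\,|\,Z}$ is a property of an object (a global section of the constant presheaf at the $\Omega$-level), so the only real ingredient is that covering sieves pull back to covering sieves.

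First, I would unfold the hypothesis. By definition, $X\perp^{\,j}_U Y\given Z$ means $U\Vdash j(\varphi_{X\,\indep\,Y\,|\,Z})$, i.e.\ there is a $J$-covering sieve $S\in J(U)$ such that for every $f:V\to U$ in $S$ the fiberwise predicate $\varphi_{X\,\indep\,Y\,|\,Z}(V)$ holds (equivalently, $\mathrm{dsep}_{G_V}(X;Y\given Z)$). Next, given $g:W\to U$, I would form the pullback sieve
\[
g^{*}(S) \;=\; \{\,h:V'\to W \;\mid\; g\circ h \in S\,\}.
\]
By axiom~(2) of a Grothendieck topology (stated earlier in the excerpt), $S\in J(U)$ implies $g^{*}(S)\in J(W)$, so $g^{*}(S)$ is a $J$-cover of $W$.

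Then I would check the forcing condition at $W$ along $g^{*}(S)$. For any $h:V'\to W$ in $g^{*}(S)$, the composite $g\circ h:V'\to U$ lies in $S$, hence $V'\Vdash \varphi_{X\,\indep\,Y\,|\,Z}$, i.e.\ $\mathrm{dsep}_{G_{V'}}(X;Y\given Z)$ holds. Thus every chart in the $J$-cover $g^{*}(S)$ of $W$ validates the predicate, which by the Kripke--Joyal clause gives $W \Vdash j(\varphi_{X\,\indep\,Y\,|\,Z})$, i.e.\ $X\perp^{\,j}_W Y\given Z$.

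The only potential obstacle is a notational one: verifying that the fiberwise predicate $\varphi_{X\,\indep\,Y\,|\,Z}$ really behaves as a predicate on objects and not as a formula depending on a generalized element (so that the Kripke--Joyal pullback along $g$ collapses to mere restriction of the covering sieve). This is immediate because $\varphi_{X\,\indep\,Y\,|\,Z}(U)$ is defined purely from $G_U$ via the presheaf $\mathcal G$, with no free variable ranging over $U$; so the heredity is a clean instance of base-change for $J$-covers and requires no further hypothesis. For the path-wise reformulation one would give the parallel argument: a path $\pi$ in $G_U$ that is $j$-closed by $Z$ at $U$ remains $j$-closed at $W$ since the certifying cover pulls back along $g$ through the restriction maps $\mathcal G(g)$.
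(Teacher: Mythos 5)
Your proof is correct. The paper actually states this proposition without giving any proof, so there is nothing to compare against; your argument is the natural one and supplies the missing justification. The two ingredients you identify are exactly right: (i) the stability axiom of a Grothendieck topology, which sends the witnessing sieve $S\in J(U)$ to $g^{*}(S)\in J(W)$, and (ii) the observation that $\varphi_{X\,\indep\,Y\,\given\,Z}$ is a fiberwise predicate on objects (defined from $G_V$ alone via the presheaf $\G$), so that for $h:V'\to W$ in $g^{*}(S)$ the membership $g\circ h\in S$ immediately yields $\mathrm{dsep}_{G_{V'}}(X;Y\given Z)$. Your closing remark about why no genuine Kripke--Joyal pullback of a generalized element is needed is the one point a careless reader might miss, and you handle it correctly.
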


\begin{remark}[No monotonicity in the conditioning set]
Neither ordinary nor $j$-local d-separation is monotone under enlarging the
conditioning set: conditioning on a collider, or one of its descendants, can
open a previously blocked path.  Consequently no such monotonicity property
is used below.
\end{remark}

\begin{theorem}[Local-to-global soundness for compatible causal models]
\label{thm:local-global-soundness}
Suppose $\mathsf P$ is an internal stochastic model (a compatible family $\{P_U\}$)
arising from the compatible SCMs above and is \emph{fiberwise
global-Markov} to $\G$.  Then
\[
  X \perp^{\,j}_U Y \given Z
  \;\Longrightarrow\;
  U \Vdash_J \bigl(X \indep_{\mathsf P} Y \given Z\bigr).
\]
More generally, let $L$ and $R$ be two compatible sections of the sheaf of
causal kernels over $U$.  If $L|_{V_i}=R|_{V_i}$ on every member of a
$J$-cover $\{V_i\to U\}$, then $L=R$ at $U$.
\end{theorem}

\begin{proof}
If $X \perp^{\,j}_U Y \given Z$, choose a $J$-cover $S$ with
$\mathrm{dsep}_{G_V}(X;Y\given Z)$ for all $f:V\to U$ in $S$.
By the fiberwise global-Markov property, $X \indep_{P_V} Y \given Z$ holds for each such $V$.
This is precisely the Kripke--Joyal local clause.  For the second statement,
the restrictions of $L$ and $R$ are matching families with identical local
components.  Uniqueness in the sheaf axiom (equivalently, separatedness)
implies $L=R$.
\end{proof}

\begin{definition}[{\boldmath $j$-faithfulness / $j$-perfect map}]
We say $\mathsf P$ is \emph{$j$-faithful} to $\G$ at $U$ iff
\[
  U \Vdash_J \bigl( X \indep_{\mathsf P} Y \given Z \bigr)
  \quad\Longleftrightarrow\quad
  X \perp^{\,j}_U Y \given Z
  \qquad \text{for all } X,Y,Z\subseteq\V.
\]
\end{definition}

\begin{corollary}[Stagewise and classical reductions]
When $J$ is trivial, $j$-d-separation reduces stagewise to ordinary
d-separation.  When in addition $\C=\mathbf 1$, the definitions and the
fiberwise Markov assumption reduce to their usual set-based forms.
\end{corollary}

\paragraph{Practical reading.}
In applications where $J$ encodes “admissible regimes,”
$X \perp^{\,j}_U Y \given Z$ means:
\emph{there exists a covering family of regimes refining $U$ on which the usual
d-separation (collider/non-collider) checks all pass.}
This is the precise way \(j\) “wraps” the classical rules without changing them.

% ==============================
% j-do-calculus: Rules 1--3
% ==============================

% Minimal macros (safe if already defined)
\providecommand{\indep}{\mathrel{\perp\!\!\!\perp}}
\providecommand{\given}{\,\mid\,}

% do-operator, mutilated graphs
%\newcommand{\doop}[1]{\mathrm{do}(#1)}
\newcommand{\GbarX}{\G^{\overline{X}}}          % delete incoming to X
\newcommand{\GunderZ}{\G^{\underline{Z}}}       % delete outgoing from Z
\newcommand{\GbarZofW}{\G^{\overline{Z(W)}}}    % delete incoming to Z \ Anc(W) in G^{\bar X}

% Internal (fiberwise) distributions; use P_U if you prefer
\newcommand{\PU}{P_U}

\begin{theorem}[{\boldmath \(j\)-Rule 1: insertion/deletion of observations}]
\label{thm:J1}
Fix \(U\in\C\). Let \(\GbarX\) denote the presheaf obtained by deleting all incoming arrows to \(X\).
If
\[
  U \Vdash_J\mathrm{dsep}_{G^{\overline{X}}} (Y;Z \given X,W),
\]
then
\[
  \PU\bigl(y \,\given\, \doop{x},z,w\bigr) \;=\; \PU\bigl(y \,\given\, \doop{x},w\bigr).
\]
\end{theorem}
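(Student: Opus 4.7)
The plan is to unfold the Kripke--Joyal clause in the premise, apply fiberwise global Markov on each chart of the resulting cover, and glue the chartwise equalities back up to $U$ using locality of internal equality in $\Sh_J(\C)$.

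First, I would unpack the premise. By the Kripke--Joyal semantics for $j$, the hypothesis $U \Vdash j(\mathrm{dsep}_{G^{\overline{X}}}(Y; Z \given X, W))$ yields a $J$-covering sieve $S$ on $U$ such that for every $f \colon V \to U$ in $S$, the classical d-separation statement $\mathrm{dsep}_{G_V^{\overline{X}}}(Y; Z \given X, W)$ holds in the mutilated fiber. Here I use that the surgery $\overline{X}$ (cutting incoming arrows into $X$) is stable under the presheaf restrictions $\G(f)$, so that $(\GbarX)_V = (G_V)^{\overline{X}}$ and the premise genuinely passes to each chart. This is exactly the fiberwise payload of the Kripke--Joyal clause.

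Second, I would translate d-separation into CI by invoking the fiberwise global-Markov assumption recorded in the soundness theorem. At each $V$ in the cover, d-separation in $G_V^{\overline{X}}$ gives $Y \indep Z \given X, W$ under the interventional law $P_V$ after $\doop{x}$, because $\doop{x}$ is precisely the kernel-replacement in $\Kl(\Dist_\E)$ that turns $G_V$ into $G_V^{\overline{X}}$ (the content of Section~\ref{subsec:interventions-as-nattrans}). Since CI of the form $Y \indep Z \given X, W$ is internally the assertion that the conditional does not depend on the $Z$-coordinate, the chartwise identity
\[ P_V\bigl(y \given \doop{x}, z, w\bigr) \;=\; P_V\bigl(y \given \doop{x}, w\bigr) \]
holds as an equality of arrows $V \to \Dist(Y)$ in $\Kl(\Dist_\E)$ for every $f \colon V \to U$ in $S$.

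Third, I would glue. Both sides are morphisms in $\Sh_J(\C)$ (arrows in the internal Markov category after the intervention), and they coincide on every $V$ in the covering sieve $S$. Equality of two parallel arrows is an equalizer subobject, hence itself a $j$-sheaf; therefore the Kripke--Joyal clause for equality lifts the chartwise identity to an identity at the stage $U$, which is the conclusion. The main obstacle is not the logical gluing itself but the hidden requirement that the stagewise interventional conditionals $P_V(y \given \doop{x}, \cdot, \cdot)$ assemble into an actual morphism of $\Sh_J(\C)$ rather than a merely pointwise family of numbers. This rests on (i) a disintegration / regular-conditional existence hypothesis that is implicit in the notation $\Pint$, and (ii) compatibility of these conditionals with restriction along $f \colon V \to U$, namely $P_V(\cdot \given \doop{x}, \cdot) = f^{*} P_U(\cdot \given \doop{x}, \cdot)$. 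Once these naturality and disintegration conditions are in force, the remaining ingredients -- stability of the surgery $\overline{X}$ under restriction, the fiberwise global Markov property, and the elementary fact that CI collapses the conditional -- combine by a local-to-global argument along the chosen $J$-cover to yield Rule 1.
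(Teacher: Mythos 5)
Your proposal is correct and follows essentially the same route the paper takes, which it distributes across the Kripke--Joyal unfolding of the $j$-premise into a $J$-cover (as in the proof sketch of the soundness theorem for internal Markov models), the fiberwise global-Markov translation of chartwise d-separation into chartwise conditional independence, the factorization $k = k_0\circ\pi_\Gamma$ collapsing the interventional conditional at each stage (the paper's ``Rule 1 as an internal equality''), and the local character of internal equality to glue back up to $U$. Your explicit flagging of the compatibility condition $P_V(\cdot\given\doop{x},\cdot)=f^{*}P_U(\cdot\given\doop{x},\cdot)$ and the disintegration hypothesis is a welcome sharpening of assumptions the paper leaves implicit in the phrase ``compatible family $\{P_U\}$.''
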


\begin{proof}
Apply Pearl's Rule~1 on every chart of the witnessing cover.  Surgery and
conditioning commute with restriction by assumption, so both sides form
compatible sections.  The equality descends by
Theorem~\ref{thm:local-global-soundness}.
\end{proof}

\begin{theorem}[{\boldmath \(j\)-Rule 2: action/observation exchange}]
\label{thm:J2}
Let \(\G^{\overline{X},\underline{Z}}\) delete incoming to \(X\) and outgoing from \(Z\).
If
\[
  U \Vdash_J\mathrm{dsep}_{G^{\overline{X},\underline{Z}}} (Y;Z \given X,W),
\]
then
\[
  \PU\bigl(y \,\given\, \doop{x},\doop{z},w\bigr) \;=\; \PU\bigl(y \,\given\, \doop{x},z,w\bigr).
\]
\end{theorem}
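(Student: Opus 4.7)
The plan is to reduce the internal identity at $U$ to a family of chartwise classical identities that then glue by the sheaf property, mirroring the soundness proof given for the Markov case. First, I unpack the Kripke--Joyal clause for $j$: the premise
$U \Vdash j(\mathrm{dsep}_{G^{\overline{X},\underline{Z}}}(Y;Z\given X,W))$
unwinds to the existence of a $J$-covering sieve $S$ on $U$ such that for every $f:V\to U$ in $S$, the pulled-back graph $G_V^{\overline{X},\underline{Z}}$ satisfies the ordinary d-separation $Y\perp Z \given X,W$. This is the external/chartwise data we will exploit.

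Next, I invoke the fiberwise Markov/faithfulness assumption on the internal model $\{P_V\}$ that was standing behind the soundness theorem: on each chart $V$, d-separation in the mutilated graph $G_V^{\overline{X},\underline{Z}}$ entails the corresponding conditional independence in $P_V$. I then apply the classical Pearl Rule~2 in the ordinary (Boolean) fiber $V$: since the outgoing edges from $Z$ are cut, any observation of $Z$ carries the same information as an intervention on $Z$ under the given conditioning, yielding
\[
  P_V\bigl(y \given \doop{x}, \doop{z}, w\bigr) \;=\; P_V\bigl(y \given \doop{x}, z, w\bigr)
\]
as an equality of stochastic morphisms in the classical Markov category of the fiber. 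Concretely, this step uses the comprehension/normalization picture of Table~\ref{tab:classical-vs-tcm-compact}: in $G_V^{\overline{X},\underline{Z}}$ the surgery isolates $Z$ as a root without descendants on the $Y$-relevant paths, so replacing the kernel $k_Z$ by $\delta_z$ (do-semantics) agrees, after conditioning on $\{Z=z\}$, with the observational conditional.

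The third step is gluing. The two internal arrows
$\bigl(y\mapsto \Pint(y\given \doop{x},\doop{z},w)\bigr)$
and
$\bigl(y\mapsto \Pint(y\given \doop{x},z,w)\bigr)$
are morphisms in $\Kl(\Dist_{\Topos})$ evaluated at stage $U$. By the stagewise equality clause of Kripke--Joyal semantics (Section~\ref{subsec:kripke-joyal}, Figure~\ref{fig:kj-forcing-diagram}(b)) and the local character of the forcing relation, an equation of internal arrows that holds after pullback along every $f:V\to U$ in a $J$-covering sieve holds at $U$. Since the chartwise identities established above hold precisely on such a cover, they glue to the desired internal equality at $U$.

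\paragraph{Expected main obstacle.}
The conceptual steps are straightforward; the technical friction is making rigorous that Pearl's Rule~2 applied chartwise yields an \emph{equality of internal Kleisli morphisms} (not merely pointwise equality of probability numbers), and that equality of such morphisms is itself a $j$-local predicate. The cleanest way is to encode both sides as subobjects of $\Dist_{\Topos}(Y)^{\Gamma}$ via their graphs and use that the $j$-closure of the equalizer is the internal equality predicate, so chartwise coincidence on a $J$-cover forces global coincidence at $U$. A secondary subtlety is ensuring that the pullback of the mutilation operations $\overline{X}$ and $\underline{Z}$ through $\G$ commutes with restriction along $f:V\to U$, so that the chartwise surgery $G_V^{\overline{X},\underline{Z}}$ really is what the forcing clause delivers; this is a functoriality check on the presheaf $\G$ that should be stated as a standing hypothesis.
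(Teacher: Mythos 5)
Your proposal is correct and follows essentially the same route the paper takes: unwind the Kripke--Joyal clause to get chartwise d-separation on a $J$-cover, apply the fiberwise Markov property and the classical Rule~2 in each fiber, and glue the resulting equalities back to stage $U$ by local character — which is exactly the ``premises hold locally on a cover, conclusions glue'' soundness sketch the paper gives (the paper never writes out a detailed proof of this particular theorem beyond that sketch and the internal factorization $k = k_0\circ\pi_\Gamma$ argument in the later TCM sections). Your two flagged obstacles (internal equality as a $j$-local predicate, and compatibility of the graph surgeries with restriction) are genuine gaps in the paper's own treatment that you correctly identify rather than defects in your argument.
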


\begin{proof}
Apply Pearl's Rule~2 chartwise and descend the resulting compatible kernel
equality as in the preceding proof.
\end{proof}

\begin{theorem}[{\boldmath \(j\)-Rule 3: insertion/deletion of actions}]
\label{thm:J3}
Let \(Z(W)\subseteq Z\) be those \(z\in Z\) that are \emph{not} ancestors of any node in \(W\) in \(G^{\overline X}\),
and let \(\GbarZofW\) delete incoming arrows to the nodes in \(Z(W)\) (in \(G^{\overline X}\)).
If
\[
  U \Vdash_J\mathrm{dsep}_{G^{\overline{X},\overline{Z(W)}}}(Y;Z \given X,W),
\]
then
\[
  \PU\bigl(y \,\given\, \doop{x},\doop{z},w\bigr) \;=\; \PU\bigl(y \,\given\, \doop{x},w\bigr).
\]
\end{theorem}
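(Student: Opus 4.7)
The plan is to follow the same three-step template used for $j$-Rules 1 and 2: unfold the Kripke--Joyal semantics of the $j$-premise into a chartwise d-separation hypothesis on a $J$-cover; apply classical Pearl Rule 3 fiberwise on each chart of that cover; and glue the resulting chartwise probability equalities into a single internal equality at stage $U$ via the sheaf/forcing semantics.

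First, I would unpack the forcing. By the Kripke--Joyal clause for the Lawvere--Tierney topology $j$, the hypothesis $U \Vdash j\bigl(\mathrm{dsep}_{G^{\overline{X},\overline{Z(W)}}}(Y;Z \given X,W)\bigr)$ supplies a $J$-covering sieve $S \in J(U)$ such that for every $f:V\to U$ in $S$, the pulled-back fiber satisfies the classical (Pearl) d-separation
\[
\mathrm{dsep}_{G_V^{\overline{X},\overline{Z(W)}}}(Y;Z \given X,W).
\]
Here I use functoriality of the graph presheaf $\G:\C^{\mathrm{op}}\to\mathbf{DAG}_{\V}$ to transport the surgeries $\overline{X}$ and $\overline{Z(W)}$ along $f$. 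Second, on each $V$ in $S$, the fiberwise model $(G_V, P_V)$ is assumed global-Markov (as in the Soundness Theorem above), so Pearl's classical Rule 3 applies verbatim and yields the chartwise identity
\[
P_V\bigl(y \given \doop{x},\doop{z},w\bigr) \;=\; P_V\bigl(y \given \doop{x},w\bigr).
\]
Third, I would glue. The two sides are evaluations, at stage $V$, of co-Kleisli morphisms $\Gamma \to \Dist_{\Topos}(Y)$ in the internal Markov category $\Kl(\Dist_{\Topos})$. Equality of such morphisms is an equalizer, hence a $j$-closed subobject of the hom-object; equivalently, the equality predicate is a sheaf-valued formula whose truth is local. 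Chartwise agreement on the $J$-covering sieve $S$ therefore implies, by the local-character clause of Kripke--Joyal semantics (or equivalently by the gluing axiom in $\Sh_J(\C)$), that the equality already holds at stage $U$, giving the desired internal identity.

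The hard part will be the compatibility between graph surgery and presheaf restriction. Concretely, one needs: (i) that performing $\overline{X},\overline{Z(W)}$ at $U$ and then restricting along $f$ yields the same DAG as restricting first and then surgically cutting in $G_V$; and (ii) that the ancestral set $Z(W)$, which is defined by ancestor relations in $G^{\overline{X}}$, behaves functorially enough (at least, that its pullback is contained in the locally computed $Z(W)$ on each chart), so that the $j$-premise really licenses the classical Rule 3 hypothesis fiberwise. A parallel subtlety is that $\mathsf{P}^{\mathrm{int}}$ must commute with pullback along $J$-covers, i.e. the $\Do$-construction on co-Kleisli morphisms must be stable under restriction $V\to U$. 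Both points are consequences of the standing assumption that interventions are structure-preserving morphisms stable along $J$-covers, and that $\Do_Z(\cdot;\mu)$ is natural in the ambient context; the technical content of the proof is to verify these naturality/stability properties carefully so that Steps~1--2 are not vacuous and Step~3 actually glues on the nose rather than merely up to $j$-closure.
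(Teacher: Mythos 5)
Your proposal follows essentially the same route as the paper: the paper's (very brief) justification for $j$-Rule 3 is exactly to unfold the Kripke--Joyal clause into a chartwise d-separation premise on a $J$-covering sieve, invoke the classical rule fiberwise under the fiberwise global-Markov assumption, and conclude by locality/gluing --- precisely your three steps. If anything, your discussion of the surgery-versus-restriction compatibility, the functoriality of $Z(W)$, and the stability of $\mathsf{P}^{\mathrm{int}}$ under pullback is more explicit than the paper, which leaves these points implicit in its one-paragraph ``Reading'' remark and soundness sketch.
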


\begin{proof}
Apply Pearl's Rule~3 chartwise.  Compatibility on overlaps and
separatedness give the asserted equality at \(U\).
\end{proof}

\paragraph{Reading.}
Replace each classical d-separation premise by its \(j\)-d-separation version evaluated by the Kripke–Joyal clause:
\(\;U \Vdash_J(\cdot)\;\) means the premise holds on some \(J\)-cover of \(U\).
The causal rule remains Pearl's rule; the new theorem is that its compatible
chartwise conclusions descend.

To illustrate the above abstract definitions, we now present three examples of $j$-stability. To reiterate the definition of cover, the main definition that is useful to remember is the following: 

\begin{definition}
    \emph{Cover schema for $j$-separation.} A family
    $\{S_i\to U\}$ witnesses $X\indep^j_UY\mid Z$ only when it is
    $J$-covering and, on \emph{every} chart $S_i$, every restricted path
    from $X$ to $Y$ is blocked by $Z$ according to ordinary d-separation.
    Different paths may be blocked for different reasons, but checking
    different path fragments on unrelated charts is insufficient.
\end{definition}

\paragraph*{Cover used in Figure~\ref{fig:j-closure-schematic}}
Fix the ambient site $U$ that contains the whole path from $X$ to $Y$.
Let $\mathrm{Col}$ be the set of collider vertices on the path and
let $Z$ be the intended conditioning set.  
One possible witnessing cover $S=\{S_Z\}\cup\{S_v : v\in\mathrm{Col}\}$
must satisfy:
\begin{itemize}\setlength{\itemsep}{2pt}
\item[\(\triangleright\)] \emph{Observational chart:}
  $S_Z\to U$ is a chart on which the variables in $Z$ are measurable and
  admissible and the entire restricted path is blocked.
\item[\(\triangleright\)] \emph{Collider charts:}
  for each collider $v\in\mathrm{Col}$ we include a chart $S_v\to U$ that locally closes the collider.  Concretely, either  
  (i) $S_v$ forbids conditioning on $v$ or any descendant of $v$ (so the collider remains closed), or  
  (ii) $S_v$ is an interventional chart (e.g. cutting incoming arrows to $v$) that breaks the collider backdoor.  
  In both cases the entire restricted path, together with every alternative
  path from $X$ to $Y$, must be blocked in $S_v$.  Only then does
  Kripke--Joyal forcing yield $X \indep^j_U Y \mid Z$.
\end{itemize}

\paragraph*{Cover used in Figure~\ref{fig:collider-descendant-not-jstable}}
Here we intentionally take the \emph{trivial} cover $\{U\to U\}$,
where $U$ permits conditioning on the descendant $D$ of the collider.  
In $U$ the collider is opened by $D$, so $X \nindep Y \mid D$ holds in $U$ and there is no $j$-stability witness.  This panel illustrates the “problematic” ambient view before refinement.

\paragraph*{Cover used in Figure~\ref{fig:jclosure-restores-blocking}}
We refine $U$ by a two-chart $J$-cover $S=\{S_{\mathrm{int}},S_{\mathrm{obs}}\}$:
\begin{itemize}\setlength{\itemsep}{2pt}
\item[\(\triangleright\)] \emph{Interventional chart} $S_{\mathrm{int}}\hookrightarrow U$:
  disable the link $C\to D$ (e.g.\ an intervention on $C$ or $D$ that cuts $C\to D$).  
  Then conditioning on $D$ \emph{does not} open the collider, so $X \indep Y \mid D$ holds in $S_{\mathrm{int}}$.
\item[\(\triangleright\)] \emph{Observational chart} $S_{\mathrm{obs}}\hookrightarrow U$:
  the $\sigma$–algebra admits $D$ but forbids conditioning on $D$ or any descendant of the collider $C$; the collider remains closed, so $X \indep Y$ holds regardless of $D$.
\end{itemize}
These charts jointly cover the ambiguous situation in $U$, so by $J$-closure the sequent
$X \indep^j_U Y \mid D$ is forced.  Intuitively: we glue two legitimate ways of blocking
(the interventional “cut” and the observational “don’t condition on descendants”),
and $J$ authorizes this family as covering.

% Preamble (once)
% \usepackage{xcolor}
% \usepackage{tikz}
% \usetikzlibrary{arrows.meta,positioning,calc,decorations.pathreplacing,shapes.geometric}

% -------- j-closure figure (spacious layout) --------
\tikzset{
  >={Stealth[length=1.8mm]},
  vnode/.style = {circle, draw, inner sep=1.4pt, minimum size=12pt},
  xnode/.style = {vnode, fill=blue!18},
  ynode/.style = {vnode, fill=red!18},
  znode/.style = {vnode, fill=green!20},
  colnode/.style = {diamond, aspect=2, draw, inner sep=1pt, fill=orange!25},
  panel/.style = {rounded corners=2pt, dashed, draw=gray!65, line width=0.7pt}
}

\begin{figure}[t]
\centering
\begin{tikzpicture}[x=2.8cm,y=2.5cm]  % <— widen spacing here if needed

  % --- panels (V1 | U | V2) ---
  \node[panel, minimum width=6.6cm, minimum height=3.6cm,
        label={[gray!60]above:$V_1$}] (P1) at (-6.0,0) {};
  \node[panel, minimum width=7.2cm, minimum height=4.0cm,
        label={[gray!60]above:$U$}]  (PU) at ( -1.0,0) {};
  \node[panel, minimum width=6.6cm, minimum height=3.6cm,
        label={[gray!60]above:$V_2$}] (P2) at ( 6.0,0) {};

  % ========= Middle panel U (ambiguous path) =========
  \node[xnode]   (XU) at (-2.2, 0.25) {$X$};
  \node[znode]   (ZU) at (-0.7, 0.25) {$Z$};
  \node[ynode]   (YU) at ( 2.2, 0.25) {$Y$};
  \node[colnode] (CU) at ( 0.7,-0.80) {};
  \draw[->,thick] (XU) -- (ZU);
  \draw[->,thick] (ZU) -- (YU);
  \draw[->,thick] (XU) -- (CU);
  \draw[->,thick] (YU) -- (CU);
  \node[below=2pt of CU, black!65] {\scriptsize collider};
  \node[black!60] at (0,1.2) {\scriptsize $Z$ in conditioning set?};

  % ========= Left panel V1 (blocked by Z) =========
  \node[xnode]   (X1) at (-8.2, 0.25) {$X$};
  \node[znode]   (Z1) at (-6.7, 0.25) {$Z$};
  \node[ynode]   (Y1) at (-5.2, 0.25) {$Y$};
  \node[colnode] (C1) at (-6.7,-0.90) {};
  \draw[->,thick] (X1) -- (Z1);
  \draw[->,thick] (Z1) -- (Y1);
  \draw[->,thick] (X1) -- (C1);
  \draw[->,thick] (Y1) -- (C1);
  \draw[decorate,decoration={brace,amplitude=5pt},gray!65]
    ($(X1)+(0,0.55)$) -- node[above,yshift=4pt,black!60]{\scriptsize blocked by $Z$}
    ($(Y1)+(0,0.55)$);

  % ========= Right panel V2 (collider closed) =========
  \node[xnode]   (X2) at ( 4.2, 0.25) {$X$};
  \node[vnode]   (M2) at ( 5.7, 0.25) {$M$};
  \node[ynode]   (Y2) at ( 7.2, 0.25) {$Y$};
  \node[colnode] (C2) at ( 5.7,-0.90) {};
  \draw[->,thick] (X2) -- (M2);
  \draw[->,thick] (M2) -- (Y2);
  \draw[->,thick] (X2) -- (C2);
  \draw[->,thick] (Y2) -- (C2);
  \node[below=2pt of C2, black!65] {\scriptsize collider closed};

\draw[->, very thick, black!40]
  (P1.east) to[bend right=-12]
  node[midway, below=5pt, fill=white, inner sep=1pt]{\scriptsize $f_1:V_1\!\to\!U$}
  (PU.west);

\draw[->, very thick, black!40]
  (P2.west) to[bend left=-12]
  node[midway, below=5pt, fill=white, inner sep=1pt]{\scriptsize $f_2:V_2\!\to\!U$}
  (PU.east);

  % Cover arrows (just visual hints)
% \draw[->, very thick, gray!60] (P1.east) -- node[above] {\scriptsize %$f_1:V_1\!\to\!U$} (PU.west);
%  \draw[->, very thick, gray!60] (P2.west) -- node[above] {\scriptsize %$f_2:V_2\!\to\!U$} (PU.east);
\end{tikzpicture}
\caption{\textbf{$j$-closure (schematic).} A path ambiguous in $U$ becomes
blocked on a $J$-cover $S$, hence $X \perp^{\,j}_{U} Y \mid Z$.}
\label{fig:j-closure-schematic}
\end{figure}
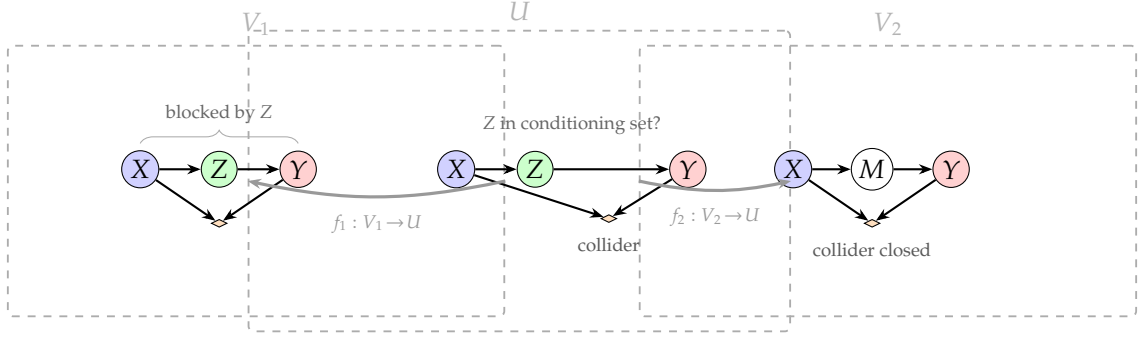

% Requires: \usepackage{tikz}
% \usetikzlibrary{arrows.meta,positioning,calc,decorations.pathreplacing,shapes.geometric}
\tikzset{
  >={Stealth[length=1.8mm]},
  vnode/.style = {circle, draw, inner sep=1.4pt, minimum size=12pt},
  xnode/.style = {vnode, fill=blue!18},
  ynode/.style = {vnode, fill=red!18},
  znode/.style = {vnode, fill=green!20},
  hnode/.style = {vnode, fill=gray!20},
  colnode/.style = {diamond, aspect=2, draw, inner sep=1pt, fill=orange!25},
  panel/.style = {rounded corners=2pt, dashed, draw=gray!65, line width=0.6pt}
}

\begin{figure}[t]
\centering
\begin{tikzpicture}[x=3.5cm,y=3.5cm]
  % ambient U
  \node[panel, minimum width=8.5cm, minimum height=4.2cm,
        label={[black]above:$U$}] (PU) at (0,0) {};

  % collider and descendant
  \node[xnode]   (X)  at (-2.6,0.2) {$X$};
  \node[ynode]   (Y)  at ( 2.6,0.2) {$Y$};
  \node[colnode] (C)  at ( 0.0,0.2) {};
  \node[znode]   (D)  at ( 0.0,-1.2) {$D$};

  \draw[->,thick] (X) -- (C);
  \draw[->,thick] (Y) -- (C);
  \draw[->,thick] (C) -- (D);

 % \node[black] at (0,0.9) {\small collider};
  \node[black] at (0,-1.8) {\small conditioning on descendant $D$};
  \draw[-, ultra thick] ($(X)+(-0.1,0.6)$) -- node[above] {\small $X \nindep Y \given D$ in $U$} ($(Y)+(0.1,0.6)$);
\end{tikzpicture}
\caption{\textbf{Collider opened by conditioning on a descendant.}
In the ambient site \(U\), conditioning on a descendant \(D\) opens the collider, so \(X \nindep Y \given D\). This configuration is \emph{not} \(j\)-stable unless the \(J\)-cover removes (or closes) the path.}
\label{fig:collider-descendant-not-jstable}
\end{figure}
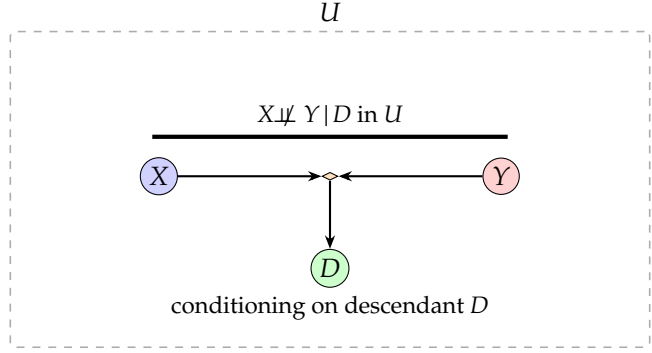

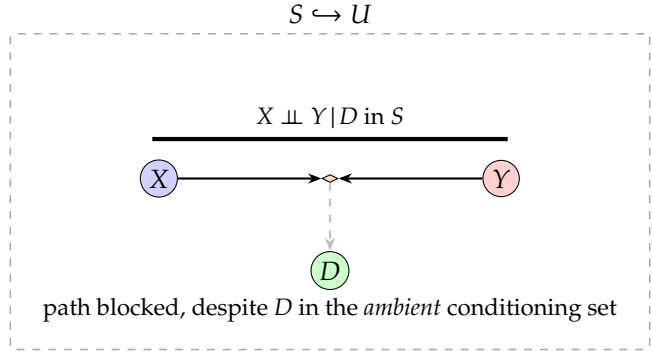
\begin{figure}[t]
\centering
\begin{tikzpicture}[x=3.5cm,y=3.5cm]
  % cover S (a subobject of U)
  \node[panel, minimum width=8.5cm, minimum height=4.2cm,
        label={[black]above:$S \hookrightarrow U$}] (PS) at (0,0) {};

  % same graph, but the cover "forgets" the arrow C->D (or excludes D)
  \node[xnode]   (X)  at (-2.6,0.2) {$X$};
  \node[ynode]   (Y)  at ( 2.6,0.2) {$Y$};
  \node[colnode] (C)  at ( 0.0,0.2) {};
  \node[znode]   (D)  at ( 0.0,-1.2) {$D$};

  \draw[->,thick] (X) -- (C);
  \draw[->,thick] (Y) -- (C);
  % "closed" collider-to-descendant: draw faint or dashed to indicate removed/blocked in S
  \draw[->,thick,gray!50,dashed] (C) -- (D);

 % \node[black] at (0,0.9) {\small collider closed in $S$};
  \node[black] at (0,-1.8) {\small path blocked, despite $D$ in the \emph{ambient} conditioning set};
  \draw[-, ultra thick] ($(X)+(-0.1,0.6)$) -- node[above] {\small $X \indep Y \given D \ \text{in}\ S$} ($(Y)+(0.1,0.6)$);
\end{tikzpicture}
\caption{\textbf{\(j\)-closure restores blocking.} On a \(J\)-cover \(S\), the offending link is removed/closed, so the collider path is blocked and \(X \indep Y \given D\) holds \emph{in} \(S\). Hence the CI is \(j\)-stable at \(U\).}
\label{fig:jclosure-restores-blocking}
\end{figure}

\paragraph{Worked example (backdoor vs.\ \(j\)-closure).}
Let \(U\) contain the graph \(U \to X \to Y\) and \(U \to Y\).
Classically, \(X \nindep Y\) but \(X \indep Y \given U\) by backdoor blocking.
Suppose we cannot condition on \(U\) everywhere, but our \(J\)-cover
\(S = \{\, S_1 \hookrightarrow U,\, S_2 \hookrightarrow U \,\}\) satisfies:
on \(S_1\) we can condition on (a proxy for) \(U\); on \(S_2\) we have an
intervention \(do(X)\) (i.e., the map \(S_2 \to U\) factors through the subtopos
where incoming arrows to \(X\) are deleted).
Then on both \(S_i\) the backdoor path is blocked, so \(X \indep Y \given Z_i\)
holds locally.
By \(j\)-stability, the sequent \(X \indep Y \given Z\) (with \(Z\) interpreting
the local data \((Z_1,Z_2)\)) holds in the \(J\)-closure of \(U\).
Intuitively, \(J\) collects the local “ways we can block the path”
(backdoor in \(S_1\), intervention in \(S_2\)) into a single global judgment.

\section{Sheaf Localization and the Limited Role of Kan Extensions}
\label{sec:sheaf-localization}

The soundness of the $j$-do rules above uses restriction, chartwise Pearl
semantics, and uniqueness of sheaf gluing.  It does not require a duality
between conditioning and intervention.  This section records the categorical
universal properties that are available without making that identification.

\paragraph{Sheafification.}
For a site $(\C,J)$, the inclusion of sheaves into presheaves has a left exact
left adjoint
\[
 a_J:[\C^{\mathrm{op}},\Set]\rightleftarrows
 \Sh_J(\C):i_J .
\]
Thus $\Sh_J(\C)$ is a reflective localization of the presheaf topos.
The functor $a_J$ imposes descent; it does not, by itself, define either a
causal intervention or a conditional probability.

\paragraph{Probability compatibility is an assumption.}
Let $T$ be a specified commutative probability monad on the ambient
presheaf topos.  To transport stochastic kernels through sheafification one
must additionally supply a monad $T_J$ on $\Sh_J(\C)$ and a coherent
isomorphism
\[
 a_JT\cong T_Ja_J
\]
compatible with the units and multiplications.  Equivalently, one may impose
conditions ensuring that $T$ preserves the relevant $J$-local equivalences
and sheaf objects.  Commutativity of $T$ and left exactness of $a_J$ alone do
not imply a unique such lift.  Our finite chartwise results avoid this issue
by assuming directly that kernels and their restrictions form sheaves.

\paragraph{The genuine free-cocompletion theorem.}
Let $\mathcal S$ be small and let
$y:\mathcal S\to[\mathcal S^{\mathrm{op}},\Set]$ be the Yoneda embedding.
For every cocomplete category $\mathcal D$, restriction along $y$ induces the
standard equivalence
\[
 \operatorname{Cocont}\bigl([\mathcal S^{\mathrm{op}},\Set],\mathcal D\bigr)
 \simeq [\mathcal S,\mathcal D],
\]
whose inverse sends $F:\mathcal S\to\mathcal D$ to
$\operatorname{Lan}_yF$.  This statement depends on the presheaf category
being the free cocompletion.  It does not hold for an arbitrary fully
faithful embedding into an arbitrary cocomplete category.  Passing to
$J$-sheaves further restricts the admissible semantics to those respecting
the $J$-descent relations.

\paragraph{Conditioning versus intervention.}
Van Belle's probability-theoretic result constructs the random-variable
functor on general probability spaces as a right Kan extension of its finite
counterpart; conditional expectation appears as its action on
measure-preserving maps \citep{vanbelle2023kan}.  This is a precise right-Kan
statement about probability functors.  By contrast, the intervention used in
this paper replaces a structural mechanism or stochastic kernel.  Ordinary
pushforward transports a fixed law and is not, in general, a causal
intervention.  We therefore make no claim that conditioning and intervention
form a left/right Kan duality.  A future base-change or Beck--Chevalley
comparison would require a separately defined indexed category of causal
models, existence of the relevant adjoints, and explicit exactness
hypotheses.

\section{Kernel Semantics and Worked Derivations}
\label{sec:do-tcm-translation}

Theorems~\ref{thm:J1}--\ref{thm:J3} give the sole formal statement of the
three rules.  We now make their finite-kernel semantics concrete and work
through representative derivations.  Observation and intervention remain
different operations: observation restricts and normalizes a law, whereas
intervention replaces a causal kernel and composes the modified model.

\begin{example}[Running example: regime-aware chain]
Let $\C$ have objects $e_0,e_1,e_2$ (regimes) covering a generic
stage $u$; write $e_i \to u \in \J(u)$. Consider internal variables
$X,Y,Z\in\Topos$ with stagewise edges
\[
  e_0 : X \to Y \to Z,\qquad
  e_1 : X \to Y,\; X \to Z,\qquad
  e_2 : X \to Y,\; Y \to Z.
\]
Then $X \indep Z \given Y$ holds at $e_0$ and $e_2$ but not at $e_1$.
Because $\{e_i\to u\}$ is a $\J$-cover, $X \indep Z \given Y$ is
$J$-stable at $u$ iff it holds on a covering family; here it fails
since it is refuted at $e_1$. Applying Theorem~\ref{thm:J2} at $e_0$ and $e_2$
but not at $u$ explains why identification that is valid in some regimes
need not be globally valid without $j$-stability.
\end{example}

\begin{figure}[t]
\centering
\begin{tikzpicture}[>=Latex, node distance=2.1cm]
  \tikzstyle{v}=[circle, draw, minimum size=18pt, inner sep=0pt]
  \node[v] (X) {$X$};
  \node[v, right=of X] (Y) {$Y$};
  \node[v, right=of Y] (Z) {$Z$};

  % e0 (top arc Y->Z only)
  \draw[->, thick, blue!70] (X) -- (Y);
  \draw[->, thick, blue!70, bend left=20] (Y) to (Z);

  % e1 (straight X->Z)
  \draw[->, thick, red!70, bend right=18] (X) to (Z);

  % e2 (straight Y->Z)
  \draw[->, thick, green!70] (Y) -- (Z);
\end{tikzpicture}
\caption{Regime-wise edges (colors) overlaid at a generic stage. 
$j$-stability demands that CI equalities hold on a cover; here
$X \indep Z \given Y$ fails globally because it fails in $e_1$ (red).}
\end{figure}
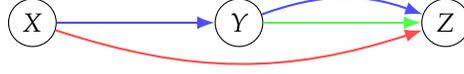

\subsection{Translation into TCM Language}
\label{sec:j-do-calculus}

Within a Topos Causal Model (TCM), variables are represented by objects
$X,Y,Z,W$ of the ambient topos~$\E$, and causal mechanisms by
stochastic morphisms (arrows in the Kleisli category $\Kl(\Dist_\E)$):
\[
P(Y\mid\mathrm{Pa}(Y)):\mathrm{Pa}(Y)\to \Dist_\E(Y).
\]
Interventions are represented by replacement morphisms for selected
mechanisms (with the resulting cut model recorded as part of the causal
structure), and observations by comprehension
subobjects (cf.~Lemmas~\ref{lem:obs-comprehension}
and~\ref{lem:do-kernel-replace}).

For the context $\Gamma:=X\times W$, let
\[
k:\Gamma\times Z\to\Dist_\E(Y)
\qquad\text{and}\qquad
k_0:\Gamma\to\Dist_\E(Y)
\]
denote the kernels corresponding respectively to
$P(Y\mid \Gamma,Z)$ and $P(Y\mid \Gamma)$ in the intervened model
$M_{\overline Z}$.
Conditional independence $Y\perp Z\mid \Gamma$
is expressed internally as the factorization
\begin{equation}\label{eq:indep-tcm}
k = k_0\circ\pi_\Gamma:\Gamma\times Z\to\Dist_\E(Y).
\end{equation}

\paragraph{Kernel irrelevance calculation.}
In the internal language of $\E$, independence~\eqref{eq:indep-tcm}
implies that for any intervention
$\mu:\Gamma\to\Dist_\E(Z)$,
\[
\Gamma\vdash
\int_Z k(\gamma,z)\,d\mu(\gamma)(z) = k_0(\gamma).
\]
Hence replacing the \(Z\)-kernel cannot affect \(Y\) when the structural
kernel is independent of \(Z\).
This equality is verified stagewise by Kripke–Joyal forcing:
at each stage $\alpha:N\to\Gamma$,
\[
\int_Z k(\alpha,z)\,d\mu(\alpha)(z)
  = \int_Z k_0(\alpha)\,d\mu(\alpha)(z)
  = k_0(\alpha).
\]

\paragraph{Relation to the three rules.}
The calculation above is a semantic lemma used after the appropriate
mutilated-graph premise has been established.  It is not by itself any one
of Pearl's three rules.  Rule~1 compares two observational conditionals in
\(G_{\bar X}\); Rule~2 compares observing and intervening on \(Z\) in
\(G_{\bar X,\underline Z}\); Rule~3 removes an intervention in
\(G_{\bar X,\overline{Z(W)}}\).  Theorems~\ref{thm:J1}--\ref{thm:J3} retain
these distinct surgeries.

\bigskip

\paragraph{Summary.}
Under the hypotheses of the local-to-global theorem, each chartwise rule
becomes an internal equality of stochastic arrows.  A useful special case is
\[
(Y\perp Z\mid\Gamma)\;\Rightarrow\;
\big(P(Y\mid do(Z),\Gamma)=P(Y\mid\Gamma)\big).
\]
The graphical premise remains essential; the equality is not a
Heyting-algebra tautology and does not follow from topos logic alone.

% ===== add in preamble if not already present =====
% \usepackage{tikz}
% \usepackage{tikz-cd}
% \usetikzlibrary{arrows.meta,positioning}

\subsection{Do-Calculus as Internal Equalities in a TCM}

\paragraph{Setup.}
Let $\mathcal{E}$ be a (Markov) topos supporting a distribution monad $\Dist$ on objects and stochastic morphisms in the Kleisli category. 
A Topos Causal Model (TCM) object $M$ specifies, for variables $X,Y,Z,W$, kernels
\[
P(X),\quad P(Y\mid \cdot),\quad P(Z\mid \cdot),\quad P(W\mid \cdot)
\]
factoring the joint in the usual causal way. Interventions replace selected
kernels by chosen deltas or stochastic policies. Cutting all incoming arrows
to $Z$ yields the \emph{mutilated object} $M_{\overline Z}$.

Throughout, write $\Gamma := X\times W$ for context. Internally (Mitchell–Bénabou language), a conditional $P(Y\mid \Gamma,Z)$ is a kernel
\[
k \;:\; \Gamma \times Z \longrightarrow \Dist(Y),
\]
while the marginal conditional $P(Y\mid \Gamma)$ is
\[
k_0 \;:\; \Gamma \longrightarrow \Dist(Y).
\]
Independence $Y \perp Z \mid \Gamma$ in $M_{\overline Z}$ is the internal factorization
\begin{equation}\label{eq:factor}
k \;=\; k_0 \circ \pi_\Gamma \quad:\; \Gamma \times Z \to \Dist(Y),
\end{equation}
i.e. $k$ ignores its $Z$-argument.

\paragraph{Interventions as integration.}
Given an intervention $do(Z\sim\mu_Z)$ with $\mu_Z:\mathbf{1}\to\Dist(Z)$ (delta for $do(Z=z)$), define the interventional conditional by pushforward
\begin{equation}\label{eq:push}
E_Z(k) \;:=\; \int_Z k(\,\cdot\,,z)\, d\mu_Z(z) \;:\; \Gamma \longrightarrow \Dist(Y).
\end{equation}

\begin{proposition}[Kernel irrelevance under mechanism replacement]
In $M_{\overline Z}$, if $Y \perp Z \mid \Gamma$ (i.e.\ \eqref{eq:factor} holds), then for every intervention $do(Z\sim\mu_Z)$,
\[
\Gamma \;\vdash\; E_Z(k) \;=\; k_0,
\]
hence replacing the \(Z\)-kernel does not change the resulting \(Y\)-kernel.
\end{proposition}

\begin{proof}[Kripke--Joyal proof]
Let $\alpha:N\to\Gamma$ be an arbitrary stage. We must show equality of the two arrows $N\to\Dist(Y)$ obtained by precomposition with $\alpha$.
Using \eqref{eq:push} and \eqref{eq:factor}:
\[
(E_Z(k)\circ \alpha)
\;=\; \int_Z k(\alpha,z)\, d\mu_Z(z)
\;=\; \int_Z (k_0\circ\pi_\Gamma)(\alpha,z)\, d\mu_Z(z)
\;=\; \int_Z k_0(\alpha)\, d\mu_Z(z)
\;=\; k_0(\alpha).
\]
Since this holds for all $\alpha$ and all $\mu_Z$, the internal equality is forced.
\end{proof}

\paragraph{Diagrammatic view.}
The factorization $k=k_0\circ\pi_\Gamma$ is equivalently a pullback/factorization condition:
\[
\begin{tikzcd}[column sep=large]
\Gamma\times Z \ar[rr,"k"] \ar[dr,swap,"\pi_\Gamma"] & & \Dist(Y) \\
& \Gamma \ar[ur,swap,"k_0"] &
\end{tikzcd}
\quad\text{(commutes in } \mathcal{E}\text{).}
\]
The intervention integrates along $\pi_Z:\Gamma\times Z\to Z$, which is vacuous under the factorization.

\medskip

\begin{corollary}[Observation and replacement under kernel irrelevance]
In $M_{\overline{Z(W)}}$, if $Y \perp Z \mid X,W$ then
\[
P(Y\mid do(Z),do(W),X) \;=\; P(Y\mid do(W),X).
\]
\end{corollary}

\begin{proof}[Sketch]
Work in context $\Gamma:=X$ inside $M_{\overline{Z(W)}}$. Independence gives $k(\gamma,z)=k_0(\gamma)$.
Observation of $Z$ corresponds to conditioning via the comprehension subobject; intervention corresponds to integrating against an arbitrary $\mu_Z$.
Both operations erase the $z$-argument of $k$ by factorization, hence yield
$k_0$, provided the observational normalizing mass is positive.  This is an
irrelevance corollary in the specified cut model, not a replacement for the
full action/observation exchange theorem.
\end{proof}

\paragraph{Remarks.}
(i) Rule 3 is analogous: when the cut graph makes $Z$ causally irrelevant for $Y$ given $(X,W)$, replacing the $Z$-kernel is isomorphic to the identity on conditionals. 
(ii) These equalities live in the Heyting algebra of subobjects of the TCM; they are internal entailments rather than external assumptions.

\paragraph{Standing setting.}
Work internally in a (presheaf) topos $\E$ with the pointwise finite-support distribution monad $\Dist_\E$ (cf.\ main text).
Objects $X\xto{k}\Dist(Y)$ are \emph{stochastic kernels} (arrows of $\Kl(\Dist_\E)$).
Integration/pushforward along a kernel $k:\Gamma\times Z\to\Dist(Y)$ against a
state $\mu:\Gamma\to\Dist(Z)$ is written
\[
\int\nolimits_Z k(\gamma,z)\,d\mu(\gamma)(z) \;:\; \Gamma \to \Dist(Y),
\]
defined objectwise (finite sums) in the presheaf case.

\medskip

\begin{lemma}[Observation = comprehension subobject + normalization]
\label{lem:obs-comprehension}
Let $\Gamma,Z,Y$ be objects in $\E$.
Let $p:\Gamma\to\Dist(Z)$ be a prior (kernel) and $k:\Gamma\times Z\to\Dist(Y)$ a likelihood kernel.
Let $\chi:\Gamma\times Z\to\Omega$ be a predicate (internal event) with comprehension mono
$\iota_\chi:\Gamma\!{\mid}\!\chi \hookrightarrow \Gamma\times Z$,
and write $\pi_\Gamma:\Gamma\times Z\to\Gamma$.

Define the \emph{observed} posterior kernel
\[
\mathrm{Obs}_\chi(k,p) \;:\; \Gamma \longrightarrow \Dist(Y)
\]
stagewise by, for each $\alpha:N\to\Gamma$,
\[
\big(\mathrm{Obs}_\chi(k,p)\circ \alpha\big)
\;:=\; \frac{\displaystyle\int_{z\in Z} \mathbf{1}_{\chi}(\alpha,z)\, k(\alpha,z)\, d p(\alpha)(z)}
               {\displaystyle\int_{z\in Z} \mathbf{1}_{\chi}(\alpha,z)\, d p(\alpha)(z)}.
\]
This definition is made only on the subobject of stages where the denominator
is strictly positive; no conditional probability is assigned at zero mass.

Then $\mathrm{Obs}_\chi(k,p)$ is the unique arrow $\Gamma\to\Dist(Y)$ such that
\[
\pi_\Gamma^{\!*}(p) \text{ conditioned on } \chi
\quad\text{and pushed forward by }k\quad =\quad
\mathrm{Obs}_\chi(k,p)\quad\text{(as a kernel $\Gamma\to\Dist(Y)$)}.
\]
Equivalently, observation $($conditioning on $\chi)$ is
\emph{(i)} restricting along the comprehension subobject $\iota_\chi$ and \emph{(ii)} normalizing.
\end{lemma}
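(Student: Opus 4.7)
The plan is to verify the equality stagewise via Kripke--Joyal forcing and then upgrade to a unique arrow $\Gamma\to\Dist(Y)$ by Yoneda. Fix a generalized element $\alpha:N\to\Gamma$. I first unpack the three operations on the right-hand side. The pullback $\pi_\Gamma^{*}(p)$ at stage $\alpha$ is the kernel $N\times Z\to\Dist(Z)$ which, by strength of $\Dist_\E$, corresponds to $p(\alpha)\in\Dist(Z)$ regarded as a $Z$-indexed family over $N$. The comprehension mono $\iota_\chi:\Gamma\!\mid\!\chi\hookrightarrow\Gamma\times Z$ is classified by $\chi$, so its characteristic function is the indicator $\mathbf{1}_\chi:\Gamma\times Z\to\{0,1\}\hookrightarrow\R_{\ge 0}$. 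Restricting $p(\alpha)$ along $\iota_\chi$ means multiplying its density by $\mathbf{1}_\chi(\alpha,-)$, producing the subprobability measure $\nu_\alpha(z):=\mathbf{1}_\chi(\alpha,z)\,p(\alpha)(z)$ on $Z$.

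Second, I would form the normalization constant $m(\alpha):=\int_Z \mathbf{1}_\chi(\alpha,z)\,dp(\alpha)(z)\in[0,1]$, which is exactly the mass of $\nu_\alpha$; the normalized distribution $\nu_\alpha/m(\alpha)\in\Dist(Z)$ is then the conditional law $p(\alpha\mid\chi)$. Third, pushing forward by $k$ gives the kernel
\[
\big(\mathrm{Obs}_\chi(k,p)\circ\alpha\big)(y)
\;=\;\int_Z k(\alpha,z)(y)\,\frac{\mathbf{1}_\chi(\alpha,z)\,dp(\alpha)(z)}{m(\alpha)},
\]
which matches the displayed formula in the lemma. The Kleisli composition interpretation is: observation is the co-Kleisli composite of (a) pullback $\pi_\Gamma^{*}p$, (b) restriction $\iota_\chi^{*}$, (c) renormalization, and (d) pushforward by $k$; each step is a morphism in $\Kl(\Dist_\E)$ (restriction+normalization being the standard Bayesian conditioning morphism).

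For uniqueness, I would argue that any arrow $h:\Gamma\to\Dist(Y)$ whose precomposition with every $\alpha:N\to\Gamma$ equals the RHS must coincide with $\mathrm{Obs}_\chi(k,p)$ by Yoneda applied to the representable $\Hom(-,\Dist(Y))$: two arrows into $\Dist(Y)$ that agree on all generalized elements are equal. Naturality in $\alpha$ is automatic since integration, indicators, and division are defined pointwise in the presheaf/topos setting.

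The main obstacle will be handling the degenerate stages where $m(\alpha)=0$: there is no canonical normalization, so the arrow $\mathrm{Obs}_\chi(k,p)$ is strictly speaking only defined on the subobject $\{\alpha:m(\alpha)>0\}\hookrightarrow\Gamma$. I would address this either by (i) restricting the conclusion to that open subobject and using the convention $0/0:=\delta_{\star}$ globally, or (ii) passing to the subprobability (affine) variant $\Dist^{\le 1}_\E$, where the unnormalized restriction $\iota_\chi^{*}$ is already a legitimate morphism and normalization is a partial operation defined on the open $m>0$. In the finite/presheaf case assumed in the main text, the finite-support hypothesis makes $m$ a well-defined natural transformation valued in $[0,1]$, and the convention chosen in the lemma statement closes the gap; a brief remark to that effect should suffice.
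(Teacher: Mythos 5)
Your proposal is correct and follows essentially the same route as the paper's own proof: a stagewise Kripke--Joyal verification identifying restriction along $\iota_\chi$ with multiplication by $\mathbf{1}_\chi$, normalization with division by the total mass, and pushforward with integration of $k(\alpha,\cdot)$ against the normalized prior, with uniqueness obtained from componentwise agreement at all stages (your Yoneda phrasing is the same argument the paper packages as ``sums agree componentwise''). Your explicit treatment of the degenerate stages where the normalizing mass vanishes is a welcome addition the paper's proof glosses over, but it does not change the approach.
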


\begin{proof}
Internally (Kripke–Joyal): at a stage $\alpha:N\to\Gamma$, the prior is the finite measure $p(\alpha)$ on $Z$,
the event indicator is $\mathbf{1}_\chi(\alpha,\cdot)$, and the likelihood is $k(\alpha,\cdot)$.
Restricting to the comprehension subobject multiplies by $\mathbf{1}_\chi$; normalization divides by its total mass.
Pushing forward along $k$ is integration of $k(\alpha,\cdot)$ against the normalized prior.
Uniqueness follows directly from equality of the normalized finite sums at
each component.
\end{proof}

\medskip

\begin{lemma}[Intervention = kernel replacement $+$ integration]
\label{lem:do-kernel-replace}
Let $k:\Gamma\times Z\to\Dist(Y)$ be a structural kernel (e.g.\ $P(Y\mid \Gamma,Z)$ in the cut object $M_{\overline{Z}}$).
An \emph{intervention on $Z$} with policy $\mu:\Gamma\to\Dist(Z)$ (delta for $do(Z{=}z)$) defines the interventional kernel
\[
\mathrm{Do}_Z(k;\mu) \;:=\; \int_{z\in Z} k(\gamma,z)\, d\mu(\gamma)(z) \;:\; \Gamma \longrightarrow \Dist(Y).
\]
Types:
\[
k:\Gamma\times Z\to\Dist(Y),\qquad
\mu:\Gamma\to\Dist(Z),\qquad
\mathrm{Do}_Z(k;\mu):\Gamma\to\Dist(Y).
\]
Moreover, if $k$ is independent of $Z$ in $M_{\overline{Z}}$ (i.e.\ $k=k_0\circ\pi_\Gamma$), then
$\mathrm{Do}_Z(k;\mu)=k_0$ for \emph{every} $\mu$.
\end{lemma}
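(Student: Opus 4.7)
The plan is to interpret the informal integral categorically as the standard co-Kleisli composite built from the strength and multiplication of $\Dist_\E$, after which the two claims reduce to (a) type-checking a composite and (b) a short diagram chase that uses the hypothesis $k = k_0\circ\pi_\Gamma$ to collapse the $\mu$-dependence via the monad unit law. A parallel Kripke--Joyal verification gives the same conclusion stagewise.

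First I would replace the informal symbol $\int_Z$ with the formal definition
\[
\mathrm{Do}_Z(k;\mu) \;:=\; m_Y \,\circ\, \Dist(k) \,\circ\, st_{\Gamma,Z} \,\circ\, \langle \mathrm{id}_\Gamma,\mu\rangle \;:\; \Gamma \to \Dist(Y),
\]
where $st_{\Gamma,Z}:\Gamma\times\Dist(Z)\to\Dist(\Gamma\times Z)$ is the strength of $\Dist_\E$ and $m_Y:\Dist(\Dist(Y))\to\Dist(Y)$ is its multiplication (cf.\ Section~\ref{subsec:markov-category}; I write $m_Y$ for the multiplication to avoid clashing with the policy $\mu$). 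Chaining the listed arrows already verifies the type assertion $\mathrm{Do}_Z(k;\mu):\Gamma\to\Dist(Y)$.

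For the moreover-clause, substitute $k = k_0\circ\pi_\Gamma$ and apply functoriality of $\Dist_\E$:
\[
\Dist(k_0\circ\pi_\Gamma)\circ st_{\Gamma,Z}
\;=\; \Dist(k_0)\,\circ\,\Dist(\pi_\Gamma)\,\circ\, st_{\Gamma,Z}.
\]
The standard strength axiom for a projection collapses to
$\Dist(\pi_\Gamma)\circ st_{\Gamma,Z} = \eta_\Gamma \circ \mathrm{pr}_\Gamma$,
where $\mathrm{pr}_\Gamma:\Gamma\times\Dist(Z)\to\Gamma$ is the outer projection: any finitely supported distribution on $Z$ has total mass $1$, so pushing forward a Dirac along the projection returns a Dirac. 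Naturality of $\eta$ then yields $\Dist(k_0)\circ\eta_\Gamma = \eta_{\Dist(Y)}\circ k_0$, and the monad unit law $m_Y\circ \eta_{\Dist(Y)} = \mathrm{id}_{\Dist(Y)}$ finishes the chase: the whole composite reduces to $k_0\circ \mathrm{pr}_\Gamma \circ \langle \mathrm{id}_\Gamma,\mu\rangle = k_0$, independently of $\mu$. In parallel, Kripke--Joyal forcing at a stage $\alpha:N\to\Gamma$ gives
\[
\bigl(\mathrm{Do}_Z(k;\mu)\circ\alpha\bigr) \;=\; \sum_{z\in Z}\mu(\alpha)(z)\,k(\alpha,z) \;=\; k_0(\alpha)\sum_{z\in Z}\mu(\alpha)(z) \;=\; k_0(\alpha),
\]
and by Section~\ref{subsec:kripke-joyal} this stagewise equality forces the internal equality $\Gamma\vdash \mathrm{Do}_Z(k;\mu) = k_0$.

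The main obstacle is purely conceptual rather than technical: one must resist reading the integral externally and instead track the categorical structure (strength plus multiplication) that makes the expression meaningful inside a presheaf/sheaf topos. Once this translation is in place, the collapse of $\mu$ is just the monad unit law in disguise, and no probabilistic input beyond the normalization identity $\sum_z\mu(\alpha)(z)=1$ built into $\Dist_\E$ is required.
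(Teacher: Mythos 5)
Your proposal is correct, and it actually contains the paper's own proof as a subargument: the paper's proof is exactly your final stagewise computation (at a stage $\alpha:N\to\Gamma$ the integrand $k(\alpha,z)=k_0(\alpha)$ is constant in $z$, so integrating against the probability measure $\mu(\alpha)$ returns $k_0(\alpha)$), preceded by the remark that the definition is co-Kleisli convolution. What you add is a genuinely different, fully structural route: expanding $\mathrm{Do}_Z(k;\mu)$ as $m_Y\circ\Dist(k)\circ st\circ\langle\mathrm{id},\mu\rangle$ and collapsing it by functoriality, the identity $\Dist(\pi_\Gamma)\circ st_{\Gamma,Z}=\eta_\Gamma\circ\mathrm{pr}_\Gamma$, naturality of $\eta$, and the unit law. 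That chase is valid, and it has the virtue of isolating exactly where normalization enters: the projection identity is not one of the bare strength axioms of a strong monad but requires affineness of $\Dist$ (total mass $1$, i.e.\ $\Dist(1)\cong 1$), which you correctly flag even if the phrase ``standard strength axiom for a projection'' slightly mislabels it (and the parenthetical about ``pushing forward a Dirac'' is loose --- what is pushed forward is $\sum_z p(z)\,|(\gamma,z)\rangle$, whose image is the Dirac $|\gamma\rangle$ precisely because $\sum_z p(z)=1$). The paper's pointwise argument is shorter and is all that is needed for the Kripke--Joyal forcing claim; your diagram chase buys an element-free proof that works verbatim in any topos carrying an affine commutative monad, without invoking generalized elements at all.
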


\begin{proof}
Definition is the Kleisli composition (convolution) of $k$ with $\mu$.
At a stage $\alpha:N\to\Gamma$, $\mathrm{Do}_Z(k;\mu)\circ\alpha=\int k(\alpha,z)\,d\mu(\alpha)(z)$ by definition.
If $k=k_0\circ\pi_\Gamma$, then $k(\alpha,z)=k_0(\alpha)$ is constant in $z$, hence the integral returns $k_0(\alpha)$.
\end{proof}

\medskip

\paragraph{Corollary (Rule 1, internal form).}
In the cut object $M_{\overline Z}$, if $k=k_0\circ\pi_\Gamma$ (i.e.\ $Y\perp Z\mid \Gamma$),
then for every intervention $\mu:\Gamma\to\Dist(Z)$,
\[
\Gamma \vdash \; \mathrm{Do}_Z(k;\mu) \;=\; k_0,
\]
and for every observation predicate $\chi$ with prior $p:\Gamma\to\Dist(Z)$,
\[
\Gamma \vdash \; \mathrm{Obs}_\chi(k,p) \;=\; k_0
\quad\text{whenever}\quad
\int \mathbf{1}_\chi\, dp > 0.
\]

\subsection{Translation of Do-Calculus Rules into TCM Models}

The translation is now fixed by Theorems~\ref{thm:J1}--\ref{thm:J3}:
first select the appropriate mutilated model, then apply the corresponding
Pearl rule on every chart, and finally descend the compatible equality.
Lemma~\ref{lem:obs-comprehension} supplies observation under positive
normalizing mass, while Lemma~\ref{lem:do-kernel-replace} supplies
intervention by mechanism replacement.  A generic kernel-independence
calculation is not relabeled as all three rules.

\subsection{Presheaf + Distribution-Monad Formalization of TCMs}
\label{subsec:presheaf-dist-formalization}

\paragraph{Presheaf topos and internal distributions.}
Fix a small category $\mathcal{C}$ (of ``contexts'' or shapes). Let
\[
\E \;=\; \widehat{\mathcal{C}} \;=\; \Set^{\mathcal{C}^{op}}
\]
be the presheaf topos. For $F\in \E$, write $F(c)$ for sections at stage $c\in \mathcal{C}$ and $F(u):F(c)\to F(c')$ for restriction along $u:c'\to c$. 
Define the internal finite-support distribution monad pointwise:
\[
(\Dist_\E F)(c) \;=\; \Dist_{\Set}\!\big(F(c)\big), 
\qquad 
(\Dist_\E F)(u) \;=\; \Dist_{\Set}\!\big(F(u)\big).
\]
The Kleisli (``Markov'') category $\Kl(\Dist_\E)$ has the same objects as $\E$; morphisms $F\to G$ are natural transformations $F \Rightarrow \Dist_\E G$ (``stochastic natural transformations''). Composition is pointwise convolution.

\paragraph{Objects and kernels.}
A variable $X$ is a presheaf $X\in\E$. A \emph{stochastic kernel} in context $\Gamma$ with a controlled variable $Z$ and response $Y$ is a natural transformation
\[
k\;:\; \Gamma \times Z \;\Rightarrow\; \Dist_\E Y,
\]
i.e.\ for each $c\in \mathcal{C}$ a stochastic map $k_c:\Gamma(c)\times Z(c) \to \Dist(Y(c))$ that is natural in~$c$.

\paragraph{Observation via comprehension + normalization.}
An internal predicate $\chi:\Gamma\times Z \to \Omega$ yields its comprehension mono $\iota_\chi:\Gamma{\mid}\chi \hookrightarrow \Gamma\times Z$. 
Given a prior $p:\Gamma \Rightarrow \Dist_\E Z$ and likelihood $k:\Gamma\times Z \Rightarrow \Dist_\E Y$, define the observed posterior
\[
\Obs_\chi(k,p) \;:\; \Gamma \Rightarrow \Dist_\E Y
\]
stagewise by
\[
\big(\Obs_\chi(k,p)\big)_c(\gamma)
\;=\;
\frac{\displaystyle\sum_{z\in Z(c)} \mathbf{1}_\chi(\gamma,z)\; k_c(\gamma,z)\; p_c(\gamma)(z)}
     {\displaystyle\sum_{z\in Z(c)} \mathbf{1}_\chi(\gamma,z)\; p_c(\gamma)(z)}.
\]
This is defined on the subpresheaf of pairs \((c,\gamma)\) for which the
denominator is positive, and is pointwise restriction along $\iota_\chi$
followed by normalization.

\paragraph{Intervention as kernel replacement + integration.}
An intervention policy is a stochastic nat.\ transf.\ $\mu:\Gamma \Rightarrow \Dist_\E Z$. 
The interventional kernel is the Kleisli composite
\[
\Do_Z(k;\mu) \;:=\; \mu_Y \circ \Dist_\E(k) \circ st \circ \langle \id,\mu\rangle
\;:\; \Gamma \Rightarrow \Dist_\E Y,
\]
whose $c$-component is the usual integral
\[
\big(\Do_Z(k;\mu)\big)_c(\gamma) \;=\; \sum_{z\in Z(c)} k_c(\gamma,z)\; \mu_c(\gamma)(z).
\]

\paragraph{Independence as factorization.}
Conditional independence $Y \perp Z \mid \Gamma$ \emph{in the cut model} $M_{\overline Z}$ is the internal naturality equation
\[
k \;=\; k_0 \circ \pi_\Gamma \;:\; \Gamma\times Z \Rightarrow \Dist_\E Y,
\]
i.e.\ each $k_c(\gamma,z)$ ignores $z$ and equals $k_{0,c}(\gamma)$.

\paragraph{Kernel irrelevance inside the presheaf model.}
If $k=k_0\circ \pi_\Gamma$ in $M_{\overline Z}$, then for every policy $\mu:\Gamma \Rightarrow \Dist_\E Z$,
\[
\Do_Z(k;\mu) \;=\; k_0 \quad\text{in } \E,
\]
because pointwise $\sum_{z} k_c(\gamma,z)\,\mu_c(\gamma)(z)=k_{0,c}(\gamma)$.

\medskip

\paragraph{Worked example (constant presheaves).}
Let $\mathcal{C}$ be arbitrary and take \emph{constant} presheaves
$X,Y,Z,\Gamma$ with values the two-point set $\{0,1\}$, so all restrictions are identities.
Fix pointwise kernels (same at every stage $c$):
\[
P(Y{=}1\mid X{=}0)=0.1,\quad P(Y{=}1\mid X{=}1)=0.9,\qquad
P(Z{=}1\mid Y{=}0)=0.2,\quad P(Z{=}1\mid Y{=}1)=0.8.
\]
Thus $k:\Gamma\times Z\Rightarrow\Dist_\E Y$ is \emph{independent of $Z$} (take $\Gamma:=Y$ or impose the cut to $Z$), yielding $k=k_0\circ\pi_\Gamma$.
For any policy $\mu:\Gamma\Rightarrow\Dist_\E Z$ (e.g.\ $\mu(\gamma)=\delta_{z_0}$ for $do(Z{=}z_0)$),
\[
\big(\Do_Z(k;\mu)\big)_c(\gamma)
= \sum_{z} k_c(\gamma,z)\, \mu_c(\gamma)(z)
= k_{0,c}(\gamma),
\]
so internally $P(Y\mid do(Z),\Gamma)=P(Y\mid \Gamma)$ at every stage $c$.
By contrast, observation with $\chi(z){:=}[z{=}1]$ produces
\[
\big(\Obs_\chi(k,p)\big)_c(\gamma)
= \frac{k_{0,c}(\gamma)\;p_c(\gamma)(1)}{p_c(\gamma)(1)} = k_{0,c}(\gamma)
\quad \text{whenever } p_c(\gamma)(1)>0,
\]
showing that, in this special factorized kernel, both operations return the
same \(Y\)-kernel.  This calculation does not replace the mutilated-graph
premise of Pearl's action/observation rule.

\paragraph{Takeaway.}
For the pointwise finite-distribution monad used here, kernel composition is
computed objectwise.  A global causal conclusion still requires naturality,
the correct mutilated model, and---after sheafification---the compatibility
assumptions of Theorem~\ref{thm:local-global-soundness}.

\subsection{Exponential Objects and Their Role in TCMs}
\label{subsec:exponentials-tcm}

\paragraph{Definition.}
Every topos $\E$ is \emph{cartesian closed}:
for objects $A,B\in\E$ there exists an exponential object $B^A$
and an evaluation map $\mathrm{ev}:B^A\times A\to B$
such that for all $X$ there is a natural isomorphism
\[
\E(X\times A, B) \;\cong\; \E(X, B^A).
\]
Hence morphisms depending on parameters in~$A$
can be re-expressed as internal elements of $B^A$.

\paragraph{Conditional kernels.}
In a Topos Causal Model,
a stochastic kernel $k:\Gamma\times Z\to\Dist_\E(Y)$
can equivalently be seen as
\[
\tilde{k}:\Gamma\to (\Dist_\E Y)^{Z},
\qquad
k=\mathrm{ev}\circ\langle \tilde{k},\id_Z\rangle.
\]
This interpretation allows causal mechanisms and conditionals
to be treated as \emph{elements of an exponential object}.

\paragraph{Interventions as higher-order morphisms.}
The intervention operator acts internally as a morphism between exponentials:
\[
\Do_Z : (\Dist_\E Y)^{\Gamma\times Z}
         \times (\Dist_\E Z)^{\Gamma}
         \longrightarrow
         (\Dist_\E Y)^{\Gamma},
\]
whose externalization sends
$(k,\mu)$ to
$\Do_Z(k;\mu)
  = \mu_Y\circ\Dist_\E(k)\circ st\circ\langle \id,\mu\rangle$.
Thus interventions are higher-order arrows
within the cartesian closed structure of~$\E$.

\paragraph{Quantifiers and forcing.}
Exponentials support internal quantification over function spaces.
For example, the formula
$(\forall f:Z\to Y)\,\varphi(f)$
is interpreted using $Y^Z$, and its Kripke--Joyal clause reads:
for all $u:N'\to N$ and all $f:N'\to Y^Z$,
$N'\Vdash \varphi(f)[\alpha\!\circ\!u]$.
Hence statements such as
``for every policy $\mu$'' or
``there exists a kernel $k$''
are expressed directly in the internal logic.

\paragraph{Presheaf case.}
When $\E=\widehat{\mathcal{C}}$, exponentials are not generally obtained by
the naive pointwise function-set construction.  They satisfy
\[
(F^G)(c)\;=\;\mathrm{Nat}(G\times h_c, F),
\]
so an internal arrow $\Gamma\to (\Dist_\E Y)^{Z}$
assigns to each $c\in\mathcal{C}$
a natural family of stochastic maps
$Z(c)\to \Dist(Y(c))$ varying functorially in~$c$.

\begin{figure}[h]
  \centering
  \begin{tikzcd}[column sep=large, row sep=large]
    \Gamma \times Z \arrow[rr, "k"]
       \arrow[dr, swap, "{\langle \tilde{k}\circ\pi_\Gamma,\ \pi_Z\rangle}"] & & \Dist(Y) \\
    & \Dist(Y)^{Z} \times Z
        \arrow[ur, swap, "\mathrm{ev}"] &
  \end{tikzcd}

  \caption{Exponential adjunction for kernels. 
  Each stochastic kernel \(k:\Gamma\times Z\to \Dist(Y)\)
  corresponds uniquely to an internal element
  \(\tilde{k}:\Gamma\to \Dist(Y)^{Z}\),
  satisfying \(k = \mathrm{ev}\circ\langle \tilde{k}\circ\pi_\Gamma, \pi_Z\rangle\).}
  \label{fig:exp-eval-diagram}
\end{figure}

\paragraph{Summary.}
Exponentials provide the categorical infrastructure
for higher-order reasoning in TCMs:
they internalize conditionals and policies,
make intervention operators morphisms,
and enable quantification over functions
in the Kripke--Joyal semantics.
In short, they turn the causal calculus of TCMs
into a genuine higher-order internal logic.

\subsection{Example: Generalizing Do-Calculus in a Simple TCM}
\label{subsec:example-do-tcm}

\paragraph{Setup.}
Consider a simple causal system with two observable variables
$X$ (treatment) and $Y$ (outcome), and an optional confounder $Z$.
In the ordinary probabilistic semantics we have a factorization
\[
P(X,Y,Z) = P(Y\mid X,Z)\,P(X\mid Z)\,P(Z),
\]
and interventions replace $P(X\mid Z)$ by a chosen policy $\mu_X$.

\paragraph{Internal TCM formulation.}
Let $\E$ be a topos (e.g.\ the presheaf topos $\Set^{\mathcal{C}^{op}}$)
equipped with the internal finite-support distribution monad~$\Dist_\E$.
Objects $X,Y,Z$ represent the corresponding variables as presheaves,
and the causal mechanisms are stochastic morphisms
\[
k_Y : X\times Z \longrightarrow \Dist_\E(Y),
\qquad
k_X : Z \longrightarrow \Dist_\E(X).
\]
The joint is the Kleisli composite
\[
P(X,Y,Z)
  = (\id_Z \otimes k_X \otimes k_Y)
  : 1 \longrightarrow \Dist_\E(X\times Y\times Z).
\]

\paragraph{Observation and intervention.}
Observation of $X=x$ corresponds to restricting along the
comprehension subobject $\iota_{x}:Z{\mid}X{=}x\hookrightarrow Z\times X$
and renormalizing.
An intervention $do(X\!\sim\!\mu_X)$ replaces $k_X$
by a constant kernel $\mu_X:1\to \Dist_\E(X)$ and composes via the
Kleisli operation:
\[
\Do_X(k_Y;\mu_X)
  = \mu_Y \circ \Dist_\E(k_Y)
  \circ st
  \circ \langle \id,\mu_X\rangle
  : Z \to \Dist_\E(Y).
\]
At each stage $c\in\mathcal{C}$ this reduces to the ordinary formula
\[
(\Do_X(k_Y;\mu_X))_c(z)
   = \sum_{x\in X(c)} k_{Y,c}(x,z)\,\mu_{X,c}(x).
\]

\paragraph{Kernel irrelevance in the cut model.}
Suppose that inside the cut model $M_{\overline X}$ the kernel
$k_Y$ does not depend on~$X$, i.e.
$k_Y = k_{0,Y}\circ \pi_Z$ in $\E$.
Then by Kripke–Joyal semantics, for every stage $\alpha:N\to Z$,
\[
\int_X k_Y(\alpha,x)\,d\mu_X(x)
   = \int_X k_{0,Y}(\alpha)\,d\mu_X(x)
   = k_{0,Y}(\alpha),
\]
so internally
\[
Z \vdash P(Y\mid do(X),Z)=P(Y\mid Z).
\]
This is the kernel-irrelevance lemma expressed as an internal equality.  To
obtain a do-calculus rule one must additionally use the specific surgery and
premise in Theorems~\ref{thm:J1}--\ref{thm:J3}.

\paragraph{A special observation/intervention coincidence.}
If, in the model $M_{\overline{X(W)}}$, we have
$Y\!\perp\! X\mid (Z,W)$,
then both the observational restriction
and the interventional replacement yield the same morphism
$k_{0,Y}:\Gamma\to\Dist_\E(Y)$,
provided the observation has positive normalizing mass.  This factorized
special case is consistent with Rule~2 but is not its general statement.

\paragraph{Compositionality.}
Nested interventions compose through the monad multiplication:
for any two controlled variables $X,Z$ with policies
$\mu_X:\Gamma\to\Dist_\E(X)$ and $\mu_Z:\Gamma\to\Dist_\E(Z)$,
the associative law of $\Dist_\E$ ensures that
\[
\Do_Z(\Do_X(k;\mu_X);\mu_Z)
   = \Do_{X\times Z}\big(k;\ \mathrm{mix}(\mu_X,\mu_Z)\big),
\]
expressing associativity of sequential kernel replacement for this
commutative distribution monad.

\paragraph{Intuitive summary.}
At each stage $c\in\mathcal{C}$ this example reproduces an ordinary
finite-kernel equality.  Naturality makes the equality internal across all
stages.
Kripke–Joyal forcing expresses this as:
\[
\forall \alpha:N\to\Gamma,\quad
N\Vdash P(Y\mid do(X),\Gamma)=P(Y\mid \Gamma),
\]
The full do-calculus conclusions follow only under the causal and descent
hypotheses stated earlier; internal logic alone does not supply them.

\section{Exchangeable $j$-Stable Causality}
\label{sec:j-exchangeable}

A central theme in modern causal inference is \emph{symmetry}: many datasets consist of partially interchangeable units (patients, households, pixels). The recent ``do-Finetti'' viewpoint \citep{do-finetti} studies do-calculus under exchangeability. In our setting, we formulate the analogue at the level of a $j$-site and show how $j$-stability interacts with permutation invariance.

\paragraph{Setup.}
Let $\C$ be a causal site and $\widehat{\C}$ its presheaf topos. Fix a Lawvere--Tierney topology $j:\Omega\to\Omega$ on $\widehat{\C}$.
Write $\mathrm{Sym}(I)$ for the finite permutation group of a finite index set $I$ of units.
A family of random variables $X=(X_i)_{i\in I}$, outcomes $Y=(Y_i)_{i\in I}$, and covariates $Z=(Z_i)_{i\in I}$ is represented by an arrow
\[
\mathbf{X}=(\mathbf{X}_i)_{i\in I}:\;1\to \prod_{i\in I}\mathsf{X}_i
\qquad
\mathbf{Y}=(\mathbf{Y}_i)_{i\in I}:\;1\to \prod_{i\in I}\mathsf{Y}_i
\qquad
\mathbf{Z}=(\mathbf{Z}_i)_{i\in I}:\;1\to \prod_{i\in I}\mathsf{Z}_i,
\]
with the natural $\mathrm{Sym}(I)$–action by permuting factors.

\begin{definition}[{$j$–invariant (exchangeable) family}]
\label{def:j-exch}
The family $(\mathbf{X},\mathbf{Y},\mathbf{Z})$ is \emph{$j$–exchangeable} if for every $\pi\in\mathrm{Sym}(I)$ and every stage $U\in\C$, the equality in distribution
\[
(\mathbf{X},\mathbf{Y},\mathbf{Z}) \;\overset{d}{=}\; (\mathbf{X},\mathbf{Y},\mathbf{Z})\circ \pi
\]
is $j$–forced at $U$ (i.e., holds in every chart of some $j$–cover of $U$). Equivalently, the \emph{$j$–truth} of any internal sentence $\varphi(\mathbf{X},\mathbf{Y},\mathbf{Z})$ is invariant under relabeling by $\pi$.
\end{definition}

\begin{definition}[{$G$–invariant $j$}]
\label{def:G-invar-j}
Let $G\le \mathrm{Sym}(I)$. We say $j$ is \emph{$G$–invariant} if for every covering sieve $S$ on $U$ and every $g\in G$, the translated sieve $g\cdot S$ is again $j$–covering. Intuitively, $j$ does not distinguish unit labels.
\end{definition}

\paragraph{Interventions with symmetry.}
For a finite treatment set $S\subseteq I$ and value $x$, the intervention $\mathrm{do}(\mathbf{X}_S=x)$ is represented by a natural transformation that surgically sets coordinates in $S$ to $x$ and leaves others unchanged. When $j$ is $G$–invariant, the map $S\mapsto \mathrm{do}(\mathbf{X}_S=x)$ is $G$–equivariant.

\begin{proposition}[Permutation–invariance of $j$–stable effects]
\label{prop:perm-inv-effect}
Assume (i) $j$ is $G$–invariant for some $G\le \mathrm{Sym}(I)$, (ii) $(\mathbf{X},\mathbf{Y},\mathbf{Z})$ is $j$–exchangeable, and (iii) the $j$–stable rules of Section~\ref{sec:j-do-calculus} apply (e.g., $j$–Markov and backdoor rules). Then for any finite $S\subseteq I$ and any $\pi\in G$,
\[
\text{$j$–}\mathbb{E}\!\big[\mathbf{Y}\mid \mathrm{do}(\mathbf{X}_S=x),\mathbf{Z}\big]
\;=\;
\text{$j$–}\mathbb{E}\!\big[\mathbf{Y}\mid \mathrm{do}(\mathbf{X}_{\pi(S)}=x),\mathbf{Z}\big],
\]
$U$–locally for every stage $U$. In particular, the effect depends on $S$ only through the $G$–orbit of $S$ (e.g., its cardinality when $G=\mathrm{Sym}(I)$).
\end{proposition}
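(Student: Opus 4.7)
The plan is to reduce the stated identity to a single transport argument inside the internal logic of $\Sh_{\J}(\C)$: apply $\pi$ to every syntactic component of both sides of the claimed equality, use $G$-equivariance of $\mathrm{do}$ together with $G$-invariance of $j$ to ensure that the premises of the relevant $j$-do rule and its witnessing $j$-covers transport along $\pi$, and finish by invoking $j$-exchangeability to collapse the two expressions to the same internally forced truth at the stage $U$.

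First, I would record the $G$-equivariance $\mathrm{do}(\mathbf{X}_S=x)\circ \pi = \mathrm{do}(\mathbf{X}_{\pi(S)}=x)$ noted in the paragraph preceding the proposition as a naturality equation in $\Sh_{\J}(\C)$; this turns the substitution $\mathbf{X}\mapsto \mathbf{X}\circ\pi$ inside the internal term $j\text{-}\mathbb{E}[\mathbf{Y}\mid \mathrm{do}(\mathbf{X}_S=x),\mathbf{Z}]$ into the internal term on the right-hand side with target $\pi(S)$. Second, I would check that the Kripke--Joyal clause for the $j$-expectation transports along $\pi$ at the fixed stage $U$: by hypothesis~(i), every $j$-covering sieve on $U$ is $G$-stable, so any witnessing cover for the premises of the applicable $j$-do rule (Thms.~\ref{thm:J1}--\ref{thm:J3}) on the left side is, after relabeling by $\pi$, a witnessing cover on the right side; the CI premises themselves, being internal sentences of the form $(\mathbf{Y}\perp \mathbf{X}_S\mid \mathbf{Z})$ in the mutilated graph $\Gbar{\mathbf{X}_S}$, are transported by Def.~\ref{def:j-exch}, which asserts precisely that such sentences are $\pi$-invariant in their $j$-truth value at $U$.

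Assembling these ingredients, the two $j$-expectations arise as the values of internal terms that are provably equal in the internal logic (one is the $\pi$-renaming of the other), and both are $j$-forced at $U$ by the same cover, up to the $G$-action on sieves absorbed by~(i). Hence they coincide $U$-locally, and iterating over generators of the $G$-orbit of $S$ delivers the final claim that the effect depends on $S$ only through its orbit (in particular, only through $|S|$ when $G=\mathrm{Sym}(I)$).

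The main obstacle, I expect, will be to keep two distinct $G$-actions disentangled: the action on variable indices $i\in I$ used inside the internal language versus the action on covering sieves at a fixed stage $U$ guaranteed by hypothesis~(i). One must verify that naturality of $\mathrm{do}$ and of the internal conditional expectation built from $\Dist_\E$ intertwines these two actions, so that the Kripke--Joyal forcing clause commutes with index relabeling, not merely stage pullback. Once that intertwining is pinned down, the remainder of the proof essentially reduces to one invocation of Def.~\ref{def:j-exch} composed with the soundness of Thms.~\ref{thm:J1}--\ref{thm:J3}.
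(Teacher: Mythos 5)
Your proposal follows essentially the same route as the paper's own (very brief) proof sketch: transport covering sieves along $\pi$ via $G$-invariance of $j$, transport the CI/separation premises via $j$-exchangeability, and let the $j$-do rules yield identical conclusions on the permuted charts. Your elaboration of the $\mathrm{do}$-equivariance equation and the caution about intertwining the index action with the sieve action are reasonable refinements of the same argument, not a different approach.
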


\begin{proof}[Idea]
$G$–invariance of $j$ transports covering sieves along permutations; $j$–exchangeability transports the local graphical/separation premises used by the $j$–rules. The $j$–do rules then produce identical local conclusions on permuted charts, hence identical $j$–forced effects.
\end{proof}

\paragraph{A $j$–de Finetti principle (informal).}
When $I$ is large and $(\mathbf{X}_i,\mathbf{Y}_i,\mathbf{Z}_i)_{i\in I}$ is $j$–exchangeable across $i$, one can state an internal version of de Finetti: there exists an \emph{internal random measure} $\boldsymbol{\Theta}$ such that, $j$–locally,
\[
(\mathbf{X}_i,\mathbf{Y}_i,\mathbf{Z}_i)_{i\in I} \;\text{is conditionally i.i.d. given } \boldsymbol{\Theta}.
\]
Operationally, this licenses the usual empirical–Bayes reductions (pooling across units) \emph{inside} the $j$–logic, and ensures that $j$–stable effects for interventions on $S$ depend only on the orbit type of $S$ (e.g., $|S|$). This recovers the spirit of ``do–Finetti'' while accounting for covers/interventions in TCM.%
\footnote{For a classical account, see recent treatments of do–calculus under exchangeability; here we phrase the equivalence internally, relative to a $G$–invariant $j$.}

\paragraph{Partial exchangeability and regimes.}
If units split into regimes $I=I_1\sqcup\cdots\sqcup I_R$ with $G=\mathrm{Sym}(I_1)\times\cdots\times \mathrm{Sym}(I_R)$, Definitions~\ref{def:j-exch}–\ref{def:G-invar-j} and Proposition~\ref{prop:perm-inv-effect} apply verbatim. $j$–stable effects are invariant under label permutations \emph{within} regimes and may depend on the treatment counts $(|S\cap I_r|)_{r=1}^R$.

\paragraph{Worked toy example.}
Let $I=\{1,\dots,n\}$, $\mathrm{Sym}(I)$ act by relabeling units, and consider the DAG
\[
\mathbf{X}_i \to \mathbf{Y}_i,\qquad
\mathbf{Z}_i \to \mathbf{X}_i,\ \mathbf{Z}_i \to \mathbf{Y}_i
\quad (i\in I),
\]
with $(\mathbf{Z}_i)$ i.i.d.\ unobserved. Suppose $j$ has two charts per stage: an observational chart and an interventional chart where incoming arrows to $\mathbf{X}_S$ are cut. If $(\mathbf{X},\mathbf{Y})$ is $j$–exchangeable and $j$ is $\mathrm{Sym}(I)$–invariant, then by Proposition~\ref{prop:perm-inv-effect}
\[
\text{$j$–}\mathbb{E}\big[\tfrac{1}{n}\sum_i \mathbf{Y}_i\ \big|\ \mathrm{do}(\mathbf{X}_S=x)\big]
\;=\; F(|S|,x)
\]
for some function $F$ that depends only on the \emph{number} treated, not their labels. If $|S|/n\to p$ along a cofinal system of stages, a $j$–de Finetti posterior over $\boldsymbol{\Theta}$ yields a limit $F(|S|,x)\to F_\infty(p,x)$.

\paragraph{Practical upshot.}
In $j$–stable learning, $G$–invariance lets us (i) aggregate evidence across permuted charts (variance reduction), (ii) constrain estimators to depend only on orbit features (e.g., treated fraction), and (iii) define interventions that \emph{commute} with relabeling. This is especially natural for panel/cluster data and for regime–wise exchangeability.

\paragraph{Connections to do–Finetti.}
The exchangeable $j$–stable framework recovers the core invariance statements of do–Finetti \citep{do-finetti} in the classical setting when $j$ is the trivial topology and $\C$ is a one–object site. Our formulation clarifies how such principles persist when (a) interventions are represented as morphisms on a site, and (b) causal judgments are taken $j$–locally via covers.

\paragraph{Related Work.}
Closest to our setting is the recent ``do–Finetti'' line of work \citep{do-finetti}, which investigates causal identifiability and transport under exchangeability in the classical (Boolean) semantics. Our $j$–stable formulation places exchangeability \emph{internally} to a presheaf topos via a $G$–invariant Lawvere–Tierney topology, so that symmetry is preserved chartwise and commutes with interventions. This connects to de Finetti–type representations and their array analogues (Aldous–Hoover) by interpreting conditional i.i.d.\ structure as $j$–local. It is complementary to invariance-based causal methods such as Invariant Causal Prediction (ICP) and pooled/transport rules, which correspond to particular choices of covers but do not supply an internal logic for interventions on orbits. Our treatment also aligns with recent interest in symmetry/equivariance in causal discovery and learning, but makes the group action explicit at the site/topology level, yielding orbit-wise effect functionals $F(|S|,x)$ in the fully exchangeable case.

\section{Summary}
\label{limitations}

We developed a regime-indexed, sheaf-valued semantics for Pearl's
do-calculus.  The causal work is performed chartwise by ordinary SCM graph
surgery and mechanism replacement.  The sheaf-theoretic work is descent:
when restrictions preserve the relevant causal operations, compatible local
equalities glue uniquely.  This gives a sound local-to-global rule under
explicit hypotheses; it is not a completeness theorem for identification on
arbitrary sites.

The revision also clarifies several boundaries.  Conditioning requires
positive normalizing mass or a supplied disintegration.  Intervention is
mechanism or kernel replacement, not ordinary pushforward.  A trivial
Grothendieck topology yields presheaf semantics but not Boolean logic in
general; the terminal site recovers the ordinary set-based case.  Finally,
the probability-theoretic right Kan construction of conditional expectation
does not by itself provide a dual left-Kan semantics for causal
intervention.  Establishing any such comparison would require additional
indexed causal structure and exactness hypotheses.

\section{Changes in This Revision}
\label{sec:revision-note}

Relative to the first arXiv version, this revision:
\begin{enumerate}
  \item replaces the claim that the trivial topology is the Boolean case by
  separate stagewise and terminal-site conservativity statements;
  \item removes the false claim that d-separation is monotone under enlarging
  the conditioning set;
  \item removes duplicate and incompatible formulations of the three rules
  and retains Pearl's original mutilated-graph premises;
  \item states the restriction, conditioning, intervention, overlap, and
  positivity assumptions required for local equalities to descend;
  \item corrects ``co-Kleisli'' to ``Kleisli'' for the stochastic category of
  a distribution monad;
  \item replaces the claimed automatic probability-monad lift and arbitrary
  TCM cocompletion theorem by the standard sheaf-localization and presheaf
  free-cocompletion statements;
  \item distinguishes observation by restriction and normalization from
  intervention by mechanism replacement; and
  \item withdraws any universal left/right Kan duality between intervention
  and conditioning, while recording the precise right-Kan result for
  conditional expectation from \citet{vanbelle2023kan}.
\end{enumerate}

\section{Acknowledgments}

This research has been funded by Adobe Corporation.

% References
%\bibliography{allcitations,references}
%%\bibliographystyle{abbrvnat}

\end{document}